\documentclass[11pt,a4paper]{amsart}
\usepackage{ulem} 
\usepackage{times,epsfig}
\usepackage{amsfonts,amscd,amsmath,amssymb,amsthm}%

\usepackage{xcolor}
\numberwithin{equation}{section}

\DeclareMathOperator{\tr}{Tr}
\DeclareMathOperator{\Ran}{Ran}

\DeclareMathOperator{\sign}{sign}

\newcommand{\R}{\mathbb{R}}

\newcommand{\Z}{\mathbb{Z}}

\newcommand{\M}{\mathbb{M}}

\newcommand{\1}{\mathbb{I}}

\newcommand{\bS}{\mathbb{S}}


\newcommand{\cK}{{\mathcal K}}
\newcommand{\cL}{{\mathcal L}}
\newcommand{\cM}{{\mathcal M}}

\newcommand{\cP}{{\mathcal P}}

\newcommand{\cR}{{\mathcal R}}

\newcommand{\cU}{{\mathcal U}}

\newcommand{\rd}{\mathrm{d}}
\newcommand{\e}{\mathrm{e}}
\newcommand{\rv}{\mathrm{v}}

\marginparwidth 20mm
\addtolength{\textheight}{30mm}
\addtolength{\textwidth}{20mm}
\addtolength{\topmargin}{-20mm}
{\bf}{\it}
{\bf}{\it}
{\bf}{\it}
{\bf}{\it}
{\bf}{\it}

\newtheorem{theorem}{Theorem}[section]{\bf}{\it}
\newtheorem{proposition}[theorem]{Proposition}{\bf}{\it}
\newtheorem{corollary}[theorem]{Corollary}{\bf}{\it}
\newtheorem{example}[theorem]{Example}{\it}{\rm}
\newtheorem{lemma}[theorem]{Lemma}{\bf}{\it}
\newtheorem{remark}[theorem]{Remark}{\it}{\rm}
\newtheorem{definition}[theorem]{Definition}{\bf}{\it}
{\bf}{\it}
{\bf}{\it}
{\bf}{\it}

\title{Piecewise linear manifolds: \\Einstein metrics and Ricci flows}

\author[R. Schrader]{Robert Schrader}
\address{Robert Schrader\\ Institut f\"{u}r
Theoretische Physik, Freie Universit\"{a}t Berlin, \newline Arnimallee 14,
 D-14195 Berlin, Germany}
\email{robert.schrader@fu-berlin.de}
\date{\today File: \textbf{PLEEv2.tex}}

\dedicatory{Dedicated to Ludwig Faddeev on the occasion of his 80th birthday}

\begin{document}

\begin{abstract}
This article provides an attempt to extend concepts from the theory of Riemannian manifolds to 
piecewise linear spaces. In particular we propose an analogue of the Ricci tensor, 
which we give the name of an {\it Einstein vector field}. 
On a given set of piecewise linear spaces we define and discuss (normalized) Einstein flows. 
Piecewise linear Einstein metrics are defined and examples 
are provided. Criteria for flows to approach Einstein metrics are formulated. Second variations of 
the total scalar curvature at a specific Einstein space are calculated. 
\end{abstract}

\maketitle

\section{Introduction}\label{sec:intro}
As may be less known, piecewise linear (p.l.) spaces share many of the properties of Riemannian manifolds. 
The first to observe this was Regge \cite{Regge}, who gave a definition of the analogue of the total scalar curvature. 
Therefore sometimes one speaks of {\it Regge calculus}, when discussing p.l. spaces. 
In \cite{CMS2} further curvatures
like Lipschitz-Killing curvatures and boundary curvatures were introduced and their relation to the corresponding smooth 
partners established. A consequence was a new proof of the Chern-Gauss-Bonnet theorem. 
The interest in physics arose from the proposal to use Regge calculus as an approach to quantum gravity in analogy to 
lattice gauge theories \cite{CMS1,F,RoWi}. For this the names {\it lattice gravity} or 
{\it simplicial gravity} is often used, for overviews see e.g. \cite{Hamber,ReggeWilliams}. 
Although Regge worked in a context 
which was purely classical, it was Wheeler, who speculated on the possibility of employing Regge calculus as 
a tool for constructing a quantum theory of gravity \cite{Wheeler}. 
More recently attempts have been made to introduce additional curvature notions. In particular 
analogues of the 
Ricci tensor and a Ricci flow 
\cite{AlsingMcDonaldMiller,CGY,ChowLuo,HW,JinKimGu,Luo,MMcDAlGuYau,MACR,WYZGCTY,ZengSamarasGu}, 
a Yamabe flow  \cite{Glickenstein}, as well as an analogue of an 
Einstein space were proposed \cite{CGY}.

The main motivation for this article is to provide new instruments and insights in the theory of p.l. spaces.
We focus on providing analogues of
\begin{itemize}
 \item{the Ricci tensor,}
 \item{a smooth Einstein space,}
 \item{a (normalized) Ricci flow,}
 \end{itemize}
and we study their properties.
Actually two alternative definitions of analogues of the Ricci tensor and of an Einstein space are given. 
As far as we understand these 
definitions differ from the proposals made so far with the exception of one in \cite{CGY} and we 
shall comment on this below.

We will make a great effort to point out analogies between concepts and quantities appearing in the theory
of p.l. spaces and those showing up in Riemannian manifolds, which we often will
call the {\it smooth case}. 

For short, a p.l. space is obtained by gluing {\it euclidean simplexes} together.
Thus given a p.l. space in this form, its data are given by a simplicial complex plus the lengths of its 
edges, which have to satisfy certain conditions extending the triangle inequalities. The collection of the 
(squared) edge lengths will be called a metric. As for the analogue of the Ricci tensor our definition is motivated 
by the well 
known fact that in the smooth case the Ricci tensor is obtained from the variation of the total scalar curvature.
Analogously the metric is recovered from the volume. Thus we define the Ricci vector field as the gradient 
(with respect to the metric) of the total scalar curvature. For an Einstein space by definition the Ricci vector field is 
proportional either to the metric or to the gradient of the volume.
Introducing the notion of the (normalized) Ricci flow is then straightforward.

This article is organized as follows. In Section \ref{sec:intro} we recall the basic notations and notions in 
the theory of piecewise
linear spaces. It starts with the notion of a pseudomanifold, the analogue of a smooth manifold. Then we introduce 
the notion of a 
metric, with the help of which one can define the volume and the total scalar curvature. There we also define the 
Einstein vector field, see Definition \eqref{ricci}, 
and which may also be written in the equivalent form \eqref{totalscalar3}.
Section \ref{sec:allmetrics} provides a characterization of the space of all metrics on a given pseudomanifold, 
collected in Theorem \ref{theo:5}.
In Section \ref{sec:einsmet} we define Einstein metrics, actually there are two possible definitions 
(as already mentioned), 
see Definition \ref{def:einstein}. As in the smooth case (see e.g. \cite{Besse}) there are equivalent 
conditions for a metric to be Einstein, see Theorems \ref{theo:EI} and \ref{theo:EII}.
Examples of Einstein spaces are provided which are the analogues of $n$-spheres and $n$-tori. 
In Section \ref{sec:ricciflows} an Einstein flow and 
two normalized Einstein flows are defined. These two definitions are closely related to the two definitions 
of an Einstein metric. These normalized flows are such that Einstein metrics are fixed points. Moreover under these 
flows the total scalar curvature always decreases away from Einstein metrics, see Theorems \ref{theo:4} and 
\ref{theo:6}. In Section \ref{sec:var} we discuss the behavior of the total scalar curvature near a special 
Einstein space by computing the second variation under the constraint that either the fourth moment of the edge 
lengths or the volume stays fixed. In the first case the second variation is negative definite, 
in the second case it is indefinite and non-degenerate, see Theorems \ref{theo:locmax} and \ref{theo:volvar}.
Section \ref{sec:op} provides a list of open problems.

For the purpose of comparison with the smooth case, in Appendix \ref{app:Einstein} we recall some well known 
facts from Riemannian geometry. In particular we provide an extensive discussion of the behavior 
of many quantities like 
the scalar curvature, the total scalar curvature and the volume under a scaling of the metric. 
In Appendix \ref{app:1} 
the volume and the total curvature of p.l analogues of $n-$spheres are calculated. Appendix \ref{app:2} 
establishes among other things
smoothness properties of the volume and the total scalar curvature as a function of the p.l. metric. 
Appendices \ref{app:3} and \ref{app:Dervol} give the proofs of relations needed for 
Theorems \ref{theo:locmax} and \ref{theo:volvar}.  

Partial results were presented at the conference in honor of L. Faddeev's 80th birthday, 
{\it Mathematical Physics: Past, present and future}, Euler Institute, St. Petersburg, March 2014.

\textbf{Acknowledgements.} The author thanks J. Cheeger, K. Ecker, H.W. Hamber, and W. M\"{u}ller for extensive and 
helpful discussions and for providing relevant references. 
M. Karowski and A. Knauf carried out the computer calculations. 
A. Knauf has done a wonderful job with a careful proof reading of the entire manuscript. Thanks go to two anonymous 
referees for their helpful and critical remarks.

\tableofcontents


\section{Basic concepts and notations}\label{sec:basic}

For the convenience of the reader we recall basic definitions and properties of the objects we 
will be dealing with (see e.g. \cite{Spanier} and \cite{CMS2}).

A {\it finite simplicial complex} $K$ consists of a finite set of elements called {\it vertices}
and a set of finite nonempty subsets of vertices called {\it simplexes} such that
\begin{enumerate}
\item{ Any set containing only one vertex is a simplex.}
\item{Any nonempty subset of a simplex is also a simplex.}
\end{enumerate}
A $j$-simplex will generally be denoted by $\sigma^j$. The {\it dimension} $j$ is the number of its
vertices minus 1. The 1-simplexes are called {\it edges}. 
If $\sigma^\prime\subset \sigma$, 
then $\sigma^\prime$ is called a {\it face}
of $\sigma$ and a {\it proper face} if $\sigma^\prime\neq\sigma$. 
We set $\dim K = \sup_{\sigma\in K}\dim \sigma$ and occasionally we shall write $K^n$ with 
$\dim K=n$, if we want to emphasize the dimension of $K$. 
A complex $L$ is called a
{\it subcomplex} of $K$ if the simplexes of $L$ are also simplexes of $K$. We write $L \subseteq K$. 
The $k$-skeleton $\Sigma^k(K^n)$ of $K^n\; ( 0 \le k < n)$ is the subcomplex formed by the $j$-simplexes 
with $j\le k$. It is not necessarily a pseudomanifold (for the definition, see below).

In order not to burden the notation, we often will also use $\sigma^j$ to denote the simplicial complex 
formed by this $j$-simplex and all its faces. Also we will use $1-$ simplexes as indices for coordinates, 
such as a point $\underline{z}$ in some euclidean space and then $\partial^{\sigma^1}$ stands for
$$
\frac{\partial}{\partial z_{\sigma^1}}.
$$
The {\it Euler characteristic} of $K$ is defined to be
\begin{equation*}
 \chi(K)=\sum_{k}(-1)^k\sharp(\sigma^k).
\end{equation*}
Let $p=\{p_j\mid 1\le j\le q+1\}$ be points in the euclidean space $E^n,\, n>q$ , which lie in no $(q-1)$-
dimensional affine subspace. The convex hull, $\bar{\sigma}(p)$ and its interior $\sigma(p)$ are called
closed and open {\it linear simplexes}, respectively. By regarding $p_j =v_j$ as vectors, we
have
\begin{equation*}
\sigma^q(p)=\big\{\sum_{j}x_jv_j\big\},
\end{equation*}
where $\{x_j\}$ consists of $(q + 1)$-tuples with $x_j > 0$ and
\begin{equation*}
\sum_{j}x_j=1.
\end{equation*}
$\{x_i\}$ are called the  barycentric coordinates of $\sum_{j}x_jv_j$. They are independent of 
the choice of origin
in $E^n$. A map from $\sigma^q(p)$ to $\sigma^q(p')$ which preserves barycentric coordinates is called
linear.

If $e_1,\cdots,e_n$ are the standard basis vectors in $E^n$, their convex hull is called the
{\it standard} (closed) simplex $\sigma(n)$. To any finite simplicial complex with $n$ (ordered)
vertices, we associate a closed subset $^sK$ of $\sigma(n)$, called the {\it geometric realization}
of $K$. Namely, to each simplex $\sigma^i\in K$ with vertices $\sigma^0_{j_1},\cdots, 
\sigma^0_{j_{i+1}}$, we associate the
open linear simplex determined by $e_{j_1},\cdots,e_{j_{i+1}}$. The union of these linear simplexes
is $^sK$. There is a natural metric space structure, the {\it standard metric} on $^sK$, where the
distance between two points $p, q$ is defined as the infimum of the length of all 
piece-wise smooth paths between $p$ and $q$. More generally, we consider metrics on $^sK$
such that any simplex $^s\sigma\subset {^sK}$ with its induced metric is linearly isometric to some
linear simplex. In what follows we shall use $K$ and $^sK$ interchangeably.

The space $K$, equipped with a metric of the above type is called a
{\it triangulated piecewise flat space} (or p.l. space). 

Clearly, any such space is
determined up to isometry by the edge lengths $l_{\sigma^1}$, the distances between the vertices of
1-simplexes $\sigma^1$. In Section \ref{sec:allmetrics} where we discuss the set of all metrics, 
we will see that it is more appropriate to consider the squares of the edge lengths. Moreover, 
there is a closer analogy with a Riemannian 
metric $g$ than there would be with the set of lengths. 
That is we will work with
\begin{equation*}
 z_{\sigma^1}=l_{\sigma^1}^2
\end{equation*}
and we write $(K,\underline{z})$ for a p.l. space to emphasize the dependence on
the collection of the squares of the edge lengths $\underline{z}=\{z_{\sigma^1}\}_{\sigma^1\in K}=
\{l_{\sigma^1}^2\}_{\sigma^1\in K^n}$. Also we shall say that 
$K$ carries the {\it p.l. metric} $\underline{z}$. Here and in what follows, we assume that the $1-$simplexes 
of $k$ are ordered in some way, such that we can view $\underline{z}$ as an element in $\R^{n_1(k)}_+$ and therefore 
also of $\R^{n_1(k)}$. All results will be independent of the particular choice of the ordering. Sometimes we will also 
choose another ordering, when we consider the $1-$simplexes contained in a given $k$-simplex.

For given $\underline{z}$ we denote by 
$|\sigma^j|=|\sigma^j|(\underline{z}),\;j\ge 1,$
the euclidean $j$-volume of the euclidean $j$-simplex to which $\sigma^j$  is linearly
isometric by assumption. In particular $|\sigma^1|=l_{\sigma^1}=\sqrt{z_{\sigma^1}}$. 
For a vertex we set $|\sigma^0|=1$. Below we shall recall a more explicit expression of $|\sigma^j|$
in terms of the $z_{\sigma^1}$ with $\sigma^1\subseteq \sigma^j$, see \eqref{amatrixvol}.
The scaling law
\begin{equation}\label{scaling1}
 |\sigma^j|(\lambda \underline{z})=\lambda^{j/2}|\sigma^j|(\underline{z}),\qquad \lambda >0
\end{equation}
is obvious.

A {\it subdivision} of a p.l. space $(K,\underline{z})$ is a p.l. 
space $(K^\prime,\underline{z}^\prime)$ 
and a homeomorphism
$$
s : (K^\prime,\underline{z}^\prime)\quad\rightarrow\quad(K,\underline{z})
$$
with the following properties
\begin{itemize}
\item{For every simplex $\sigma^\prime$ in $K^\prime$, 
its image $s(\sigma^\prime)$ is contained in some simplex $\sigma$
of $K$, and $s|\sigma^\prime$ is linear.}
\item{The metric $\underline{z}^\prime$ on $(K^\prime,\underline{z}^\prime)$ is the pullback of the metric on 
$(K,\underline{z})$.}
\end{itemize}
Let $\hat{s}(\sigma^\prime)\in K$ denote the smallest simplex in 
which $s(\sigma^\prime)$ is contained. Obviously, if 
$\sigma^\prime$ is a $k^\prime$ simplex, then 
$\hat{\sigma}^\prime$ is $k$ simplex with $k\ge k^\prime$.
We shall write $\sigma^\prime\preceq \sigma$ if 
$\sigma=\hat{s}(\sigma^\prime)$ and 
$\sigma^\prime\npreceq\sigma$ otherwise.

We shall almost exclusively consider special simplicial complexes, which are given as follows.\\
An $n-$dimensional {\it pseudomanifold} is a finite simplicial complex $K^n$ such that 
\begin{enumerate}
 \item{Every simplex is a face of some $n$-simplex.}
 \item{Every $(n-1)$-simplex is the face of at most two $n$-simplexes.}
 \item{If $\sigma$ and $\sigma^\prime$ are $n$-simplexes of $K^n$, there is a finite sequence 
$\sigma=\sigma_1,
\cdots, \sigma_m=\sigma^\prime$ of $n$-simplexes of $K$, such that $\sigma_i$ 
and $\sigma_{i+1}$ have an $(n-1)$-simplex in common.}
\end{enumerate}
Unless otherwise stated, the dimension $n$ will always be taken to be $\ge 3$.
The (possibly empty) boundary $\partial K^n$ of $K^n$ is the subcomplex formed by 
the $(n-1)$-simplexes, which lie in exactly one $n$-simplex, and their faces. 
The third condition guarantees that $^sK^n$ is connected. $\partial K^n$ is not necessarily a 
pseudomanifold.

As an example, $\sigma^n$ is an $n$-dimensional pseudomanifold and its boundary $\partial \sigma^n
=\Sigma^{n-1}(\sigma^n)$ is an $(n-1)$-dimensional pseudomanifold without boundary. 

A pseudomanifold $K^n$ is called {\it orientable} if and only if 
$H_{n}(K^n,\partial K^n)\simeq\Z$ and \newline $H_{n-1}(K^n,\partial K^n)$ has no torsion. 
An {\it orientation} is a choice of a generator of $H_{n}(K^n,\partial K^n)$.
The {\it volume} of the p.l. space  $(K^n,\underline{z})$ is defined to be
\begin{equation*}
 V(K^n,\underline{z})=\sum_{\sigma^n\in K^n}|\sigma^n|(\underline{z})>0.
\end{equation*}
The scaling law 
\begin{equation}\label{volumescaling}
 V(K^n,\lambda\underline{z})=\lambda^{n/2} V(K^n,\underline{z})
\end{equation}
is clear. It compares with the scaling law for the volume in Riemannian geometry, see \eqref{g:13}.
A {\it smooth triangulation} of an $n$-dimensional smooth manifold $M$ is a pair $(K,\phi)$, 
where $K$ a simplicial complex and $\phi$ a 
homeomorphism from $^sK$ onto $M$ such that its restriction $\phi|\bar{\sigma}$ to any 
closed simplex $\bar{\sigma}\subset ^sK$ is smooth.
A well known theorem says that any compact smooth and connected manifold $M$ has a
smooth triangulation with finite $K$, which actually is a pseudomanifold (see e.g. \cite{Mu}).

For $\sigma^{n-2}\subset\sigma^n\in K$ let the unique 
$\sigma^{n-1}_1, \sigma^{n-1}_2\subset \sigma^n$ 
be such that $\sigma^{n-2}=\sigma^{n-1}_1\cap\sigma^{n-1}_2$. In their realization as 
euclidean simplexes in $E^n$, let $n_1$ and $n_2$ be unit vectors, 
normal to $\sigma^{n-1}_1$ and $\sigma^{n-1}_2$ respectively and pointing outwards. 
Then the {\it dihedral angle} $0<(\sigma^{n-2},\sigma^n)<1/2$ (in units of $2\pi$) is defined as 
\begin{equation*}
(\sigma^{n-2},\sigma^n)=\frac{1}{2}-
\frac{1}{2\pi}\arccos\langle n_1, n_2\rangle.
\end{equation*} 
$\langle\cdot,\cdot\rangle$ denotes the euclidean scalar product.
The two limiting (and degenerate) cases are $n_1=-n_2$, for which the dihedral angle vanishes, 
and $n_1=n_2$, 
for which the dihedral angle equals $1/2$.
In Appendix \ref{app:2} we shall provide another description of the dihedral angle.

The following scale invariance is obvious
\begin{equation}\label{scaling2}
 (\sigma^{n-2},\sigma^n)(\lambda \underline{z})=
(\sigma^{n-2},\sigma^n)(\underline{z}),\qquad \lambda>0.
\end{equation}
To a given p.l. space $(K^n,\underline{z})$, with $K^n$ being an $n$-dimensional pseudomanifold, 
we associate its {\it total scalar curvature}
\begin{equation}\label{totalscalar1}
 \cR(K^n,\underline{z})=\sum_{\sigma^{n-2}}\cR_{\sigma^{n-2}}(K^n,\underline{z})=\sum_{\sigma^{n-2}}
\left(1-\sum_{\sigma^n\,:\,\sigma^{n}\supset \sigma^{n-2}}(\sigma^{n-2},\sigma^n)\right)
|\sigma^{n-2}|(\underline{z})
\end{equation}
and the {\it average scalar curvature}
\begin{equation}\label{avtotalscalar}
 \overline{\cR}(K^n,\underline{z})=\frac{\cR(K^n,\underline{z})}{V(K^n,\underline{z})}.
\end{equation}

The expression in braces in \eqref{totalscalar1} is called the {\it deficit angle at $\sigma^{n-2}$} and will be 
written as $\delta(\sigma^{n-2})=\delta(\sigma^{n-2})(K^n,\underline{z})$.
When $K$ is not a pseudomanifold, the definition is slightly different, 
see \cite{CMS2}, where also 
p.l. versions of {\it Lipschitz-Killing curvatures} are given.
The total scalar curvature does not change under a subdivision (and the same is valid for the volume), that is
\begin{equation}\label{totalscalar10}
 \cR(K^{n\,\prime},\underline{z}^\prime)=\cR(K^n,\underline{z})
\end{equation}
holds whenever $(K^{n\,\prime},\underline{z}^\prime)$ is a subdivision of $(K^n,\underline{z})$.
 For further use let us briefly see how this comes about. 
 First the additivity of volumes gives
\begin{equation*}
 \sum_{\sigma^{n-2\; \prime}\;:\;\sigma^{n-2\; \prime}\;\preceq\; \sigma^{n-2}}|\sigma^{n-2\; \prime}|=|\sigma^{n-2}|
\end{equation*}
for all $\sigma^{n-2}$. Also for any pair $\sigma^{n-2}\subset \sigma^n$ the following relation 
holds between deficit angles
\begin{equation*}
\delta(\sigma^{n-2\;\prime})=\delta(\sigma^{n-2})
\end{equation*}
for all $\sigma^{n-2\;\prime}\;\preceq\;\sigma^{n-2}$.
These two relations prove that 
\begin{equation*}
 \cR_{\sigma^{n-2}}(K,\underline{z})=
 \sum_{\sigma^{n-2\;\prime}\;:\;\sigma^{n-2\;\prime}\preceq\sigma^{n-2}}
 \cR_{\sigma^{n-2\;\prime}}(K^\prime,\underline{z}^\prime)
\end{equation*}
holds for all $\sigma^{n-2}$. 
Set 
\begin{equation*}
\Theta^{n-2}=
\{\sigma^{n-2\;\prime}\;|
\dim \sigma^{n-2\;\prime}<\dim \hat{s}(\sigma^{n-2\;\prime})\}.
\end{equation*}
Then 
\begin{equation*}
\delta(\sigma^{n-2\;\prime})=0,\qquad
 \sigma^{n-2\;\prime}\in\Theta^{n-2},
\end{equation*}
 in other words 
$(K^\prime,\underline{z}^\prime)$ is flat around 
$\sigma^{n-2\;\prime}\in\Theta^{n-2}$.
Therefore
$$
\cR_{\sigma^{n-2\;\prime}}(K^\prime,\underline{z}^\prime)
=0,\qquad \sigma^{n-2\;\prime}\in\Theta^{n-2}.
$$
by \eqref{totalscalar1}. This establishes \eqref{totalscalar10}.

\eqref{scaling1} and \eqref{scaling2} give
\begin{equation}\label{scaling3}
 \cR(K^n,\lambda \underline{z})=\lambda^{(n-2)/2}\cR(K^n,\underline{z}),
\end{equation}
which compares with the scaling behavior of the total scalar curvature in Riemannian geometry, see again \eqref{g:13}.
We call the gradient of the total scalar curvature
\begin{equation}\label{ricci}
 \underline{Ein}(K^n,\underline{z})=\Big\{Ein_{\sigma^1}(K^n,\underline{z})\Big\}_{\sigma^1\in K^n}
 =\underline{\nabla} \cR(K^n,\underline{z})=
\Big\{\frac{\partial}{\partial z_{\sigma^1}}\cR(K^n,\underline{z})\Big\}_{\sigma^1\in K^n}
\end{equation}
the {\it Einstein vector field}. By definition $(K^n,\underline{z})$ is { \it Einstein flat at $\sigma^1$} if 
$Ein_{\sigma^1}(K^n,\underline{z})=0$ and {\it Einstein flat if $\underline{Ein}(K^n,\underline{z})=0$}.  
\begin{remark}\label{re:ric}
Here is the time to point out an important difference between curvature concepts of p.l.\ geometry and those in 
Riemannian geometry.
Despite many efforts, so far no tensor calculus has been formulated. In particular no analogues of the metric tensor, of the curvature, 
or the Riemannian curvature tensor, 
or the Ricci tensor - all pointwise defined quantities on the underlying manifold- have been found. So the main analogies 
may be found between globally defined objects, like the volume or the total scalar curvature. 
Now the Ricci tensor or rather the Einstein tensor shows up in the variation of 
the total curvature, see \eqref{g:16}, so by comparison with \eqref{ricci}, this is the closest we can get to the 
Ricci tensor by analogy in the theory of p.l.\ spaces, thus our choice of notation.
\end{remark}
Since we will make intensive use of Euler's relation, we briefly recall it within the present context. 
Also Appendix \ref{app:Einstein} provides the corresponding formulation in Riemannian geometry. 
Let 
$$
\langle\underline{z},\underline{z}^\prime\rangle=\sum_{\sigma^1}z_{\sigma^1}z^\prime_{\sigma^1}
$$
denote the euclidean scalar product and $||\underline{z}||^2=\langle\underline{z},\underline{z}\rangle$.
Observe that
$$||\underline{z}||^2=
\sum_{\sigma^1\in K^n}l_{\sigma^1}^4
$$
is the fourth moment of the edge lengths.

By definition any (smooth) function $f(\underline{z})$ is homogeneous of order $m$ if 
$f(\lambda\underline{z})=\lambda^m f(\underline{z})$ is valid for all metrics $\underline{z}$.
\begin{lemma}\label{lem:euler} {\rm (Euler's Relation)}
If $f(\underline{z})$ is homogeneous of order $m$ and differentiable, then
\begin{equation*}
 \langle\underline{z},\underline{\nabla}f(\underline{z})\rangle =mf(\underline{z}).
\end{equation*}
holds. In particular if $f$ is of homogeneous of order $m \neq 0$ and if $\underline{z}_{crit}$ is a critical point of $f$ - 
such that actually 
all points $\lambda \underline{z}_{crit}$ are critical - then $f(\underline{z}_{crit})=0$.
\end{lemma}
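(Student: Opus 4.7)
The proof is the standard derivation of Euler's homogeneous function theorem, adapted to this combinatorial setting. The plan is to differentiate the defining identity $f(\lambda \underline{z}) = \lambda^m f(\underline{z})$ in the scalar parameter $\lambda$ with $\underline{z}$ held fixed, and then evaluate at $\lambda = 1$.

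By the chain rule, $\frac{d}{d\lambda} f(\lambda \underline{z}) = \sum_{\sigma^1} z_{\sigma^1} \frac{\partial f}{\partial z_{\sigma^1}}(\lambda \underline{z}) = \langle \underline{z}, \underline{\nabla} f(\lambda \underline{z})\rangle$, while the right-hand side differentiates to $m \lambda^{m-1} f(\underline{z})$. Setting $\lambda = 1$ gives the claimed identity. For the second assertion, if $\underline{\nabla} f(\underline{z}_{crit}) = 0$ then the left side of Euler's identity vanishes at $\underline{z}_{crit}$, yielding $m f(\underline{z}_{crit}) = 0$; since $m \neq 0$ by hypothesis, we conclude $f(\underline{z}_{crit}) = 0$. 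The parenthetical remark that every $\lambda \underline{z}_{crit}$ is also critical is consistent with homogeneity: differentiating the identity $f(\lambda \underline{z}) = \lambda^m f(\underline{z})$ in $z_{\sigma^1}$ instead of in $\lambda$ gives the transformation rule $\underline{\nabla} f(\lambda \underline{z}) = \lambda^{m-1}\underline{\nabla} f(\underline{z})$, so the critical set is invariant under positive rescalings.

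The only point requiring a small comment is that the scaling $\lambda \underline{z}$ must remain inside the domain of definition, i.e., in the open cone $\R^{n_1(K^n)}_+$ intersected with the admissible locus cut out by the generalized triangle inequalities, so that $\lambda \mapsto f(\lambda \underline{z})$ is differentiable in a neighborhood of $\lambda = 1$. This is automatic, since positive scalings manifestly preserve the triangle-type inequalities, and the differentiability of $f$ was assumed. There is really no serious obstacle here; the lemma is precisely the classical Euler formula transcribed to the combinatorial variables $\{z_{\sigma^1}\}$, and the main reason to record it is its later use together with the scaling laws \eqref{volumescaling} and \eqref{scaling3} for $V$ and $\cR$.
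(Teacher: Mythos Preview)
Your proof is correct and is the standard derivation of Euler's homogeneous function theorem. The paper itself does not supply a proof of this lemma, treating it as a classical result; your argument is exactly the one that would be expected.
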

As will be seen below, this lemma turns out to be a surprisingly efficient tool for the present context .
A consequence of \eqref{scaling3} is
\begin{equation}\label{scaling4}
\frac{n-2}{2}\cR(K^n,\underline{z})=\langle\underline{z},\underline{Ein}(K^n,\underline{z})\rangle
\end{equation}
is valid for all metrics $\underline{z}$. Here we have used the differentiability w.r.t. 
$\underline{z}$. This property will become clear from the discussion to be given below.
 Here and in what follows, we view $\underline{z}$ as the tautological vector field.
Observe that $\underline{z}$, like $\underline{Ein}(K,\underline{z})$, is a gradient due to 
\begin{equation*}
\underline{z}=\underline{\nabla}\frac{1}{2}||\underline{z}||^2.
\end{equation*}
A more explicit expression for the Einstein vector field is
\begin{equation}\label{totalscalar3}
Ein_{\sigma^1}(K^n,\underline{z})=
\sum_{\sigma^{n-2}}
\left(1-\sum_{\sigma^{n}\supset\sigma^{n-2}}(\sigma^{n-2},\sigma^n)\right)
\partial^{\sigma^1}|\sigma^{n-2}|.
\end{equation}

The proof is obtained by using the Leibniz rule and a remarkable  
formula of Regge \cite{Regge}, by which 
\begin{equation}\label{Reggevar}
\sum_{\sigma^{n-2}\,:\,\sigma^{n-2}\subset \sigma^n }
(\sigma^{n-2},\sigma^n)^\prime|\sigma^{n-2}|=0\quad \mbox{for all}\quad 
\sigma^{n}
\end{equation}
holds for any variation of $\underline{z}$, and 
where now $^\prime$ denotes the derivative with respect to the variation. For another proof 
see also \cite{CMS2}. 
\begin{remark}\label{re:history}
Due to the importance of \eqref{Reggevar} for our central relation \eqref{totalscalar3} a historic remark at this place might be 
appropriate. \eqref{Reggevar} is often mentioned in connection with {\rm Schl\"afli's formula}, which is a variation formula for 
Euclidean and non-Euclidean volumes. In 1858 Schl\"afli provided such a relation for the volume of spherical simplexes 
\cite{Schlaefli}. In 1907 {\rm Sforza} extended this to the case of simplexes in {\rm Lobachevsky space} \cite{Sforza}. 
Modern proofs of these results may be found in \cite{Boehm,BoehmHertel,Kneser}. The extension to polyhedra is easy.
It was {\rm Milnor} who provided a unified formula,
which includes Euclidean polyhedra as well and which reads as follows \cite{Milnor}
\begin{equation*}*\label{milnor}
 K|P^n|^\prime=\frac{1}{n-1}\sum _{P^{n-2}\subset \partial P^n}(P^{n-2},P^{n})^\prime|P^{n-2}|.
\end{equation*}
The notation is the following. $P^n$ is a polyhedron in $M^n$, that is a finite intersection of half spaces and which is compact. 
$M^n$ itself is a space of constant sectional 
curvature $K$. $P^{n-2}$ is an $(n-2)-$ dimensional face of $P^n$. $(P^{n-2},P^{n})$ is the dihedral angle in analogy to 
$(\sigma^{n-2},\sigma^n)$ and $|P^n|$ and $|P^{n-2}|$ are their $n$- and $(n-2)$-dimensional volumes 
respectively. 
As is visible from \eqref{milnor}, {\rm Milnor} put particular emphasis on the transition between Euclidean and non-Euclidean cases 
for $K$ near zero. Observe that the cases of arbitrary $K\neq 0$ can be obtained from the cases $K=\pm 1$ by appropriate scaling.
For simplexes and the choice $K=0$ \eqref{milnor} is just Regge's relation \eqref{Reggevar}.
\end{remark}
The scaling behavior 
\begin{equation}\label{Ricscale}
 \underline{Ein}(K^n,\lambda\underline{z})=\lambda^{(n-4)/2}\underline{Ein}(K^n,\underline{z})
\end{equation}
is obvious. The scaling relations \eqref{volumescaling} and \eqref{scaling3} fit 
with the corresponding scaling 
relations \eqref{g:13} in the smooth case.

It is tempting to call 
\begin{equation}\label{totalscalar4}
 Sec_{\sigma^{n-2}}=Sec_{\sigma^{n-2}}(K^n,\underline{z})=
\left(1-\sum_{\sigma^{n}\supset\sigma^{n-2}}(\sigma^{n-2},\sigma^n)\right)
\end{equation}
the {\it sectional curvature} at the 2-plane orthogonal to $\sigma^{n-2}$. 
Note, however, that it is scale invariant in contrast to the 
sectional curvature in Riemannian geometry.

From \eqref{scaling4} we immediately obtain the following result.
We say that $\underline{v}=\{v_{\sigma^1}\}_{\sigma^1\in K},v_{\sigma^1}\in \R $ is 
non-negative or non-positive, if $v_{\sigma^1}$ is non-negative or non-positive for all $\sigma^1$.
$\underline{v}$ is strictly positive or strictly negative, if every component $v_{\sigma^1}$ is 
positive or negative respectively. Any metric $\underline{z}$ is strictly positive.
\eqref{scaling4} then directly gives
\begin{proposition}\label{prop:0}
If $\underline{Ein}(K^n,\underline{z})\;(n\ge 3)$ is non-negative or non-positive, then 
the total scalar  curvature is also non-negative or non-positive respectively.
If $\underline{Ein}(K^n,\underline{z})$ is strictly positive or strictly negative, 
then the total scalar curvature is also positive or negative respectively.
\end{proposition}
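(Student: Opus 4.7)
The plan is to derive this directly from Euler's relation \eqref{scaling4}, which is already stated as the main identity:
\begin{equation*}
\frac{n-2}{2}\,\cR(K^n,\underline{z})\;=\;\langle\underline{z},\underline{Ein}(K^n,\underline{z})\rangle
\;=\;\sum_{\sigma^1\in K^n} z_{\sigma^1}\,Ein_{\sigma^1}(K^n,\underline{z}).
\end{equation*}
Since $n\ge 3$, the prefactor $(n-2)/2$ is strictly positive, so the sign of $\cR$ agrees with that of $\langle\underline{z},\underline{Ein}\rangle$. Hence the whole task reduces to reading off the sign of this inner product.

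First I would recall from the definition of a p.l.\ metric that every component $z_{\sigma^1}=l_{\sigma^1}^2$ is strictly positive. Therefore each summand $z_{\sigma^1}Ein_{\sigma^1}$ has the same sign as $Ein_{\sigma^1}$ (or is zero). If $\underline{Ein}$ is non-negative componentwise, each summand is $\ge 0$, so the sum is $\ge 0$, giving $\cR\ge 0$; the non-positive case is the same with inequalities reversed.

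For the strict statements, observe that if $Ein_{\sigma^1}>0$ for every $\sigma^1$, then every summand $z_{\sigma^1}Ein_{\sigma^1}$ is strictly positive (product of two strictly positive numbers), and since $K^n$ is a non-empty pseudomanifold the sum contains at least one term, so $\langle\underline{z},\underline{Ein}\rangle>0$ and consequently $\cR>0$; symmetrically for the strictly negative case.

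There is essentially no obstacle: the whole content of the proposition is packaged in \eqref{scaling4}, which was itself derived from the scaling law \eqref{scaling3} via Euler's relation (Lemma \ref{lem:euler}). The only small point to keep in mind is the hypothesis $n\ge 3$, which ensures the factor $(n-2)/2$ does not vanish and preserves the sign when one divides through; in dimension $n=2$ the identity would become trivial and the conclusion would fail to be extractable by this argument.
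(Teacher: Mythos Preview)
Your proof is correct and follows exactly the paper's approach: the paper simply states that the proposition follows directly from \eqref{scaling4}, and you have spelled out the (easy) details of why positivity of the components $z_{\sigma^1}$ and of the factor $(n-2)/2$ transfer the sign of $\underline{Ein}$ to $\cR$.
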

Observe that both the sectional curvature \eqref{totalscalar4} and 
$$
\partial^{\sigma^1}|\sigma^{n-2}|,\qquad 
\partial^{\sigma^1}|\sigma^n|
$$
may become positive or negative.
There is another vector field, which is also a gradient field, namely the gradient of the volume
\begin{equation*}
\underline{v}(K^n,\underline{z})=\underline{v}(\underline{z})=\left\{v_{\sigma^1}(\underline{z})\right\}_{\sigma^1\in K^n}
=\underline{\nabla}V(K^n,\underline{z})
\end{equation*}
with the scaling behavior
\begin{equation}\label{volgrad2}
 \underline{v}(K^n,\lambda\underline{z})=\lambda^{(n-2)/2}\underline{v}(K^n,\underline{z}),\qquad \lambda >0.
\end{equation}
By \eqref{volumescaling} and Euler's relation
\begin{equation}\label{volgrad3}
\langle\underline{z},\underline{v}(K^n,\underline{z})\rangle =\frac{n}{2}V(K^n,\underline{z}) 
\end{equation}
holds.
\begin{remark}\label{remark:2}
Instead of using $\underline{z}$, the set of squares of the edge lengths, one could as well use the set 
$$
\underline{l}=\left\{l_{\sigma^1}\right\}_{\sigma^1\in \cK}
$$
of edge lengths themselves to parametrize a euclidean metric. All definitions easily carry over. Thus one 
might consider the gradient 
$\underline{\nabla}\,_{\underline{l}}\cR(K^n,\underline{l})$ 
of the total scalar curvature w.r.t. $\underline{l}$. For $n=3$ the sectional curvature \eqref{totalscalar4} and the 
Einstein vector field agree. The definitions of 
Einstein metrics and (normalized) Einstein flows to be given below, can also be adapted to this choice of 
parametrization.
As we shall observe below, see Remark \ref{remark:3}, the definitions for Einstein metric are not equivalent.
In Section \ref{sec:allmetrics} we shall argue, that it is more appropriate to use $\underline{z}$ to describe 
the space of all 
metrics on a given simplicial complex $K$.
\end{remark}


\section{The space of all metrics}\label{sec:allmetrics}

In this section we will establish some properties of the set of all metrics on a given 
finite $n$-dimensional pseudomanifold $K^n$. 
In particular we will show, as announced, that the squares 
$z_{\sigma^1}=l^2_{\sigma^1}$ of the edge lengths are better suited to parametrize the set of all metrics
and we will use the notation $\partial^{\sigma^1}$ for the partial derivative w.r.t.\ the variable $z_{\sigma^1}$.
Let $n_1(K^n)$ denote the number of $1$-simplexes in $K^n$. With this convention the set $\cM(K^n)$ of 
all metrics on $K^n$ can be viewed as a subset of $\R_+^{n_1(K^n)}$. This set is non-empty, the choice 
where all $l^2_{\sigma^1}$ are equal serves as an example. As a matter of fact $^sK^n$ itself carries 
the metric, for which $l^2_{\sigma^1}=2$ for all $\sigma^1$.
The relation $n_1(\Sigma^k(K^n))=n_1(K^n)$ for all $1\le k \le n$ is obvious, as is 
$\Sigma^j(\Sigma^k(K^n))=\Sigma^j(K^n)$ for $1\le j\le k \le n$. By definition of $\cM(K^n)$, the set
$\cM(K^n)$ can be viewed as a subset of $\cM(\Sigma^k(K^n))$ for every $k$. It is easy to verify that 
it always is a proper subset for $k<n$, that is $\cM(\Sigma^k(K^n))\neq \cM(K^n)$.

Thus we have the chain
\begin{equation*}
\cM(K^n)\subset\cM(\Sigma^{n-1}(K^n))\subset \cM(\Sigma^{n-2}(K^n))\subset\cdots\subset\cM(\Sigma^{1}(K^n))
=\R_+^{n_1(K^n)}.
\end{equation*}
The main result of this section is the 
\begin{theorem}\label{theo:5}
 $\cM(K^n)$ is an open convex cone in $\R_+^{n_1(K^n)}$. In particular $\cM(K^n)$ is connected.
\end{theorem}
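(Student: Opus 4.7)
The plan is to identify $\cM(K^n)$ with the set of $\underline{z}\in\R^{n_1(K^n)}$ for which a finite collection of symmetric matrices is positive definite, and then to invoke standard facts about positive-definiteness. The key tool is the Cayley--Menger / Gram matrix realizability criterion: for an abstract $n$-simplex $\sigma^n$ with vertices $v_0,\ldots,v_n$, an assignment of squared edge lengths $z_{v_av_b}$ comes from a non-degenerate Euclidean $n$-simplex if and only if the $n\times n$ symmetric matrix
\begin{equation*}
G(\sigma^n,\underline{z})_{ab}\;=\;\tfrac12\bigl(z_{v_0v_a}+z_{v_0v_b}-z_{v_av_b}\bigr),\qquad 1\le a,b\le n,
\end{equation*}
is positive definite. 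This is classical: $G(\sigma^n,\underline{z})$ is the Gram matrix of the vectors $v_a-v_0$ in any Euclidean realization, and positive-definiteness is equivalent to their linear independence. The criterion is independent of the choice of base vertex.

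Since $K^n$ is a pseudomanifold, every simplex of $K^n$ is a face of some $n$-simplex, and the Gram matrix of a face is a principal submatrix of the Gram matrix of any containing $n$-simplex. Realizability of all simplices of $K^n$ is therefore equivalent to realizability of its $n$-simplices, so
\begin{equation*}
\cM(K^n)\;=\;\bigcap_{\sigma^n\in K^n}\bigl\{\underline{z}\in\R^{n_1(K^n)} : G(\sigma^n,\underline{z})\text{ is positive definite}\bigr\}.
\end{equation*}
Positive-definiteness of $G(\sigma^n,\underline{z})$ already forces the diagonal entries $z_{v_0v_a}=G_{aa}$ to be positive; varying the base vertex gives $z_{\sigma^1}>0$ for every edge of $\sigma^n$, so the inclusion into $\R_+^{n_1(K^n)}$ is automatic.

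The conclusion is then pure linear algebra: the map $\underline{z}\mapsto G(\sigma^n,\underline{z})$ is linear, and the set of positive-definite symmetric matrices is a standard open convex cone in the space of symmetric matrices. Preimages of open convex cones under linear maps are open convex cones, and finite intersections of open convex cones are again open convex cones, which yields the main assertion. Connectedness follows immediately from convexity, since any two points in a convex set are joined by a straight-line segment. The only items requiring any care are the verification of the Cayley--Menger characterization and the reduction to $n$-simplices via principal submatrices; there is no genuine analytic hurdle.
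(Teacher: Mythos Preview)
Your proof is correct and follows essentially the same route as the paper: both introduce the Gram matrix $G(\sigma,\underline z)$ (denoted $A(\underline z)$ in the paper), use that it depends linearly on $\underline z$, and deduce openness and convexity from the fact that positive-definite matrices form an open convex cone. Your reduction to checking only the top-dimensional simplices via principal submatrices is a mild streamlining of the paper's argument, which instead checks $\det A(\underline z(\sigma^k))>0$ for all $\sigma^k$ by appeal to Sylvester's criterion, but the core idea is identical.
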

We note another analogy with the smooth case. Indeed, the set of all Riemannian metrics on a manifold forms a convex cone in 
the set of all second order tensor fields.
\begin{proof}
First consider a euclidean $k$-simplex $\sigma^k$ in $E^k$ and label its vertices in an arbitrary order as 
$0,1,\cdots, k$. Assume the vertex $0$ is placed at the origin. Again we regard the other vertices as being
represented by the (linearly independent) vectors $v_i, 1\le i\le k$.
Then the length $l_{ij}=l_{ji}$ of the edge connecting the two different vertices $i$ and $j$ 
is given in the form 
\begin{align*}
z_{0j}&=l_{0j}^2=\langle v_j,v_j\rangle,\qquad\qquad \qquad 1\le j\le k,\\\nonumber
z_{ij}&=l_{ij}^2=\langle v_i- v_j,v_i-v_j\rangle,\qquad 1\le i,j\le k.
\end{align*}
We make the convention $z_{ii}=l_{ii}^2=0$.
As a consequence the $k\times k$ real, symmetric matrix $A=A(\underline{z})$ , 
$\underline{z}=\{z_{ij}\}_{0\le i,j\le k}$, with entries
\begin{equation}\label{amatrix}
 a_{ij}=a_{ji}=\langle v_i,v_j\rangle=\frac{1}{2}(z_{0i}+z_{0j}-z_{ij}),\qquad 1\le i,j\le k
\end{equation}
is positive definite. The volume of the euclidean $k$-simplex is then obtained as
\begin{equation}\label{amatrixvol}
|\sigma^k|=|\sigma^k|(\underline{z})=\frac{1}{k!}\det A^{1/2}=
\frac{1}{k!}
\left(\langle v_1\wedge v_2\wedge \cdots v_k,
v_1\wedge v_2\wedge \cdots v_k\rangle\right)^{1/2}.
\end{equation}
For the particular case $k=2$ this relation gives the area of a triangle in terms 
of its edge lengths (squared), originally attributed to Heron of Alexandria. 
The following lemma is trivial.
\begin{lemma}
The following estimate is valid for any pair $\sigma^l\subset\sigma^k$ in $K^n$. 
There are universal constants 
$c_n$, such that 
\begin{equation*}
 |\sigma^k(\underline{z})|\le c_n ||\underline{z}||^{(k-l)/2}|\sigma^l(\underline{z})|
\end{equation*}
holds for any metric $\underline{z}$.
\end{lemma}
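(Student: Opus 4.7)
The plan is to exploit the classical pyramid/cone formula for the volume of a simplex in terms of a base and an apex, and to chain it up along a flag of simplexes between $\sigma^l$ and $\sigma^k$.

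First, I would observe that for any $1$-simplex $\sigma^1 \in K^n$ we have the elementary bound
\begin{equation*}
z_{\sigma^1} \;\le\; \Bigl(\sum_{\tau^1 \in K^n} z_{\tau^1}^{\,2}\Bigr)^{1/2} \;=\; \|\underline{z}\|,
\end{equation*}
so that every edge length satisfies $l_{\sigma^1} \le \|\underline{z}\|^{1/2}$. This is the only way $\|\underline{z}\|$ will enter the estimate.

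Next, given the pair $\sigma^l \subset \sigma^k$, let $v_0,\dots,v_l$ be the vertices of $\sigma^l$ and extend to $v_0,\dots,v_k$, the vertices of $\sigma^k$. By axiom~(2) of a simplicial complex, every subset of $\sigma^k$ is itself a simplex of $K^n$, so the flag
\begin{equation*}
\sigma^l = \sigma^l_{(l)} \subset \sigma^l_{(l+1)} \subset \cdots \subset \sigma^l_{(k)} = \sigma^k,
\qquad \sigma^l_{(j)} := \{v_0,\dots,v_j\},
\end{equation*}
lies entirely in $K^n$. Each $\sigma^l_{(j+1)}$ is the $(j+1)$-simplex obtained from the $j$-simplex $\sigma^l_{(j)}$ by adjoining the apex $v_{j+1}$, and the elementary pyramid formula gives
\begin{equation*}
|\sigma^l_{(j+1)}| \;=\; \frac{h_{j+1}}{j+1}\,|\sigma^l_{(j)}|,
\end{equation*}
where $h_{j+1}$ is the perpendicular distance from $v_{j+1}$ to the affine hull of $\sigma^l_{(j)}$. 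Since the perpendicular distance to an affine subspace is bounded by the distance to any point of that subspace, we have $h_{j+1} \le l_{v_0 v_{j+1}} \le \|\underline{z}\|^{1/2}$ by the first step.

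Putting these estimates together by induction on $j = l, l+1, \dots, k-1$,
\begin{equation*}
|\sigma^k(\underline{z})| \;\le\; \prod_{j=l}^{k-1}\frac{1}{j+1}\,\|\underline{z}\|^{1/2}\cdot|\sigma^l(\underline{z})|
\;=\; \frac{l!}{k!}\,\|\underline{z}\|^{(k-l)/2}\,|\sigma^l(\underline{z})|,
\end{equation*}
so one may take $c_n = \max_{0\le l\le k\le n}\, l!/k! = 1$ (or more informatively $l!/k!$). Since $k$ and $l$ are bounded by $n$, the constant depends only on $n$, as required. There is no real obstacle here; the only point to be careful about is ensuring that all intermediate simplexes of the chosen flag really do belong to $K^n$, which is automatic from the definition of a simplicial complex.
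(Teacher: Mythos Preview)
Your argument is correct: the pyramid formula $|\sigma^{j+1}| = \tfrac{h_{j+1}}{j+1}|\sigma^j|$ together with the bound $h_{j+1}\le l_{v_0v_{j+1}}\le \|\underline{z}\|^{1/2}$ chains cleanly along the flag to give $|\sigma^k|\le \tfrac{l!}{k!}\|\underline{z}\|^{(k-l)/2}|\sigma^l|$. The paper offers no proof at all---it simply declares the lemma ``trivial''---so your write-up is a clean way to make explicit what the author left implicit, and with the sharper constant $l!/k!\le 1$ rather than an unspecified $c_n$.
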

It is also clear that in general any volume $|\sigma^k(\underline{z})|$ will not stay away from zero even 
if $||\underline{z}||$ stays away from zero. For the same reason
\begin{equation}\label{dervol}
\partial^{\sigma^1}|\sigma^k(\underline{z})|
=\frac{1}{2|\sigma^k(\underline{z})|}\partial^{\sigma^1}\det A(\underline{z}(\sigma^k))
\end{equation}
for $\sigma^1\in\sigma^k$ may become unbounded even if $||\underline{z}||$ stays away from zero.
With the above notation we have the
\begin{lemma}\label{sympol}
$\det A(\underline{z})$ is a homogeneous, symmetric polynomial of order $k$ in the $z_{ij},0\le i<j\le k$.
\end{lemma}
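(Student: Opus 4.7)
The plan is to handle the three claims (polynomiality, homogeneity of degree $k$, and symmetry) separately, since the first two are essentially structural while the third is the only substantive point.

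First, polynomiality and homogeneity are immediate from \eqref{amatrix}. Each entry $a_{ij}=\tfrac{1}{2}(z_{0i}+z_{0j}-z_{ij})$ (with the convention $z_{ii}=0$) is a linear form in the variables $\{z_{pq}\}_{0\le p<q\le k}$, that is, a homogeneous polynomial of degree $1$. Since $\det A$ is the standard Leibniz sum of signed products of $k$ entries of the $k\times k$ matrix $A$, it is a polynomial of degree $k$ in those entries and therefore a polynomial, homogeneous of degree $k$, in the $z_{pq}$. The only mildly nontrivial point is that the degree is exactly $k$ (not lower), but that is clear from the fact that $\det A$ is not the zero polynomial (e.g.\ for $v_1,\dots,v_k$ orthonormal, $A=I$ and $\det A=1$, a value which forces degree $k$ by the homogeneity).

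The real content is symmetry under the action of $S_{k+1}$ permuting the vertex labels $\{0,1,\dots,k\}$. I would split this into two subclaims. For permutations $\tau\in S_k$ that fix $0$: such a $\tau$ simply relabels the rows and columns of $A$ by the same permutation, so $\det A$ is manifestly unchanged. The delicate case is a transposition swapping $0$ with some $i\in\{1,\dots,k\}$, since the formula \eqref{amatrix} singles out the vertex $0$. For this I would use the geometric interpretation: on the open subset $\mathcal{M}(\sigma^k)\subset\R_+^{\binom{k+1}{2}}$ of metrics that actually come from a euclidean simplex, \eqref{amatrixvol} gives $\det A=(k!)^{2}|\sigma^k|^{2}$, and $|\sigma^k|^{2}$ is a geometric invariant of the simplex, independent of which vertex we chose to place at the origin. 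Hence the polynomial $\det A$ and its image $\det A\circ\tau$ (a polynomial in the same variables) agree on a non-empty open set in $\R^{\binom{k+1}{2}}$, and two polynomials that agree on an open set must coincide identically.

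An alternative, purely algebraic route for the symmetry, which avoids the analytic continuation argument, is to rewrite $\det A$ as (a constant multiple of) the Cayley--Menger determinant, i.e.\ to verify by elementary row/column operations that
\begin{equation*}
\det A(\underline{z})\;=\;\frac{(-1)^{k+1}}{2^{k}}\,
\det\!\begin{pmatrix}0 & 1 & 1 & \cdots & 1\\ 1 & 0 & z_{01} & \cdots & z_{0k}\\ 1 & z_{01} & 0 & \cdots & z_{1k}\\ \vdots & \vdots & & \ddots & \vdots\\ 1 & z_{0k} & z_{1k} & \cdots & 0\end{pmatrix},
\end{equation*}
whose right-hand side is manifestly invariant under any simultaneous permutation of rows and columns $1,2,\dots,k+1$, hence symmetric in the $z_{ij}$. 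The main obstacle in either route is exactly the same: the asymmetric appearance of vertex $0$ in \eqref{amatrix}, which one must eliminate either by a geometric interpretation of $\det A$ as a volume or by the bordering trick that yields the Cayley--Menger form.
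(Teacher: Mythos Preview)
Your proof is correct and follows essentially the same approach as the paper: homogeneity and degree are declared clear, and symmetry is established by the geometric argument that $\det A=(k!)^2|\sigma^k|^2$ is independent of which vertex is placed at the origin. You are more careful than the paper in two respects: you make explicit the analytic continuation step (agreement on the open set $\mathcal{M}(\sigma^k)$ forces agreement of the polynomials everywhere), which the paper leaves implicit, and you offer the Cayley--Menger alternative as a purely algebraic route, which the paper does not mention.
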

This lemma shows that the above result \eqref{amatrixvol} is independent of the particular labeling of the 
vertices in $\sigma^k$.
For the case $n=2$, see the Example \ref{triangle} below.
\begin{proof}
Homogeneity and the order are clear. Symmetry follows from a geometric argument. The construction above was based 
on a particular choice of the 
order of labeling. We could have as well chosen an arbitrary other order, which amounts to a permutation of 
the $k+1$ vertices. This would result 
in another construction of the euclidean $k$-simplex with the same volume. The claim then follows from 
\eqref{amatrixvol}.
\end{proof}
The converse is also valid. For any real positive definite $k\times k$ 
matrix $A$, invert \eqref{amatrix} to {\it define} lengths squares as 
\begin{align}\label{Atol}
z_{0i}&=l_{0i}^2=a_{ii},\\\nonumber
z_{ij}&=l_{ij}^2=a_{ii}+a_{jj}-2a_{ij}.
\end{align}
Since $A$ is positive definite, one can build a euclidean $k$-simplex with these edge 
lengths (squared). 
The following lemma is well known, see e.g.  \cite{Bhatia,TerrasII}. 
Via the above correspondence it provides a higher dimensional extension of the triangle 
inequality for the three edge lengths of a euclidean triangle, see Example \ref{triangle} below. 
\begin{lemma}\label{tartaglia}
Let any symmetric $k\times k$ matrix $B$ be given with entries labeled by the set $\{1,\cdots, k\}$. 
Set $I_l=\{1,2,\cdots ,l\}$ with $1\le l\le k$ and let $B_l$ denote the $l\times l$ matrix 
$\{B_{ij}\}_{i,j\in I_l}$. Then $B$ is positive definite if and only if $\det B_k >0$ 
holds for all $k$.
\end{lemma}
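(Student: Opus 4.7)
The plan is to prove this well-known criterion (Sylvester's criterion) by induction on $k$, treating the two directions separately.

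For the ``only if'' direction, if $B$ is positive definite, then for every $l$ the principal submatrix $B_l$ represents the restriction of the quadratic form $x \mapsto \langle Bx, x\rangle$ to the subspace spanned by the first $l$ standard basis vectors. Hence $B_l$ is itself positive definite, all its eigenvalues are strictly positive, and $\det B_l > 0$.

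For the ``if'' direction, I would argue by induction on $k$. The base $k=1$ is immediate since $B = (b_{11})$ with $b_{11} = \det B_1 > 0$. For the inductive step, apply the hypothesis $\det (B_{k-1})_l = \det B_l > 0$ for $l \le k-1$ together with the inductive hypothesis to obtain that $B_{k-1}$ is positive definite, and in particular invertible. Then write
\begin{equation*}
B = \begin{pmatrix} B_{k-1} & c \\ c^{T} & b_{kk} \end{pmatrix}
\end{equation*}
and perform block Gaussian elimination, which produces the congruence
\begin{equation*}
B = L^{T}\,\mathrm{diag}\!\left(B_{k-1},\; b_{kk} - c^{T} B_{k-1}^{-1} c\right)\,L,
\end{equation*}
where $L$ is upper triangular with unit diagonal. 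Taking determinants gives $\det B = \det B_{k-1} \cdot (b_{kk} - c^{T} B_{k-1}^{-1} c)$, and combining $\det B > 0$ with $\det B_{k-1} > 0$ yields that the Schur complement $b_{kk} - c^{T} B_{k-1}^{-1} c$ is strictly positive. Since $B_{k-1}$ is already positive definite by induction, the block-diagonal matrix sandwiched in the congruence is positive definite, and hence so is $B$ by Sylvester's law of inertia (or directly, since $L$ is invertible).

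The only subtle point is that the block decomposition via the Schur complement requires $B_{k-1}$ to be invertible, but this is delivered for free by the inductive hypothesis, which gives positive definiteness rather than merely a nonvanishing determinant. No serious obstacle remains; the proof is a clean induction.
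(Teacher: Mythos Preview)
Your proof is correct and is the standard induction-plus-Schur-complement argument for Sylvester's criterion. The paper does not give its own proof of this lemma at all; it simply states the result as well known and refers the reader to \cite{Bhatia,TerrasII}, so there is nothing to compare against beyond noting that your argument is one of the canonical proofs found in those references.
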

Since $A(\lambda \underline{z})=\lambda A(\underline{z})$, we conclude that the set of $\underline{z}$,
for which one can build a euclidean $k$-simplex with these edge lengths (squared), 
is an open cone in $\R_+^{k(k+1)/2}$.
Moreover, this cone is convex. Indeed, by definition of $A(\underline{z})$ the relation 
\begin{equation*}
 A(\rho \underline{z}+(1-\rho)\underline{z}^{\prime})=
\rho A(\underline{z})+(1-\rho)A(\underline{z}^{\prime}),\qquad 0\le \rho\le 1
\end{equation*}
is obvious.
The claim now follows directly from the fact, that a convex combination of two positive 
definite matrices is again positive definite. By \eqref{Atol} the corresponding edge lengths 
squares are of the form
\begin{equation}\label{edgelength}
\rho\underline{z}+(1-\rho)\underline{z}^{\prime}=\underline{z}^{\prime\prime}=
\{z^{\prime\prime}_{ij}=
\rho z_{ij}+(1-\rho)z^{\prime}_{ij}\}_{i\le j}.
\end{equation}
\begin{example}\label{triangle}
 For $k=2$
$$
A(\underline{z})=\left(\begin{matrix}z_{01}&\frac{1}{2}\left(z_{12}-z_{01}-z_{02}\right)\\
                       \frac{1}{2}\left(z_{12}-z_{01}-z_{02}\right)&z_{02} 
                       \end{matrix}\right)
$$
such that
$$
\det A(\underline{z})=\frac{1}{2}\left(z_{01}z_{02}+z_{01}z_{12}+z_{02}z_{12} \right)-\frac{1}{4}\left(z_{01}^2+z_{02}^2+z_{12}^2\right).
$$
Therefore the two conditions $z_{01}>0$ and $\det A(\underline{z})>0$ are equivalent to the three conditions
$ \sqrt{z_{01}}<\sqrt{z_{02}}+\sqrt{z_{12}},\sqrt{z_{02}}<\sqrt{z_{01}}+\sqrt{z_{12}}$ and 
$\sqrt{z_{12}}<\sqrt{z_{01}}+\sqrt{z_{02}}$.
In particular the first two  conditions imply $z_{02}>0$ and $z_{12}>0$.
\end{example}

This discussion for a single simplex $\sigma^k$ carries over to all simplexes in $K$ as follows. Indeed, to 
see that $\cM$ is convex, consider now the 
convex combination \eqref{edgelength} now with $\underline{z},\underline{z}^\prime\in\cM$. For any 
$\sigma^k\in K$ set 
\begin{equation*}
\underline{z}(\sigma^k)=\{z_{\sigma^1}\}_{\sigma^1\subseteq\sigma^k}.
\end{equation*}
With the notation of \eqref{edgelength} and by the discussion above 
$$
\underline{z}^{\prime\prime}(\sigma^k)=(\rho\underline{z}+(1-\rho)\underline{z}^{\prime})(\sigma^k)
=\rho\underline{z}(\sigma^k)+(1-\rho)\underline{z}^{\prime})(\sigma^k)
$$
it follows that one can build a euclidean $k$-simplex with edge lengths squared equal to 
$\underline{z}^{\prime\prime}(\sigma^k)$. 
Since this holds for 
all $\sigma^k\in K^n$, this establishes that $\cM(K^n)$ is convex. This result is the main reason for having 
chosen
the squares of the edge lengths as the basic parameters for a metric.
Moreover $\underline{z}\in\cM(K^n)$ implies $\lambda\underline{z}\in\cM(K^n)$ for any $\lambda>0$, so 
$\cM(K^n)$ is a convex cone. With the choice of the $l_{\sigma^1}$ as parameters convexity would fail.
For any $\sigma^k\in K^n$, let $\underline{z}(\sigma^k)$ denote the set of 
$z_{\sigma^1}$ 
with $\sigma^1\in\sigma^k$. Define $A(\underline{z}(\sigma^k))$ according to the procedure given above.
Then $\underline{z}\in\cM(K^n)$ if and only if $\det A(\underline{z}(\sigma^k))>0$ for all $\sigma^k\in K^n$.
Since each map $\underline{z}\mapsto \det A(\underline{z}(\sigma^k))$ is continuous, this proves that $\cM(K^n)$ is
open.
\end{proof}

Actually the set $\cM(K)$ is a Riemannian manifold 
in a canonical way. We first consider a single 
$n$-simplex. Let $\cP_n$ denote the space
of all real, positive definite $n\times n$ matrices. This space is a Riemannian manifold of dimension 
$n(n+1)/2$, see e.g. \cite{Bhatia,TerrasII}. The pullback of the metric on $\cP_n$ to $\cM(\sigma^n)$ via the 
one-to-one smooth map $\phi\,: \cM(\sigma^n)\;\rightarrow\;\cP_n$ given by \eqref{amatrix}
turns $\cM(\sigma^n)$ into a Riemannian manifold. Now consider the Riemannian manifold 
$$
\times_{\sigma^n\in K^n}\cM(\sigma^n).
$$ 
Write a point in this space as $\times_{\sigma^n\in K^n}\underline{z}(\sigma^n)$. 

$\cM(K)$ is now obtained as a closed submanifold of this space. Indeed, consider any metric 
$\underline{z}$ on $K$ and any edge $\sigma^1\in K$, which is the face of any
$\sigma^n$ and $\sigma^{n\,\prime}$. Then its edge length squared $z_{\sigma^1}$ defines a metric on 
both $\sigma^n$ and $\sigma^{n\,\prime}$. With the above notation this is just the condition 
$z_{\sigma^1}(\sigma^n)-z_{\sigma^1}(\sigma^{n\,\prime})=0$. Going through all such triples in 
$K$ the collection of all these conditions define $\cM(K)$.
By this discussion we also see that $\cM(K^n)$ is given as
$$
\cM(K^n)=\left\{\underline{z}\in \R^{n_1(K^n)}\;|\; 
\det A(\underline{z}(\sigma^k))>0\;\mbox{for all}\;\sigma^k\in K^n\right\}.
$$
We now introduce a quantity, which serves to measure the distance of a metric 
$\underline{z}\in \cM(K^n)$ to the boundary $\partial\cM(K^n)$ of $\cM(K^n)$, defined as 
$\partial\cM(K^n)=\overline{\cM(K^n)}\setminus \cM(K^n)$. $\overline{\cM(K^n)}$ denotes the closure of $\cM(K^n)$. 
Indeed, set 
\begin{equation*}
d(\underline{z}) =
\min_{k}\min_{\sigma^k\in K}\left(k!\sqrt{\det A(\underline{z}(\sigma^k))}\right)^{2/k}
=\min_{k}\min_{\sigma^k\in K}|\sigma^k|(\underline{z})^{2/k}.
\end{equation*}
This quantity has the right scaling behavior: 
\begin{equation*}
 d(\lambda\underline{z})=\lambda d(\underline{z}), \quad \lambda >0.
\end{equation*}


\section{Einstein metrics}\label{sec:einsmet}
The existence of the two vector fields $\underline{z}$ and $\underline{v}$ leads us to two alternative 
and hence different definitions of Einstein metrics.
\begin{definition}\label{def:einstein}
For given pseudomanifold $K^n$ a metric $\underline{z}_0$ 
is an {\it Einstein metric on $K^n$} of {\rm type I}, if there is real constant $\kappa_I$, such that
\begin{equation}\label{ricci1}
\underline{Ein}(K^n,\underline{z}_0)-\kappa_I\, \underline{z}_0=0\\
\end{equation}
holds. 

For given pseudomanifold $K^n$ a metric $\underline{z}_0$ 
is an {\it Einstein metric on $K^n$} of {\rm type II} if there is real constant $\kappa_{II}$, such that
\begin{equation}\label{vol1}
\underline{Ein}(K^n,\underline{z}_0)-\kappa_{II}\, \underline{v}(K^n,\underline{z}_0)=0
\end{equation}
holds. In both cases $(K^n,\underline{z}_0)$ is then called a {\it piecewise linear (p.l.) Einstein space}.
\end{definition}
A p.l.\ space $(K^n,\underline{z})$ is said to be {\it Einstein-flat at $\sigma^1$} if 
$Ein_{\sigma^1}(\underline{z})=0$. 
$(K^n,\underline{z})$ is said to be {\it Einstein-flat}, if it is Einstein-flat at all $\sigma^1$. 
An Einstein-flat p.l. space is also a p.l. Einstein space of both types with vanishing 
$\kappa_I$ and $\kappa_{II}$. Also it has vanishing total scalar curvature and therefore also vanishing mean scalar curvature.

Equation \eqref{vol1} is a p.l. analogue of the Einstein vacuum equations with a cosmological term, that is
$\kappa_{II}$ plays the r\^{o}le of a { \it cosmological constant}.
The condition \eqref{ricci1} for an Einstein metric $\underline{z}_0$ of type I is local in the following sense. 
Component wise it reads
\begin{equation}\label{ricci2}
Ein_{\sigma^1}(K^n,\underline{z}_0)-\kappa_I z_{0\,\sigma^1}=0.
\end{equation}
For any $k$-simplex $\sigma^k\in K^n$, $0\le k\le n-1$ 
its star, denoted by $\mbox{\it star}(\sigma^k)$, is the subcomplex of $K$ consisting of all 
$\sigma^n\supset \sigma^k$ and its 
faces. Then in \eqref{ricci2} the l.h.s. is only a function of those 
$z_{\sigma^{1\,\prime}}$ for which $\sigma^{1\,\prime}\in\mbox{\it star}(\sigma^1)$. 

Similarly the condition \eqref{vol1} for 
an Einstein metric $\underline{z}_0$ of type II is also local.
These definitions mimic the standard definition of an Einstein space in the smooth case, see \eqref{g:8}.
\begin{remark}\label{remark:3}
With the notation as in Remark \ref{remark:2}, if one replaces the definition \label{riici1} for an Einstein metric of 
type I by the condition
\begin{equation}\label{riccil1}
\underline{\nabla}\,_{\underline{l}}\cR(K^n,\underline{l})-\kappa_I^\prime\, \underline{l}=0\\
\end{equation}
then {\rm a priori} these two conditions do not give rise to the same solutions. This is easily seen using the trivial 
identity
$$
\frac{\partial}{\partial l_{\sigma^1}}=2l_{\sigma^1}\frac{\partial}{\partial z_{\sigma^1}}.
$$
A corresponding statement 
holds if condition \eqref{vol1} is replaced by the condition
\begin{equation}\label{voll1}
\underline{\nabla}_{\underline{l}}\cR(K^n,\underline{l})-\kappa_{II}^\prime\,\underline{\nabla}_{\underline{l}}V(K^n,\underline{l})=0.
\end{equation}
The metrics provided in Examples \ref{ex:0} and \ref{ex:1} below, the only known so far, satisfy all four conditions \eqref{ricci1}, 
\eqref{vol1}, \eqref{riccil1} and \eqref{voll1}.
For the special case $n=3$, $\underline{\nabla}\,_{\underline{l}}\cR(K^3,\underline{l})$ is just the sectional 
curvature \eqref{totalscalar4}. This quantity was then used in \cite{CGY} to give two definitions of 
an Einstein metric in analogy to definition of the two types I and II just given.
\end{remark}
The following proposition is an immediate consequence of \eqref{Ricscale} and \eqref{volgrad2} and holds for both types of 
Einstein metrics.
\begin{proposition}\label{prop:11}
 If $(K^n,\underline{z})$ is a p.l. Einstein space, so is $(K^n,\lambda\underline{z})$  for any $\lambda >0$.
\end{proposition}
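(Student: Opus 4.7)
The plan is to reduce the statement to a direct calculation using the two scaling laws already established: equation \eqref{Ricscale} for the Einstein vector field and equation \eqref{volgrad2} for the gradient of the volume. Since both homogeneity laws were derived earlier, the proposition amounts to tracking the exponents and reading off a new proportionality constant for each of the two types. The only genuine content is observing that the ratio of the scaling degrees lines up so that the Einstein condition is preserved, with a (possibly) different constant.

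For the type I case, I would start from $\underline{Ein}(K^n,\underline{z})=\kappa_I\,\underline{z}$, apply \eqref{Ricscale}, and then rewrite $\underline{z}$ as $\lambda^{-1}(\lambda\underline{z})$. This gives
\begin{equation*}
\underline{Ein}(K^n,\lambda\underline{z})
=\lambda^{(n-4)/2}\kappa_I\,\underline{z}
=\lambda^{(n-6)/2}\kappa_I\,(\lambda\underline{z}),
\end{equation*}
so the rescaled metric satisfies \eqref{ricci1} with $\kappa_I(\lambda)=\lambda^{(n-6)/2}\kappa_I$.

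For the type II case, I would start from $\underline{Ein}(K^n,\underline{z})=\kappa_{II}\,\underline{v}(K^n,\underline{z})$, apply \eqref{Ricscale} to the left-hand side and invert \eqref{volgrad2} on the right-hand side to express $\underline{v}(K^n,\underline{z})$ as $\lambda^{-(n-2)/2}\underline{v}(K^n,\lambda\underline{z})$. The exponents $(n-4)/2$ and $-(n-2)/2$ then combine to $-1$, yielding
\begin{equation*}
\underline{Ein}(K^n,\lambda\underline{z})
=\lambda^{(n-4)/2}\kappa_{II}\,\underline{v}(K^n,\underline{z})
=\lambda^{-1}\kappa_{II}\,\underline{v}(K^n,\lambda\underline{z}),
\end{equation*}
so \eqref{vol1} is preserved with $\kappa_{II}(\lambda)=\lambda^{-1}\kappa_{II}$.

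There is no real obstacle here — the statement is essentially a bookkeeping consequence of the fact that $\underline{Ein}$, $\underline{z}$, and $\underline{v}$ are all homogeneous in $\underline{z}$, and that the definition of an Einstein metric is an equation between two homogeneous vector fields on $\cM(K^n)$. If anything deserves emphasis, it is only that the new constants $\kappa_I(\lambda)$ and $\kappa_{II}(\lambda)$ in general differ from the original ones (they coincide with $\kappa_I,\kappa_{II}$ only in the critical dimensions $n=6$ and $n=2$ respectively), so the Einstein property is invariant under rescaling even though the Einstein constant itself is not.
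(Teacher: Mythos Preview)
Your proof is correct and is exactly what the paper has in mind: it states that the proposition is ``an immediate consequence of \eqref{Ricscale} and \eqref{volgrad2}'' without spelling out the details, and your computation does precisely that. One small slip in your closing parenthetical remark: for type II the exponent you correctly derived is $-1$ independently of $n$, so there is no ``critical dimension $n=2$'' in which $\kappa_{II}$ is scale-invariant; the type II constant always rescales by $\lambda^{-1}$.
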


\vspace{0.5cm}
\subsection{Einstein metrics of type I}~~\\
In this subsection $K^n$ ( and therefore in particular $n\ge 3$) will be fixed, so from now on, 
and when the context is clear, we will simply write 
$\cR(\underline{z}),\underline{Ein}(\underline{z}), V(\underline{z})$ etc.

\begin{proposition}\label{prop:1}
Let $(K^n,\underline{z}_0),\, (n\ge 3)$ be a p.l.\ Einstein space of type I. Then $\cR(\underline{z}_0)$ 
is strictly positive or strictly negative if and only if $\underline{Ein}(\underline{z}_0)$ is strictly positive or 
strictly negative respectively. Also $\cR(\underline{z}_0)$ vanishes if and only if $(K^n,\underline{z}_0)$ is Einstein-flat.
\end{proposition}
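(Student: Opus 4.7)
The plan is to reduce everything to a single scalar quantity, namely the Einstein constant $\kappa_I$. Since $(K^n,\underline{z}_0)$ is Einstein of type I, the defining equation \eqref{ricci1} gives $\underline{Ein}(\underline{z}_0)=\kappa_I\,\underline{z}_0$ componentwise, i.e.\ $Ein_{\sigma^1}(\underline{z}_0)=\kappa_I z_{0\,\sigma^1}$ for every edge $\sigma^1$. Because $\underline{z}_0$ is a metric, every $z_{0\,\sigma^1}$ is strictly positive. Hence the sign pattern of $\underline{Ein}(\underline{z}_0)$ is completely controlled by the sign of $\kappa_I$: strict positivity (resp.\ strict negativity, resp.\ vanishing) of $\underline{Ein}(\underline{z}_0)$ is equivalent to $\kappa_I>0$ (resp.\ $<0$, resp.\ $=0$).

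Next I would invoke Euler's relation in the form \eqref{scaling4}, which uses only the homogeneity of $\cR$ in $\underline{z}$ of degree $(n-2)/2$. Substituting the Einstein relation into \eqref{scaling4} yields
\begin{equation*}
\tfrac{n-2}{2}\,\cR(\underline{z}_0)
=\langle\underline{z}_0,\underline{Ein}(\underline{z}_0)\rangle
=\kappa_I\,\langle\underline{z}_0,\underline{z}_0\rangle
=\kappa_I\,\|\underline{z}_0\|^2.
\end{equation*}
Since $n\ge 3$ the prefactor $(n-2)/2$ is strictly positive, and $\|\underline{z}_0\|^2>0$ because $\underline{z}_0$ is strictly positive. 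Consequently $\cR(\underline{z}_0)$ and $\kappa_I$ have the same sign, and one vanishes if and only if the other does.

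Combining these two observations closes the proposition: $\cR(\underline{z}_0)>0$ iff $\kappa_I>0$ iff $\underline{Ein}(\underline{z}_0)$ is strictly positive, and analogously for strict negativity. For the last assertion, $\cR(\underline{z}_0)=0$ iff $\kappa_I=0$, and in that case the Einstein equation \eqref{ricci1} reduces to $\underline{Ein}(\underline{z}_0)=0$, which is precisely the condition that $(K^n,\underline{z}_0)$ is Einstein-flat; conversely Einstein-flatness forces $\kappa_I\|\underline{z}_0\|^2=0$, hence $\cR(\underline{z}_0)=0$.

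There is essentially no obstacle here; the only subtlety to flag is that the argument relies crucially on $n\ge 3$ (so that the Euler factor is nonzero) and on the fact that a metric has strictly positive components (so that the componentwise sign of $\kappa_I\underline{z}_0$ is unambiguous). Both hypotheses are built into the setup of the section.
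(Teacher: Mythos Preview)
Your proof is correct and follows essentially the same route as the paper: both combine the defining relation \eqref{ricci1} with Euler's relation \eqref{scaling4}. Your version is slightly more streamlined in that you organize the argument around the sign of the single scalar $\kappa_I$ and compute $\tfrac{n-2}{2}\cR(\underline{z}_0)=\kappa_I\|\underline{z}_0\|^2$ directly, whereas the paper argues the converse direction (from $\cR(\underline{z}_0)=0$ to Einstein-flatness) by noting that all components of $\underline{Ein}(\underline{z}_0)$ share a sign and that their positive-weighted sum vanishes; but this is a cosmetic difference, not a different idea.
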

The last statement is also valid for a p.l.\ Einstein space of type II. This result for Einstein spaces of 
type I extends the result in Proposition \ref{prop:0}.
\begin{proof}
The first claim follows trivially from the defining relation \eqref{ricci1}. If $(K^n,\underline{z}_0)$ is Einstein-flat, then 
$\cR(\underline{z}_0)=0$ due to \eqref{scaling4}. Conversely assume 
$\cR(\underline{z}_0)=0$.  By the definition of an Einstein metric of type I all $Ein_{\sigma^1}(\underline{z}_0)$ have 
the same sign unless they all vanish. Now observe that the $z_{0\;\sigma^1}$ are all positive.
Since $\langle\underline{z}_0, \underline{Ein}(\underline{z}_0)\rangle=0$, again by 
\eqref{scaling4}, this relation can therefore only hold if all $Ein_{\sigma^1}(\underline{z}_0)$ vanish, that is 
$(K^n,\underline{z}_0)$ is Einstein-flat.
Assume next that $\underline{z}_0$ is an Einstein metric of type II, which in addition is Einstein-flat. 
But then again by \eqref{scaling4} $\cR(\underline{z}_0)=0$. Conversely, if $\cR(\underline{z}_0)=0$, then by \eqref{scaling4}
$$
0=\langle
\underline{z}_0, \underline{Ein}(\underline{z}_0)\rangle
=\kappa_{II}\langle\underline{z}_0,\underline{v}(\underline{z}_0)\rangle=\kappa_{II}\frac{2}{n}V(\underline{z}_0),
$$
having used the definition \ref{vol1}. But this is only possible if $\kappa_{II}=0$, so $(K^n,\underline{z}_0)$ is 
Einstein-flat.
\end{proof}
\begin{proposition}\label{pro:EI1}
Let $\underline{z}_0$ be an Einstein metric of type I. Then $\kappa_I$ is given in terms of the 
total scalar curvature as $\kappa_I^{(1)}(\underline{z}_0)$ where
\begin{equation}\label{ricci3}
\kappa_I^{(1)}(\underline{z})=
\frac{n-2}{2}\frac{\cR(\underline{z})}{||\underline{z}||^2}
\end{equation}
is well defined for all $\underline{z}\in\cM(K^n)$. 

Alternatively $\kappa_I$ is also given in terms of the 
Einstein vector field as $\kappa_I^{(2)}(\underline{z}_0)$ where 
\begin{equation}\label{ricci30}
\kappa_I^{(2)}(\underline{z})=\frac{2}{n}
\frac{\langle \underline{v}(\underline{z}),\underline{Ein}(\underline{z})\rangle}{V(\underline{z})},
\end{equation}
which is also well defined for all $\underline{z}\in\cM(K^n)$.
Finally in case $(K^n,\underline{z}_0)$ is not Ricci flat, $\kappa_I$ is also given in terms of the 
Einstein vector field and the total scalar curvature as $\kappa_I^{(3)}(\underline{z}_0)$ where 
\begin{equation}\label{ricci300}
\kappa_I^{(3)}(\underline{z}) =\frac{n-2}{2}\frac{||\underline{Ein}(\underline{z})||^2}
{\cR(\underline{z})},
\end{equation}
which is well defined outside the zero set of 
$\langle\underline{v}(\underline{z}),\underline{Ein}(\underline{z})\rangle$.
\end{proposition}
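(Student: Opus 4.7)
My plan is to treat the Einstein equation \eqref{ricci1}, namely $\underline{Ein}(\underline{z}_0) = \kappa_I\,\underline{z}_0$, as a vector identity in $\R^{n_1(K^n)}$ and to extract each of the three scalar expressions by pairing it in turn with a well-chosen ``test'' vector: first $\underline{z}_0$ itself, then the volume gradient $\underline{v}(\underline{z}_0)$, and finally $\underline{Ein}(\underline{z}_0)$. Because \eqref{ricci1} is linear in $\kappa_I$, each such pairing produces a scalar equation that can be solved for $\kappa_I$ directly, and Euler's identity \eqref{scaling4} for $\cR$ together with its volume counterpart \eqref{volgrad3} will convert the resulting inner products into the intrinsic quantities on the right-hand sides of \eqref{ricci3}, \eqref{ricci30}, and \eqref{ricci300}.

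Concretely, pairing \eqref{ricci1} with $\underline{z}_0$ gives $\langle \underline{z}_0, \underline{Ein}(\underline{z}_0)\rangle = \kappa_I\,||\underline{z}_0||^2$; substituting \eqref{scaling4} on the left and solving yields \eqref{ricci3}. Pairing with $\underline{v}(\underline{z}_0)$ gives $\langle \underline{v}(\underline{z}_0), \underline{Ein}(\underline{z}_0)\rangle = \kappa_I\,\langle \underline{v}(\underline{z}_0),\underline{z}_0\rangle$; substituting \eqref{volgrad3} on the right yields \eqref{ricci30}. Finally, pairing with $\underline{Ein}(\underline{z}_0)$ gives $||\underline{Ein}(\underline{z}_0)||^2 = \kappa_I\,\langle\underline{z}_0,\underline{Ein}(\underline{z}_0)\rangle$; substituting \eqref{scaling4} on the right yields \eqref{ricci300}. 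The well-definedness claims for $\kappa_I^{(1)}$ and $\kappa_I^{(2)}$ on all of $\cM(K^n)$ are immediate, since $||\underline{z}||^2 > 0$ (each $z_{\sigma^1} > 0$) and $V(\underline{z}) > 0$ for every metric. For $\kappa_I^{(3)}$ the division requires $\cR(\underline{z}_0) \neq 0$, which by Proposition \ref{prop:1} is equivalent to $(K^n,\underline{z}_0)$ not being Einstein-flat, matching the stated hypothesis.

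I do not foresee any substantive obstacle: the whole argument reduces to three inner-product pairings combined with already-established scaling identities. The only mildly delicate point is reconciling the domain of $\kappa_I^{(3)}$ stated in the proposition (the complement of the zero set of $\langle \underline{v},\underline{Ein}\rangle$) with the natural domain coming from the derivation (the complement of the zero set of $\cR$); at an Einstein metric the two conditions coincide via \eqref{scaling4} and the Einstein equation itself, but a brief sanity check using the scaling relations should be made to confirm the precise global formulation.
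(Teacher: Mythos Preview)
Your proposal is correct and follows exactly the same route as the paper: take the scalar product of the Einstein equation \eqref{ricci1} with $\underline{z}_0$, with $\underline{v}(\underline{z}_0)$, and with $\underline{Ein}(\underline{z}_0)$ in turn, and invoke \eqref{scaling4} and \eqref{volgrad3} to simplify. Your additional remarks on well-definedness and on the discrepancy between the stated domain of $\kappa_I^{(3)}$ and the natural one (the zero set of $\cR$) go slightly beyond what the paper spells out but are accurate.
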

The analogue in the smooth case is given in \eqref{g:9}.
\begin{proof}
Take the scalar product of \eqref{ricci1}, with $\underline{z}_0$, and then use \eqref{scaling4} in combination 
with Euler's relation, see Lemma \ref{lem:euler}. This proves the first claim. As for the second claim, now take 
the scalar product of \eqref{ricci1}, now with $\underline{v}(\underline{z}_0)$, and use \eqref{volgrad3}. 
Finally the third claim 
follows by taking the scalar product of  \eqref{ricci1} with $\underline{Ein}(\underline{z}_0)$.
\end{proof}
In order to analyze Einstein metrics in more detail, we need some preparations. Recall that we view a metric $\underline{z}$
as an element of the euclidean space $\R^{n_1(K^n)}$.  
Let $S^{\bar{n}}(r)$ with $\bar{n}=n_1(K^n)-1$ denote the sphere of radius $r>0$. We set 
\begin{equation*}
\cM_r(K^n)=\cM(K^n)\cap S^{\bar{n}}(r)=\{\underline{z}\in\cM(K^n)\,|\, ||\underline{z}||=r\}.
\end{equation*}
For any $0\neq \underline{x}\in\R^{\bar{n}}$ let $P(\underline{x})$ denote the orthogonal projection onto the 
line defined by $\underline{x}$.
More explicitly
\begin{equation}\label{pxx}
P(\underline{x})\underline{y}=\frac{\langle\underline{x},\underline{y}\rangle}{||\underline{x}||^2}\;\underline{x}.
\end{equation}
The scale invariance 
\begin{equation*}
P(\lambda\underline{x})= P(\underline{x})
\end{equation*}
for all $\lambda>0$ is obvious. 
$Q(\underline{x})=\1-P(\underline{x})$ is the orthogonal projection
onto the tangent space $T_{\underline{x}}S^{\bar{n}}(r=||\underline{x}||)$ 
to $S^{\bar{n}}(r=||\underline{x}||)$ at the point ${\underline{x}}$.
Set
\begin{equation}\label{ricchat}
 \underline{\widehat{Ric}}_I(\underline{z})=Q(\underline{z})\underline{Ein}(\underline{z})
=\underline{Ein}(\underline{z})-\frac{n-2}{2}\cR(\underline{z})\frac{1}{||\underline{z}||^2}\underline{z},
\end{equation}
which is defined for all $\underline{z}\in \cM(K^n)$. By this definition of 
$\underline{\widehat{Ric}}_I(\underline{z})$ and since
$Q(\underline{z})$ is idempotent
\begin{equation}\label{pzricI}
 \underline{\widehat{Ric}}_I(\underline{z})=Q(\underline{z})\underline{\widehat{Ric}}_I(\underline{z})
\end{equation}
holds for all $\underline{z}\in\cM(K^n)$.
We view $\underline{\widehat{Ric}}_I(\underline{z})$ as a trace free 
part  of $\widehat{Ric}(\underline{z})$. In fact 
$$
\langle \underline{z},\underline{\widehat{Ric}}_I(\underline{z})\rangle=0
$$
holds. We consider relation \eqref{g:12} to be the analogous relation in the smooth case.
The following scaling relation is valid
\begin{equation*}
 \underline{\widehat{Ric}}_I(\lambda\underline{z})=\lambda^{(n-4)/2}\underline{\widehat{Ric}}_I(\underline{z}),
\end{equation*}
which is the same as for $\underline{Ein}(\underline{z})$ itself.
The main result of this subsection is the 
\begin{theorem}\label{theo:EI}
 Let $\underline{z}_0\in\cM(K^n)$. The following conditions are equivalent.
\begin{enumerate}
 \item{$\underline{z}_0$ is an Einstein metric of type I.}
\item{$\underline{z}_0$ satisfies 
\begin{equation}\label{PI}
 \widehat{\underline{Ein}}_I(\underline{z}_0)=0.
\end{equation}
}
\item{$\underline{z_0}$ is a critical point of the scale invariant function
\begin{equation*}
 F_I(\underline{z})=\frac{1}{||\underline{z}||^{(n-2)/2}}\cR(\underline{z})
 =\cR\left(\frac{1}{||\underline{z}||}\underline{z}\right).
 \end{equation*}
}
\item{$\underline{z}_0$ is a critical point of the function
\begin{equation}\label{actionI}
A_I(\underline{z})=\cR(\underline{z})-\frac{\kappa_I}{2}\,||\underline{z}||^2. 
\end{equation}
}
\item{$\underline{z}_0$ is a critical point of the total scalar curvature $\cR(\underline{z})$ 
restricted to
$\cM_{r=||\underline{z}_0||}(K^n)$.}
\item{$\underline{z}_0$ is a solution of the Euler-Lagrange equation, where the Lagrange function is the 
total scalar curvature and the constraint is the function
\begin{equation}\label{fC}
C(\underline{z})=\frac{1}{2}||\underline{z}||^2-\frac{1}{2}||\underline{z}_0||^2.
\end{equation}
}
\item{ $\underline{z}_0$ satisfies
\begin{equation*}
\underline{Ein}(\underline{z}_0)
=\frac{2}{n}
\frac{\langle \underline{v}(\underline{z}_0),\underline{Ein}(\underline{z}_0)\rangle}
{V(\underline{z}_0)}\;\underline{z_0}.
\end{equation*}
}
\item{$\underline{z}_0$ satisfies
\begin{equation*}
 \underline{Ein}(\underline{z}_0) =\frac{n-2}{n}
\frac{||\underline{Ein}(\underline{z}_0)||^2}
{\cR(\underline{z}_0)}\;\underline{z_0}
 \end{equation*}
in case $\cR(\underline{z}_0)\neq 0$.
}
\item{The relation 
\begin{equation}\label{ricci301}
 \cR(\underline{z}_0)^2=\frac{4}{(n-2)^2}||\underline{z}_0||^2||\underline{Ein}(\underline{z}_0)||^2
\end{equation}
is valid.
}
\end{enumerate}
\end{theorem}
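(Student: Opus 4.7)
My plan is to treat (1) as a hub and show each of (2)--(9) is equivalent to it, which reduces the nine conditions to a short list of direct computations. The key ingredients I will reuse throughout are Euler's relation \eqref{scaling4}, the analogous identity \eqref{volgrad3} for the volume gradient, and the three formulas for $\kappa_I$ supplied by Proposition \ref{pro:EI1}.

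The equivalences (1) $\Leftrightarrow$ (2) $\Leftrightarrow$ (3) and (3) $\Leftrightarrow$ (5) $\Leftrightarrow$ (6) form the differential-geometric core. For (1) $\Leftrightarrow$ (2) I note that \eqref{ricchat} rewrites $\widehat{\underline{Ein}}_I(\underline{z}) = \underline{Ein}(\underline{z}) - \kappa_I^{(1)}(\underline{z})\,\underline{z}$, and Proposition \ref{pro:EI1} identifies $\kappa_I^{(1)}(\underline{z}_0)$ as the multiplier of an Einstein metric of type I, so the two equations are literally the same. For (2) $\Leftrightarrow$ (3) the main calculation is the chain/product rule
\begin{equation*}
\underline{\nabla} F_I(\underline{z}) = \frac{1}{||\underline{z}||^{(n-2)/2}}\,\underline{Ein}(\underline{z}) - \frac{n-2}{2}\frac{\cR(\underline{z})}{||\underline{z}||^{(n+2)/2}}\,\underline{z} = \frac{1}{||\underline{z}||^{(n-2)/2}}\,\widehat{\underline{Ein}}_I(\underline{z}),
\end{equation*}
so that critical points of $F_I$ are precisely zeros of $\widehat{\underline{Ein}}_I$. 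For (3) $\Leftrightarrow$ (5) I use that $F_I$ is scale-invariant and the radial direction at $\underline{z}$ is generated by $\underline{z}$ itself, so that $\underline{\nabla} F_I\perp\underline{z}$ automatically, and criticality of $F_I$ on $\cM(K^n)$ is equivalent to criticality of $\cR$ on the sphere $\cM_r(K^n)$ (on which $F_I$ and $\cR$ differ by a multiplicative constant). For (5) $\Leftrightarrow$ (6) I invoke the standard Lagrange multiplier theorem with constraint \eqref{fC}, whose gradient is $\underline{z}$, so that the Lagrange condition reads $\underline{Ein}(\underline{z}_0)=\mu\underline{z}_0$, which is precisely (1).

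For the remaining conditions I proceed by direct algebraic manipulation. (1) $\Leftrightarrow$ (4) is immediate from $\underline{\nabla} A_I(\underline{z}) = \underline{Ein}(\underline{z}) - \kappa_I \underline{z}$. For (1) $\Rightarrow$ (7) and (1) $\Rightarrow$ (8) I take the inner product of \eqref{ricci1} with $\underline{v}(\underline{z}_0)$ and with $\underline{Ein}(\underline{z}_0)$ respectively, apply \eqref{volgrad3} and \eqref{scaling4} to solve for $\kappa_I$ in the forms $\kappa_I^{(2)}$ and $\kappa_I^{(3)}$ of Proposition \ref{pro:EI1}; the converses are trivial because (7) and (8) literally display $\underline{Ein}(\underline{z}_0)$ as a scalar multiple of $\underline{z}_0$. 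Finally, for (1) $\Leftrightarrow$ (9) I observe that \eqref{scaling4} gives $\langle\underline{z}_0,\underline{Ein}(\underline{z}_0)\rangle^2 = \tfrac{(n-2)^2}{4}\cR(\underline{z}_0)^2$, so \eqref{ricci301} is precisely the equality case of the Cauchy--Schwarz inequality applied to $\underline{z}_0$ and $\underline{Ein}(\underline{z}_0)$. Equality holds exactly when these two vectors are parallel, which is (1).

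The only mildly delicate points are bookkeeping. One must verify that the expression \eqref{ricchat} for $\widehat{\underline{Ein}}_I$ really matches $||\underline{z}||^{(n-2)/2}\,\underline{\nabla} F_I$, and in condition (8) one must carry the hypothesis $\cR(\underline{z}_0)\neq 0$ throughout in order to divide by $\cR$. Smoothness of $\cR$ and $V$ as functions of $\underline{z}\in\cM(K^n)$, which underlies the whole gradient discussion, is guaranteed by \eqref{totalscalar3} together with Regge's identity \eqref{Reggevar} and the explicit volume formula \eqref{amatrixvol}. No step appears to pose a conceptual obstacle; the main effort is to organize the nine implications into a minimal diagram and keep track of constants.
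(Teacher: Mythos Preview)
Your proposal is correct and follows essentially the same route as the paper's own proof: both use (1) as a hub, compute $\underline{\nabla}F_I=\|\underline{z}\|^{-(n-2)/2}\,\widehat{\underline{Ein}}_I$ to link (2) and (3), invoke the Lagrange multiplier formulation for (4)--(6), read off (7)--(8) from Proposition~\ref{pro:EI1}, and use the Cauchy--Schwarz equality case together with \eqref{scaling4} for (9). The only cosmetic difference is that you pass from (3) to (5) by exploiting scale-invariance of $F_I$ (so that $F_I$ and $\cR$ agree up to a constant on each sphere), whereas the paper proves (2)$\Leftrightarrow$(5) directly via the projection $Q(\underline{z}_0)$ onto the tangent space of $\cM_{\|\underline{z}_0\|}(K^n)$; both arguments are equally short.
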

This theorem compares with a well known result for Einstein spaces attributed to Hilbert, 
see e.g.\ Theorem 4.21 in \cite{Besse}. The smooth analogue of $F_I$ is given in \eqref{g:18}.
\begin{proof}
Condition \eqref{PI} is equivalent to the condition that $\underline{Ein}(\underline{z}_0)$ is a multiple of 
the vector $\underline{z}_0$ by the definition of $Q(\underline{z}_0)$. Thus (1) and (2) are equivalent.
(3) is equivalent to the condition that the gradient of $F_I(K^n,\underline{z})$ vanishes at $\underline{z}_0$. 
But 
\begin{equation*}
\underline{\nabla}F_I(\underline{z})=\frac{1}{||\underline{z}||^{(n-2)/2}}
\widehat{\underline{Ein}}_I(\underline{z})
\end{equation*}
so (1) and (3) are equivalent by Proposition \ref{pro:EI1}. The equivalence of (1) and (4) is also clear.
The condition (5) states that 
$$
\langle \underline{u}, \underline{Ein}(\underline{z}_0)\rangle=0
$$
holds for every $\underline{u}\in T_{\underline{z}_0}\cM_{||\underline{z}_0||}(K^n)$. Now every such 
$\underline{u}$ can be written in the form $\underline{u}=Q(\underline{z}_0)\underline{x}$ for some 
vector $\underline{x}$, since $Q(\underline{z}_0)$ is the orthogonal projection onto the tangent space
$T_{\underline{z}_0}\cM_{||\underline{z}_0||}(K^n)$. Therefore (5) is equivalent to 
$$
\langle Q(\underline{z}_0)\underline{x}, \underline{Ein}(\underline{z}_0)\rangle=0 
\quad\mbox{for all }\quad \underline{x}\in \R^{\bar{n}}.
$$
So (2) and (5) are equivalent. As for condition (6) let $\kappa_I$ denote the Lagrange 
multiplier associated to the constraint $\eqref{fC}$. The Euler-Lagrange equation is then 
just \eqref{ricci1}. Thus (1) and (6) are equivalent. Alternatively the equivalence of (5) and (6) is a 
consequence of the Euler- Lagrange variational principle, by which $\kappa_I$ is also fixed. (7) and (8) are 
consequences of 
(1) due to Proposition \ref{prop:11}. The converse is obvious.
It remains to prove the equivalence of (1) and (9). By Schwarz inequality and \eqref{scaling4}
$$
 \cR(\underline{z})^2\le 
\frac{4}{(n-2)^2}||\underline{z}||^2||\underline{Ein}(\underline{z})||^2
$$
holds with equality if and only if the vectors $\underline{Ein}(\underline{z})$ and 
$\underline{z}$ are collinear. If $\underline{Ein}(\underline{z}_0)=0$, that is if 
$(K^n,\underline{z}_0)$ is 
Einstein-flat, then $\underline{z}_0$ is an Einstein metric of type I and \eqref{ricci301} holds. 
But if $\underline{Ein}(\underline{z}_0)\neq 0$ then \eqref{ricci1} holds with $\kappa_I\neq 0$.
Thus (1) and (9) are equivalent.
\end{proof}
The r\^{o}les of $C(\underline{z})$ and $\cR(\underline{z})$ can almost be interchanged. 
In fact with 
$$
\widetilde{\widetilde{\cM}}_{\rho}(K^n)=\{\underline{z}\in\cM(K^n)\,|\,\cR(\underline{z})=\rho\}
$$ 
we have
\begin{corollary}\label{cor:EI}
Assume $\underline{z}_0\in \cM(K^n)$ is such that 
$\underline{Ein}(\underline{z}_0)\neq 0$.
Then the following two conditions are equivalent.
\begin{enumerate}
 \item{$\underline{z}_0$ is an Einstein metric of type I.}
\item{$\underline{z}_0$ is a critical point of $C(\underline{z})$ on 
$\widetilde{\widetilde{\cM}}_{\rho=R(\underline{z}_0)}(K^n)$.}
\end{enumerate}
\end{corollary}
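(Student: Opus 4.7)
The plan is to use Lagrange multipliers, exploiting the symmetry in the roles of $\cR$ and $C$ relative to condition (6) of Theorem \ref{theo:EI}. Recall that $\underline{\nabla}C(\underline{z})=\underline{z}$ (the tautological field) and $\underline{\nabla}\cR(\underline{z})=\underline{Ein}(\underline{z})$. Under the standing assumption that $\underline{Ein}(\underline{z}_0)\neq 0$, the value $\rho=\cR(\underline{z}_0)$ is a regular value of $\cR$ in a neighborhood of $\underline{z}_0$, so the level set $\widetilde{\widetilde{\cM}}_{\rho}(K^n)$ is a smooth codimension-one submanifold of $\cM(K^n)$ near $\underline{z}_0$, with tangent space $(\underline{Ein}(\underline{z}_0))^\perp$.

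For the direction (1)$\Rightarrow$(2), suppose $\underline{z}_0$ is Einstein of type I, so $\underline{Ein}(\underline{z}_0)=\kappa_I\,\underline{z}_0$. Since $\underline{Ein}(\underline{z}_0)\neq 0$ and $\underline{z}_0\neq 0$ (all components are strictly positive), we must have $\kappa_I\neq 0$. Dividing by $\kappa_I$ gives $\underline{\nabla}C(\underline{z}_0)=\underline{z}_0=\kappa_I^{-1}\underline{\nabla}\cR(\underline{z}_0)$, which is the Lagrange multiplier condition expressing that $\underline{z}_0$ is a critical point of $C$ restricted to $\widetilde{\widetilde{\cM}}_{\rho=\cR(\underline{z}_0)}(K^n)$.

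For the converse (2)$\Rightarrow$(1), if $\underline{z}_0$ is a critical point of $C$ on the level set, then by the Lagrange multiplier theorem there exists $\mu\in\R$ with $\underline{z}_0=\mu\,\underline{Ein}(\underline{z}_0)$. Because $\underline{z}_0\neq 0$ we conclude $\mu\neq 0$, and setting $\kappa_I=\mu^{-1}$ yields $\underline{Ein}(\underline{z}_0)=\kappa_I\,\underline{z}_0$, so $\underline{z}_0$ is Einstein of type I.

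There is really no substantial obstacle here; the only point requiring care is justifying that the constraint surface is smooth at $\underline{z}_0$ so that the Lagrange multiplier theorem applies, and this is precisely what the hypothesis $\underline{Ein}(\underline{z}_0)\neq 0$ provides (since $\underline{Ein}=\underline{\nabla}\cR$). The hypothesis is also what forces $\kappa_I\neq 0$ and so rules out the Einstein-flat case, in which $\cR(\underline{z}_0)=0$ by Proposition \ref{prop:1} and the level set $\{\cR=0\}$ need not even be a manifold at $\underline{z}_0$.
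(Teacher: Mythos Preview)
Your proof is correct and follows essentially the same approach as the paper: both swap the roles of $\cR$ and $C$ and apply the Lagrange multiplier principle with multiplier $1/\kappa_I$. Your version is more carefully written, making explicit why the hypothesis $\underline{Ein}(\underline{z}_0)\neq 0$ is needed (both for the constraint qualification and to guarantee $\kappa_I\neq 0$), whereas the paper's proof is a one-line sketch.
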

\begin{proof}
Take $\cR(\underline{z})-\cR(\underline{z}_0)$ as a constraint and take $1/\kappa_I$ to be the 
the Lagrange multiplier. With $C(\underline{z})$ as the Lagrange function the claim follows.
\end{proof}

\vspace{0.5cm}
\subsection{Einstein metrics of type II}~~\\
\begin{lemma}\label{lem:EII}
If $(K^n,\underline{z}_0)$ is a p.l. Einstein space of type II, which is not Einstein-flat, then 
\begin{equation}\label{EII1neq0}
 \langle\underline{v}(\underline{z}_0),\underline{Ein}(\underline{z}_0)\rangle\neq 0.
\end{equation}
\end{lemma}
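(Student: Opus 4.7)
The plan is to start directly from the defining relation \eqref{vol1} for an Einstein metric of type II, namely
$\underline{Ein}(\underline{z}_0) = \kappa_{II}\,\underline{v}(\underline{z}_0)$,
and take the euclidean scalar product of both sides with $\underline{v}(\underline{z}_0)$. This gives the key identity
$\langle \underline{v}(\underline{z}_0), \underline{Ein}(\underline{z}_0)\rangle = \kappa_{II}\,\|\underline{v}(\underline{z}_0)\|^2$,
so proving \eqref{EII1neq0} reduces to showing that the right-hand side cannot vanish under the hypothesis.

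Next I would rule out the two obstructions separately. First, if $\kappa_{II} = 0$ then the defining relation \eqref{vol1} forces $\underline{Ein}(\underline{z}_0) = 0$, i.e.\ $(K^n,\underline{z}_0)$ is Einstein-flat, contradicting the hypothesis. Second, to rule out $\underline{v}(\underline{z}_0) = 0$, I would appeal to Euler's relation for the volume, recorded in \eqref{volgrad3}:
\begin{equation*}
\langle \underline{z}_0, \underline{v}(\underline{z}_0)\rangle = \tfrac{n}{2} V(\underline{z}_0).
\end{equation*}
Since the volume of a p.l.\ space is strictly positive by definition, the left-hand side is nonzero, so $\underline{v}(\underline{z}_0) \neq 0$ and in particular $\|\underline{v}(\underline{z}_0)\|^2 > 0$.

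Combining these two observations, both factors on the right-hand side of the key identity are nonzero, and hence $\langle \underline{v}(\underline{z}_0), \underline{Ein}(\underline{z}_0)\rangle \neq 0$. There is essentially no obstacle here: the argument is a one-line consequence of the definition plus Euler's relation, and the only subtlety worth flagging is the invocation of \eqref{volgrad3} to certify $\underline{v}(\underline{z}_0) \neq 0$, which is why it is important that the volume is strictly positive on $\cM(K^n)$.
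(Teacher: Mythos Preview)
Your proof is correct. Both you and the paper start from the defining relation \eqref{vol1} and take a scalar product, but with different vectors: you pair with $\underline{v}(\underline{z}_0)$, obtaining $\langle\underline{v}(\underline{z}_0),\underline{Ein}(\underline{z}_0)\rangle=\kappa_{II}\|\underline{v}(\underline{z}_0)\|^2$, whereas the paper pairs with $\underline{Ein}(\underline{z}_0)$, obtaining $\|\underline{Ein}(\underline{z}_0)\|^2=\kappa_{II}\langle\underline{v}(\underline{z}_0),\underline{Ein}(\underline{z}_0)\rangle$. The paper's route is marginally shorter, since from the latter identity the vanishing of $\langle\underline{v}(\underline{z}_0),\underline{Ein}(\underline{z}_0)\rangle$ immediately forces $\underline{Ein}(\underline{z}_0)=0$ without any side arguments. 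Your route needs the two extra checks $\kappa_{II}\neq 0$ and $\underline{v}(\underline{z}_0)\neq 0$; the second of these brings in Euler's relation \eqref{volgrad3}, which the paper avoids here but does invoke later (after Proposition~\ref{pro:EII1}) for exactly the same purpose. So the two arguments are close variants, with the paper's being slightly more self-contained for this lemma.
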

\begin{proof}
Take the scalar product of \eqref{vol1} with $\underline{Ein}(\underline{z}_0)$, which gives
\begin{equation}\label{EII1neq}
 ||\underline{Ein}(\underline{z}_0)||^2-\kappa_{II}
\langle\underline{v}(\underline{z}_0),\underline{Ein}(\underline{z}_0)\rangle=0.
\end{equation}
Assume now that \eqref{EII1neq0} is not valid, that is its left hand side vanishes. But then 
$\underline{Ein}(\underline{z}_0)=0$, contradicting the assumption.
\end{proof}
The following result is analogous to the one given in Proposition \ref{pro:EI1}.
\begin{proposition}\label{pro:EII1}
Let $(K^n,\underline{z}_0)$ be a p.l. Einstein space of type II, which is not Einstein-flat. Then $\kappa_{II}$ is 
given 
in terms of the total scalar curvature $\cR(\underline{z})$ \eqref{totalscalar1} and the volume as 
$\kappa_{II}^{(1)}(\underline{z}_0)$ where
\begin{equation}\label{ricci31}
\kappa_{II}^{(1)}(\underline{z})
=\frac{n-2}{n}\overline{\cR}(\underline{z})
\end{equation}
with $\overline{\cR}(\underline{z})$ denoting the {\rm average scalar curvature}, see \eqref{avtotalscalar}. $\kappa_{II}$ is also given as $\kappa_{II}^{(2)}(\underline{z}_0)$ 
where
\begin{equation*}
\kappa_{II}^{(2)}(\underline{z})=
\frac{\langle \underline{v}(\underline{z}),
\underline{Ein}(\underline{z})\rangle}{||\underline{v}(\underline{z})||^2}.
\end{equation*}
$\kappa_{II}^{(2)}(\underline{z})$ is well defined for all $\underline{z}\in\cM(K^n)$.
Finally $\kappa_{II}$ is also given as $\kappa_{II}^{(3)}(\underline{z}_0)$ where
\begin{equation}\label{ricci312}
\kappa_{II}^{(3)}(\underline{z})=\frac{||\underline{Ein}(\underline{z})||^2}
{\langle\underline{v}(\underline{z}),\underline{Ein}(\underline{z})\rangle}
\end{equation}
which is well defined for all $\underline{z}\in \cM(K^n)$ with 
$\langle\underline{v}(\underline{z}),\underline{Ein}(\underline{z})\rangle\neq 0$.
\end{proposition}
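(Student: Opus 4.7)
The plan is to mimic the strategy already used in the proof of Proposition \ref{pro:EI1}: start from the defining Einstein-type-II equation
\[
\underline{Ein}(\underline{z}_0)-\kappa_{II}\,\underline{v}(\underline{z}_0)=0,
\]
and extract $\kappa_{II}$ by pairing this identity successively with the three natural vector fields available at $\underline{z}_0$, namely $\underline{z}_0$ itself, the volume gradient $\underline{v}(\underline{z}_0)$, and the Einstein vector field $\underline{Ein}(\underline{z}_0)$.

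First, taking the euclidean scalar product with $\underline{z}_0$ gives
\[
\langle\underline{z}_0,\underline{Ein}(\underline{z}_0)\rangle
=\kappa_{II}\,\langle\underline{z}_0,\underline{v}(\underline{z}_0)\rangle.
\]
The left-hand side equals $\tfrac{n-2}{2}\cR(\underline{z}_0)$ by Euler's relation \eqref{scaling4}, while the right-hand side equals $\kappa_{II}\,\tfrac{n}{2}V(\underline{z}_0)$ by \eqref{volgrad3}. Dividing yields $\kappa_{II}=\tfrac{n-2}{n}\,\overline{\cR}(\underline{z}_0)$, which is \eqref{ricci31}. Since $V(\underline{z}_0)>0$ always, this expression is well defined for every $\underline{z}\in\cM(K^n)$, as claimed.

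Second, pairing the Einstein equation with $\underline{v}(\underline{z}_0)$ gives
\[
\langle\underline{v}(\underline{z}_0),\underline{Ein}(\underline{z}_0)\rangle
=\kappa_{II}\,\|\underline{v}(\underline{z}_0)\|^{2}.
\]
Here one only needs $\underline{v}(\underline{z}_0)\neq 0$, which follows immediately from \eqref{volgrad3}, since $\langle\underline{z}_0,\underline{v}(\underline{z}_0)\rangle=\tfrac{n}{2}V(\underline{z}_0)>0$ forces $\underline{v}(\underline{z}_0)\neq 0$ for any metric. This yields the second formula and its global well-definedness on $\cM(K^n)$.

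Third, pairing with $\underline{Ein}(\underline{z}_0)$ produces
\[
\|\underline{Ein}(\underline{z}_0)\|^{2}
=\kappa_{II}\,\langle\underline{v}(\underline{z}_0),\underline{Ein}(\underline{z}_0)\rangle,
\]
which is exactly relation \eqref{EII1neq} appearing in the proof of Lemma \ref{lem:EII}. Under the standing hypothesis that $(K^n,\underline{z}_0)$ is not Einstein-flat, Lemma \ref{lem:EII} ensures that $\langle\underline{v}(\underline{z}_0),\underline{Ein}(\underline{z}_0)\rangle\neq 0$, so we may solve for $\kappa_{II}$ and obtain \eqref{ricci312}. Conversely, at any $\underline{z}\in\cM(K^n)$ with nonvanishing denominator $\langle\underline{v}(\underline{z}),\underline{Ein}(\underline{z})\rangle$, the right-hand side of \eqref{ricci312} is manifestly well defined, which is all that the statement asserts for general $\underline{z}$. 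No real obstacle arises; the only subtlety is the non-Einstein-flat hypothesis needed for the third formula, and this is precisely what Lemma \ref{lem:EII} was designed to supply.
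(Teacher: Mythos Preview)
Your proof is correct and follows essentially the same approach as the paper: you take the scalar product of the type~II Einstein equation with $\underline{z}_0$, $\underline{v}(\underline{z}_0)$, and $\underline{Ein}(\underline{z}_0)$ in turn, invoking Euler's relations \eqref{scaling4}, \eqref{volgrad3} and Lemma~\ref{lem:EII} exactly where the paper does. The only difference is that you spell out the details more explicitly than the paper's terse proof.
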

Observe that $\kappa_{II}^{(1)}(\underline{z}_0)=\kappa_{II}^{(2)}(\underline{z}_0)=\kappa_{II}^{(3)}(\underline{z}_0)=0$,
if $(K^n,\underline{z}_0)$ is Einstein-flat, that is if $\underline{Ein}(\underline{z}_0)=0$ holds. 
Conversely, if $\underline{Ein}(\underline{z}_0)\neq 0$, then also 
$\langle\underline{v}(\underline{z}_0),\underline{Ein}(\underline{z}_0)\rangle\neq 0$ by \eqref{EII1neq}.
Observe also that like the volume $V(\underline{z})$ its gradient $\underline{v}(\underline{z})$ never vanishes on $\cM(K^n)$ 
due to \eqref{volgrad3}.
\begin{proof}
Using the fact that $V(\underline{z})$ is homogeneous of degree $n/2$, the proof of the first claim follows the 
same line as the proof of Proposition \ref{pro:EI1}.
As for the second claim take the scalar product of \eqref{vol1} with $\underline{v}(\underline{z}_0)$.
As for the third claim, since $\underline{Ein}(\underline{z}_0)\neq 0$ by assumption, we have
$\langle\underline{v}(\underline{z}_0),\underline{Ein}(\underline{z}_0)\rangle\neq 0$ by Lemma \ref{lem:EII}. 
So the third claim follows from \eqref{EII1neq}.
\end{proof}
Set 
\begin{equation*}
\widetilde{\cM}_{v}(K^n)=\{\underline{z}\in\cM(K^n)\;|\; V(\underline{z})=v\}
\end{equation*}
with $v>0$. Also set
\begin{equation}\label{RicII}
\widehat{\underline{Ein}}_{II}(\underline{z})=\underline{Ein}(\underline{z})
-\frac{n-2}{n}\overline{\cR}(\underline{z})\,\underline{v}(\underline{z})
\end{equation}
which again is trace free, that is 
$$
\langle\underline{z},\widehat{\underline{Ein}}_{II}(\underline{z})\rangle=0
$$
or equivalently 
\begin{equation*}
Q(\underline{z})\widehat{\underline{Ein}}_{II}(\underline{z})=\widehat{\underline{Ein}}_{II}(\underline{z})
\end{equation*}
is valid for all $\underline{z}\in\cM(K^n)$.
The scaling behavior is
\begin{equation*}
 \underline{\widehat{Ric}}_{II}(\lambda\underline{z})=
 \lambda^{(n-4)/2}\underline{\widehat{Ric}}_{II}(\underline{z}),
\end{equation*}
which is the same as for $\underline{Ein}(\underline{z})$ itself.
\begin{theorem}\label{theo:EII}
 Let $\underline{z}_0\in\cM(K^n)$. The following conditions are equivalent.
\begin{enumerate}
 \item{$\underline{z}_0$ is an Einstein metric of type II.}
\item{$\widehat{\underline{Ein}}_{II}(\underline{z}_0)=0$.}
\item{$\underline{z_0}$ is a critical point of the scale invariant function
\begin{equation*}
 F_{II}(\underline{z})=\frac{1}{V(\underline{z})^{(n-2)/n}}\cR(\underline{z})
=\cR\left(\frac{1}{V(\underline{z})^{2/n}}\underline{z}\right).
 \end{equation*}
 }
\item{$\underline{z}_0$ is a critical point of the function
\begin{equation}\label{actionII}
A_{II}(\underline{z})=\cR(\underline{z})-\kappa_{II}V(\underline{z}). 
\end{equation}
}
\item{$\underline{z}_0$ is a critical point of the total scalar curvature $R(\underline{z})$ 
restricted to $\widetilde{\cM}_{v=V(\underline{z}_0)}(K^n)$.}
\item{$\underline{z}_0$ is a solution of the Euler-Lagrange equation, where the Lagrange function is the 
total scalar curvature and the constraint is the volume function $V(\underline{z})$.
}
\item{$\underline{z}_0$ satisfies 
\begin{equation*}
 \underline{Ein}(\underline{z}_0)=\frac{\langle \underline{v}(\underline{z}_0),
\underline{Ein}(\underline{z}_0)\rangle}{||\underline{v}(\underline{z}_0)||^2}\underline{v}(\underline{z}_0).
\end{equation*}
}
\item{$\underline{z}_0$ satisfies
\begin{equation*}
 ||\underline{Ein}(\underline{z}_0)||^2||\underline{v}(\underline{z}_0)||^2=
\langle \underline{v}(\underline{z}_0),
\underline{Ein}(\underline{z}_0)\rangle^2.
\end{equation*}
}
\end{enumerate}
\end{theorem}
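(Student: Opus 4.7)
The plan is to mirror the proof strategy of Theorem \ref{theo:EI} with the pair $(||\underline{z}||^2/2,\underline{z})$ replaced everywhere by $(V(\underline{z}),\underline{v}(\underline{z}))$. The starting observation is that Proposition \ref{pro:EII1} fixes the value of $\kappa_{II}$ in any Einstein metric of type II to be $\tfrac{n-2}{n}\overline{\cR}(\underline{z}_0)$, so that the defining relation \eqref{vol1} can be rewritten as
\begin{equation*}
\underline{Ein}(\underline{z}_0)-\frac{n-2}{n}\overline{\cR}(\underline{z}_0)\,\underline{v}(\underline{z}_0)=0.
\end{equation*}
Comparing with \eqref{RicII} gives at once $(1)\Longleftrightarrow(2)$.

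For $(1)\Longleftrightarrow(3)$ I would simply differentiate $F_{II}$: since $V$ is homogeneous of order $n/2$, a direct Leibniz computation yields $\underline{\nabla}F_{II}(\underline{z})=V(\underline{z})^{-(n-2)/n}\,\widehat{\underline{Ein}}_{II}(\underline{z})$, so the critical points of $F_{II}$ are exactly the zeros of $\widehat{\underline{Ein}}_{II}$. The equivalence $(1)\Longleftrightarrow(4)$ is immediate from $\underline{\nabla}A_{II}(\underline{z})=\underline{Ein}(\underline{z})-\kappa_{II}\,\underline{v}(\underline{z})$. For $(1)\Longleftrightarrow(6)$ I would invoke the Euler--Lagrange principle with Lagrangian $\cR$ and constraint $V-V(\underline{z}_0)$, which reproduces \eqref{vol1} with $\kappa_{II}$ as the Lagrange multiplier, and this also immediately gives $(6)\Longleftrightarrow(5)$ since $T_{\underline{z}_0}\widetilde{\cM}_{V(\underline{z}_0)}(K^n)=\underline{v}(\underline{z}_0)^{\perp}$ (the gradient of the constraint never vanishes on $\cM(K^n)$ by \eqref{volgrad3}).

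For $(1)\Longrightarrow(7)$ and $(1)\Longrightarrow(8)$ I would take the scalar product of \eqref{vol1} with $\underline{v}(\underline{z}_0)$ and solve for $\kappa_{II}$, giving the expression in (7); squaring then yields (8). The converses are the key analytic step: if (7) holds, (1) is immediate. For $(8)\Longrightarrow(1)$ I would use Cauchy--Schwarz: the stated equality forces $\underline{Ein}(\underline{z}_0)$ and $\underline{v}(\underline{z}_0)$ to be collinear, and since $\underline{v}(\underline{z}_0)\neq 0$ on $\cM(K^n)$, there exists a scalar $\kappa_{II}$ with $\underline{Ein}(\underline{z}_0)=\kappa_{II}\underline{v}(\underline{z}_0)$; in the Einstein-flat case (which is not excluded here) collinearity is vacuous but the statement is satisfied with $\kappa_{II}=0$.

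The main obstacle, as in Theorem \ref{theo:EI}, is \emph{not} any single implication but the bookkeeping that guarantees the arguments remain valid in both the Einstein-flat and non-Einstein-flat cases. In particular one must check that $\kappa_{II}$ is unambiguously determined in (1) (this is Proposition \ref{pro:EII1}, together with Lemma \ref{lem:EII} which ensures $\langle\underline{v},\underline{Ein}\rangle\neq 0$ whenever $\underline{Ein}\neq 0$), and that the Cauchy--Schwarz equality case in $(8)\Longrightarrow(1)$ does not collapse trivially. All scaling consistency is automatic from the homogeneity degrees \eqref{volumescaling}, \eqref{Ricscale} and \eqref{volgrad2}.
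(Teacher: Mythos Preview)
Your proof is correct and follows essentially the same route as the paper's: both use Proposition \ref{pro:EII1} for $(1)\Leftrightarrow(2)$ and $(1)\Leftrightarrow(7)$, the gradient formula $\underline{\nabla}F_{II}=V^{-(n-2)/n}\widehat{\underline{Ein}}_{II}$ for $(2)\Leftrightarrow(3)$, and the Cauchy--Schwarz equality case for $(7)\Leftrightarrow(8)$. Your treatment is in fact more explicit than the paper's (which is quite terse), particularly in handling the Einstein-flat case and in noting that $\underline{v}(\underline{z}_0)\neq 0$ by \eqref{volgrad3}.
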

Again this theorem compares with Theorem 4.21 in \cite{Besse} and $F_{II}$ compares with \eqref{g:18}.
\begin{proof}
In view of Proposition \ref{pro:EII1} (1) and (2) are equivalent as are (1) and (7). 
The condition (3) says that the gradient of $F_{II}(\underline{z},K^n)$ should vanish for 
$\underline{z}=\underline{z}_0$. But 
\begin{equation}\label{FIIgrad}
\underline{\nabla} F_{II}(\underline{z})=\frac{1}{V(\underline{z})^{(n-2)/n}}
\widehat{\underline{Ein}}_{II}(\underline{z})
\end{equation}
which shows the equivalence of (2) and (3). 
The equivalence of (1) and (4) 
is also clear as is the equivalence of both (5) and (6) with (4). The equivalence of (1) with (7) is also clear.
The equivalence of (7) with (8) follows from Schwarz inequality.
\end{proof}
The analogue of these two actions \eqref{actionI} and \eqref{actionII} is in the smooth case given by \eqref{g:24}.
In dimensions $n=3,4$ the p.l.\ version of the Einstein equations without a cosmological term, that is the equation
$\underline{Ein}(\underline{z})=0$, has already been given and discussed by Regge \cite{Regge}.
The analogue to the relations \eqref{ricci3}, \eqref{ricci30}, \eqref{ricci300}, \eqref{ricci31}, 
\eqref{ricci312} and 
\eqref{ricci312} in the smooth case is given by relation \eqref{g:9}.

\vspace{0.5cm}

\subsection{Examples}\label{subsec:ex}~~\\
First we provide an example of a p.l. Einstein-flat space. It is modeled on the $n$-torus $T^n$, which we recall 
is obtained as follows.
On $\R^n$ the group $\Z^n$ acts in a natural way as a transformation group. The $n$-torus is then just the quotient space 
$\R^n/\Z^n$. Consider a triangulation of $\R^n$ which is invariant under $\Z^n$. Such a triangulation is easy 
to construct. Indeed it 
suffices to construct a suitable triangulation on an $n$-cube. This is done by induction on $n$. For $n=1$, 
the closed unit interval $[0,1]$, 
declare the two endpoints to be vertices and in addition consider the barycenter, that is the point $1/2$, to be the additional vertex.
The intervals $[0,1/2]$ and $[1/2,1]$ are the two 1-simplexes. Now consider an $n$-cube. 
For each of its $2n$ faces, which are $(n-1)$ cubes,
by the induction assumption we can construct a triangulation. Add the barycenter $\rv_b$ of the $n$-cube 
as a new vertex. In addition to the 
simplexes on the faces by fiat the new simplexes are of the form $\sigma^k\cup\{\rv_b\}$, 
where $\sigma^k$ is any simplex in any of the faces of 
the $n$-cube. This completes the induction step.
This triangulation of $\R^n$ induces a triangulation of $T^n$, denoted by $\mathcal{T}^n$. The edge lengths are of course induced by the 
euclidean metric on $\R^n$.
\begin{example}\label{ex:0}
 $\mathcal{T}^n$ for $n\ge 3$ is a p.l. Einstein space of both types, which in addition is Ricci-flat. 
\end{example}
\begin{proof}
It is clear that the deficit angle at any $\sigma^{n-2}$ vanishes, thus not only the average scalar curvature vanishes but 
also $\underline{Ein}$ due to \eqref{totalscalar3}.  
\end{proof}
Also for a given pseudomanifold $K^n$ and given $\sigma^{l}\in K^n$, let $N_k(\sigma^{l};K^n),\;k>l$ 
denote the number of $k$-simplexes
which contain $\sigma^{l}$, that is $N_k(\sigma^{l};K^n)=\sharp\{\sigma^{k}\in K^n|\sigma^{k}\supset\sigma^{l}\}$.
For any $a>0$, let $\underline{a}$ denote the metric by which 
$z_{\sigma^1}=a$ holds for all $\sigma^1$, that is all edge lengths are equal to $\sqrt{a}$, and 
$||\underline{a}||=\sqrt{n_1(K^n)}a$.

Finally let $N_k(K^n)$ denote the total number of $k-$simplexes in $K^n$. The following lemma provides a 
sufficient condition for a pseudomanifold $K^n$ to carry an Einstein metric.
\begin{lemma}\label{lem:einstein}
 Let $K^n$ be such that all numbers $N_n(\sigma^{n-2};K^n)$ are equal $(=\overline{N}_1)$ as well 
as all
 $N_{n-2}(\sigma^1;K^n)\;(=\overline{N}_2)$. Then $\underline{a}$ is an Einstein metric on $K^n$ of type I. 
If in addition all 
$N_{n}(\sigma^1,K^n)$ are equal $(=\bar{N}_3)$, then $\underline{a}$ is also an Einstein metric of type II.
\end{lemma}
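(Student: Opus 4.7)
The plan is to exploit the high symmetry of the uniform metric $\underline{a}$: at this metric every $k$-simplex of $K^n$ is realised as a \emph{regular} euclidean $k$-simplex of edge length $\sqrt{a}$, so all quantities attached to simplexes of a given dimension take a single common value. I shall show that each component $Ein_{\sigma^1}(\underline{a})$ is in fact independent of $\sigma^1$, which will force $\underline{Ein}(\underline{a})$ to be proportional to $\underline{a}$, and (under the further hypothesis) also to $\underline{v}(\underline{a})$.

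First I would record the symmetry input. Let $S_{k+1}$ act on the vertices of a regular euclidean $k$-simplex by permutation; this action preserves the simplex and is transitive on its edges. By Lemma \ref{sympol}, $\det A(\underline{z}(\sigma^k))$ is symmetric in the variables $z_{\sigma^1}$ with $\sigma^1\subseteq\sigma^k$, and by \eqref{dervol} the same symmetry under any permutation of the edges of $\sigma^k$ is inherited by $\partial^{\sigma^1}|\sigma^k|(\underline{z})$. Evaluated at the fully symmetric point $\underline{a}$, this symmetry gives
\begin{equation*}
\partial^{\sigma^1}|\sigma^k|(\underline{a}) \;=\; d_k(a) \qquad \text{for every } \sigma^1\subseteq\sigma^k,
\end{equation*}
where $d_k(a)$ is a universal constant depending only on $k$ and $a$. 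Similarly, the dihedral angle $(\sigma^{n-2},\sigma^n)(\underline{a})$ depends only on the combinatorial type (regular $n$-simplex, regular $(n-2)$-face), so it equals a common value $\alpha_0$ for all incident pairs.

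Next I would apply the explicit formula \eqref{totalscalar3} for the Einstein vector field. Using the first hypothesis $N_n(\sigma^{n-2};K^n)=\overline{N}_1$, the sectional curvature \eqref{totalscalar4} becomes constant:
\begin{equation*}
Sec_{\sigma^{n-2}}(\underline{a}) \;=\; 1-\overline{N}_1\,\alpha_0 \;=:\; S_0,
\end{equation*}
for every $\sigma^{n-2}\in K^n$. Combining with the symmetry statement above and the second hypothesis $N_{n-2}(\sigma^1;K^n)=\overline{N}_2$, I obtain
\begin{equation*}
Ein_{\sigma^1}(\underline{a}) \;=\; S_0 \sum_{\sigma^{n-2}\supset\sigma^1} \partial^{\sigma^1}|\sigma^{n-2}|(\underline{a}) \;=\; S_0\,d_{n-2}(a)\,\overline{N}_2,
\end{equation*}
which is independent of $\sigma^1$. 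Hence $\underline{Ein}(\underline{a}) = c_I(a)\,\underline{1} = (c_I(a)/a)\,\underline{a}$ with $c_I(a)=S_0\,d_{n-2}(a)\,\overline{N}_2$, which is precisely the condition \eqref{ricci1} for $\underline{a}$ to be Einstein of type I with $\kappa_I=c_I(a)/a$.

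For the type II claim I would repeat the same reasoning for $\underline{v}$. Using only the volume formula $V = \sum_{\sigma^n}|\sigma^n|$ and the third hypothesis $N_n(\sigma^1;K^n)=\overline{N}_3$, the same symmetry gives
\begin{equation*}
v_{\sigma^1}(\underline{a}) \;=\; \sum_{\sigma^n\supset\sigma^1} \partial^{\sigma^1}|\sigma^n|(\underline{a}) \;=\; d_n(a)\,\overline{N}_3,
\end{equation*}
again independent of $\sigma^1$. Since both $\underline{Ein}(\underline{a})$ and $\underline{v}(\underline{a})$ are then scalar multiples of $\underline{1}$, they are proportional and \eqref{vol1} holds with $\kappa_{II}=c_I(a)/(d_n(a)\,\overline{N}_3)$ (well defined because $V(\underline{a})>0$ forces $d_n(a)\,\overline{N}_3\neq 0$ via Euler's relation applied to $V$). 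The only real content of the argument is the symmetry of the derivatives $\partial^{\sigma^1}|\sigma^k|$ at the equilateral point, which I expect to be the subtle step; everything else then reduces to counting incidences.
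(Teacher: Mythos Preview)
Your proof is correct and follows essentially the same route as the paper: both observe that at the equilateral metric $\underline{a}$ the quantities $\partial^{\sigma^1}|\sigma^{k}|(\underline{a})$ and $(\sigma^{n-2},\sigma^n)(\underline{a})$ depend only on the dimensions involved, then use the incidence hypotheses $\overline{N}_1,\overline{N}_2,\overline{N}_3$ to render $Ein_{\sigma^1}(\underline{a})$ and $v_{\sigma^1}(\underline{a})$ independent of $\sigma^1$. Your justification of the symmetry step via Lemma~\ref{sympol} and your observation that $d_n(a)\overline{N}_3\neq 0$ via Euler's relation are slightly more explicit than the paper's account, but there is no substantive difference in strategy.
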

\begin{proof}
 Observe first that $\partial^{\sigma^1}|\sigma^{n-2}|(\underline{a})$ is independent of 
 $\sigma^1$ and $\sigma^{n-2}$ with 
$\sigma^1\in\sigma^{n-2}$ (and of course zero otherwise). It depends only on $n$ and $a$, 
is of the form  
$g(n-2)a^{(n-4)/2}$, where $g(n)$ will be given below, see \eqref{partialvol}.
Similarly the dihedral angle $(\sigma^{n-2},\sigma^{n})(\underline{a})$ only depends on $n$, 
$(\sigma^{n-2},\sigma^{n})(\underline{a})=\phi(n)$ and 
is also given below, see \eqref{dihedralangle}.
Therefore 
$$
Ein_{\sigma^1}(K^n,\underline{a})=\overline{N}_2\left(1-\overline{N}_1 
\phi(n)\right)\,g(n-2)\,a^{\frac{n-4}{2}}
$$
holds and is independent of $\sigma^1$. The first part of the lemma follows. 
As for the second part, the last assumption means that 
$$
v_{\sigma^1}(K^n,\underline{a})=\overline{N}_3\,g(n)\,a^{\frac{n-2}{2}},
$$
which is independent of $\sigma^1$. The second part of the lemma follows.
\end{proof}
As an application we obtain
\begin{example}\label{ex:1}
$(\partial\sigma^{n+1},\underline{a}); n\ge 3$ is an Einstein space of both types. 
With the choice 
\begin{equation*}
\kappa_I=\frac{1}{a}\,Ein_{\sigma^1}(\partial\sigma^{n+1},\underline{a})
\end{equation*}
the condition in \eqref{ricci1} is satisfied. The volume of any euclidean $n$-simplex with equal 
edge lengths $\sqrt{a}$ is known, 
see \cite{RHBuchholz}, 
\begin{equation}\label{Voleinstein}
 V(\sigma^{n},\underline{a})=\frac{a^{n/2}}{n!}\sqrt{\frac{n+1}{2^n}}.
\end{equation}
Since $\partial\sigma^{n+1}$ contains $n+2\; n$-simplexes this gives
\begin{equation*}
 V(\partial\sigma^{n+1},\underline{a})=\frac{(n+2)a^{n/2}}{n!}\sqrt{\frac{n+1}{2^n}}.
\end{equation*}
The dihedral angle is given as \cite{ParksWills}
\begin{equation}\label{dihedralangle}
\phi(n)= (\sigma^{n-2},\sigma^n)=\frac{1}{2\pi}
\arccos\frac{1}{n}.
\end{equation}
Also $g(n)$, defined in the proof of Lemma \ref{lem:einstein}, is given as 
\begin{equation}\label{partialvol}
 g(n)=\frac{1}{(n+1)!}\sqrt{\frac{n+1}{2^{n}}}.
\end{equation}
The total scalar curvature equals
\begin{equation}\label{einsteincurv}
 \cR(\partial\sigma^{n+1},\underline{a})=a^{(n-2)/2}
 \left(\begin{matrix}n+2\\2
\end{matrix}\right)\frac{1}{(n-2)!}\sqrt{\frac{n-1}{2^{n-2}}}
\left(1-\frac{3}{2\pi}\arccos\frac{1}{n}\right)
\end{equation}
and is in particular positive.
The Einstein vector field at $\underline{a}$ is given by
\begin{equation}\label{einsteinric}
Ein_{\sigma^1}(\partial\sigma^{n+1},\underline{a})=\frac{(n-2)}{2a}\cR(\partial\sigma^{n+1},\underline{a})
\qquad \mbox{for all}\quad \sigma^1.
\end{equation}
Similarly 
\begin{equation}\label{einsteinrvol}
 v_{\sigma^1}(\partial\sigma^{n+1},\underline{a})
 =\frac{1}{(n+1)a}V(\partial\sigma^{n+1},\underline{a}) \qquad \mbox{for all}\quad \sigma^1
\end{equation}
holds, so with the choice 
\begin{equation*}
\kappa_{II}=\frac{n-2}{n}\frac{\cR(\partial\sigma^{n+1},\underline{a})}{V(\partial\sigma^{n+1},\underline{a})}
\end{equation*}
compare \eqref{ricci31}, the condition in \eqref{vol1} is satisfied.
To sum up, $(\partial\sigma^{n+1},\underline{a})$ is a p.l. Einstein space of both types. 
\end{example}
The proofs of \eqref{einsteincurv} -- \eqref{einsteinrvol} will be given in Appendix \ref{app:1}.
Additional examples for p.l. Einstein spaces seem hard to come by. Thus we do not know 
whether Lemma \ref{lem:einstein} allows for other examples.
Also we do not know, whether, there exist pseudomanifolds 
having Einstein metrics of one type only.
Recall for comparison that the spheres $S^n$ with the round metric have constant sectional 
curvature and hence are Einstein manifolds, see e.g. \cite{Besse} p. 44.  
At present we do not know of any pseudomanifold $K^n$, which does not carry an Einstein metric (of either type).
However, there is the much weaker result, by which there are p.l. spaces, which are not p.l. Einstein spaces of 
type I.
\begin{example}\label{notE}
Consider any subdivision of 
$(\partial \sigma^{n+1},\underline{a})$ with the following property: It has at least one $1-$simplex,
whose star is contained in the interior of a euclidean  $n-$simplex of $\partial \sigma^{n+1}$. Such subdivisions 
can easily be constructed. Any such subdivision is Ricci-flat at at least one $1-$simplex  but not Ricci-flat. 
\end{example}


\section{Einstein flows}\label{sec:ricciflows}

In this section we will define Einstein flows and normalized Einstein flows.
In what follows, $K^n$ with $n\ge 3$ will be fixed, and again we will mostly leave $K^n$ out of the notation.

Given a pseudomanifold $K$, we would like to find an Einstein metric 
$\underline{z}_0$ of type I or II on $K^n$ through a flow on $\cM(K^n)$.

By proposition \ref{theo:EI} (3) a first idea would be 
to look for a minimum of $\cR(\underline{z})^2$. However, due to the scaling 
behavior \eqref{scaling3}
$$
\lim_{\lambda\downarrow 0}\cR(\lambda\underline{z})=0
$$
holds for any $\underline{z}\in \cM(K^n)$. In order to avoid this situation, one has to 
make a restriction. One possibility is to look for variations, which e.g. preserve the volume.

This will bring us to the concept of {\it normalized Einstein flow}, for an introduction see e.g. 
\cite{ChowLuNi}.
We recall that in the smooth case, a compact Einstein metric is a fixed point of the 
normalized Ricci flow, by which the volume is  preserved. 
Conversely, any fixed point of the normalized Ricci flow is an 
Einstein metric.

We start by defining the (unnormalized) Einstein flow equation as the gradient flow
\begin{equation}\label{Flow1}
\dot{\underline{z}}(t)=-2 \underline{Ein}(\underline{z}(t))
\end{equation}
on $\cM(K^n)$. Here and in what follows, $\;\dot{}\;$ denotes taking the time derivative $\rd/\rd t$. 
The factor 2 is in order to conform with the 
standard convention in the smooth case and 
can obviously be changed by a suitable rescaling of the time.

Flows starting at an Einstein metric of type I are particularly simple.
\begin{proposition}\label{prop:21}
Let $(K^n,\underline{z}_0)$ be a p.l. Einstein space of type I. Then
\begin{equation}\label{Flow121}
 \underline{z}(t)=f_n(t)\underline{z}_0
\end{equation}
is a solution to the flow equation \eqref{Flow1} with initial condition $\underline{z}(t=0)=\underline{z}_0$.
In particular $\underline{z}(t)$ is an Einstein metric of the same type.

With the notation
$$
\kappa(t)= \kappa_{I}(K^n,\underline{z}(t))
$$
and $\kappa=\kappa(t=0)$ for the initial value, the following relation is valid
\begin{equation}\label{kflow}
\kappa(t)=f_n(t)^{(n-6)/2}\kappa.
\end{equation}
For $n\neq 6$ $f_n(t)$ is of the form
\begin{equation}\label{fflow}
 f_n(t)=(1+(n-6)\kappa t)^{-\frac{2}{n-6}}
\end{equation}
valid for all $0\le t<\infty$ if $(n-6)\kappa>0$ and for all $0\le t <-((n-6)\kappa)^{-1}$ if 
$(n-6)\kappa<0$,
while
\begin{equation}\label{f6}
 f_6(t)=\e^{-2\kappa t}\quad \mbox{for all} \quad 0\le t<\infty.
\end{equation}
\end{proposition}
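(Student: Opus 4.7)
The plan is to exploit the scaling relation \eqref{Ricscale} together with the cone structure of $\cM(K^n)$ from Theorem \ref{theo:5} to reduce the flow equation on $\cM(K^n)$ to a scalar ODE. First, I would make the ansatz $\underline{z}(t) = f(t)\underline{z}_0$ with $f(0) = 1$. Since $\cM(K^n)$ is a convex cone and $\underline{z}_0 \in \cM(K^n)$, as long as $f(t) > 0$ the curve remains a genuine metric, so no extra feasibility check is needed beyond positivity of $f$.

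Next I would substitute the ansatz into \eqref{Flow1}. By the scaling law \eqref{Ricscale} and the Einstein-type-I hypothesis $\underline{Ein}(\underline{z}_0) = \kappa \underline{z}_0$,
\begin{equation*}
\underline{Ein}(f(t)\underline{z}_0) = f(t)^{(n-4)/2}\,\underline{Ein}(\underline{z}_0) = f(t)^{(n-4)/2}\kappa\,\underline{z}_0,
\end{equation*}
while $\dot{\underline{z}}(t) = \dot f(t)\,\underline{z}_0$. Comparing coefficients of the nonzero vector $\underline{z}_0$ reduces \eqref{Flow1} to the scalar Bernoulli-type ODE
\begin{equation*}
\dot f(t) = -2\kappa\, f(t)^{(n-4)/2}, \qquad f(0) = 1.
\end{equation*}
This already proves that $\underline{z}(t)$ is, at every time of existence, proportional to $\underline{z}_0$, hence an Einstein metric of type I with some multiplier $\kappa(t)$; comparing $\underline{Ein}(\underline{z}(t)) = \kappa(t)f(t)\underline{z}_0$ with the expression above gives $\kappa(t) = f(t)^{(n-6)/2}\kappa$, establishing \eqref{kflow}.

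Then I would integrate the ODE. For $n \neq 6$ the substitution $u = f^{(6-n)/2}$ linearizes it: separating variables and using $f(0)=1$ yields $f(t)^{-(n-6)/2} = 1 + (n-6)\kappa t$, which rearranges to \eqref{fflow}. For $n = 6$ the exponent $(n-4)/2$ equals $1$ and the equation is linear, giving \eqref{f6} immediately. Finally, the time intervals are read off from the requirement that $f(t)$ remain strictly positive so that $\underline{z}(t) \in \cM(K^n)$: if $(n-6)\kappa \geq 0$ the expression $1+(n-6)\kappa t$ is positive for all $t \ge 0$; if $(n-6)\kappa < 0$, positivity fails at $t = -((n-6)\kappa)^{-1}$, yielding the stated maximal interval.

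The only potential subtlety, and the one place where one must be careful, is the differentiability of $\underline{Ein}$ used in applying \eqref{Ricscale} pointwise along the trajectory; but differentiability of $\cR$ in $\underline{z}$ (and hence of $\underline{Ein}$) is noted right after \eqref{scaling4} and detailed in Appendix \ref{app:2}, so the ODE computation is justified and the rest is elementary integration.
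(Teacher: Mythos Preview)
Your proof is correct and follows essentially the same approach as the paper: both make the ansatz $\underline{z}(t)=f(t)\underline{z}_0$, use the scaling relation \eqref{Ricscale} together with the type~I Einstein condition to reduce \eqref{Flow1} to the scalar ODE $\dot f=-2\kappa f^{(n-4)/2}$, and then integrate by separation of variables. Your derivation of \eqref{kflow} directly from $\underline{Ein}(\underline{z}(t))=\kappa(t)\underline{z}(t)$ is a minor variant of the paper's appeal to \eqref{ricci3} and the scaling laws, but amounts to the same computation.
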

Thus when $(n-6)\kappa<0$, then $f_n$ tends to zero in finite time if $n<6$ and to infinity in finite time when
$n>6$.
So far we have not been able to prove an analogous result for Einstein metrics of type II.
\begin{proof}
Make \eqref{Flow121} an Ansatz. Then \eqref{ricci1} in case of type I  combined with \eqref{Ricscale} 
give the differential equation
\begin{equation}
\dot{f}_{n}(t)=-2\kappa f_{n}(t)^{(n-4)/2},
\end{equation}
which may be transformed into 
\begin{align*}
 \rd(f_n)^{-\frac{(n-6)}{2}}&=(n-6)\kappa\, \rd t,\: n\neq 6,\\\nonumber
 \rd\ln f_6&=-2\kappa\, \rd t.
\end{align*}
Combined with the initial condition $f_{n}(t=0)=1$ this easily gives \eqref{fflow} and \eqref{f6}.
\eqref{kflow} follows from \eqref{ricci3} and \eqref{ricci31} and the scaling laws \eqref{volumescaling} and 
\eqref{scaling3}.
\end{proof}
In general, for a solution of \eqref{Flow1}
\begin{equation*}
 \frac{\rd}{\rd t}||\underline{z}(t)||^2=
-2(n-2)\cR(\underline{z}(t))
\end{equation*}
follows by \eqref{scaling4}. Since we only consider $n\ge 3$, under this flow 
$||\underline{z}(t)||$ increases if
$\cR(\underline{z}(t))<0$, decreases if $\cR(\underline{z}(t))>0$ and is stationary  
at times $t$ for which $\cR(\underline{z}(t))=0$. 

\vspace{0.5cm}
\subsection{Normalized Einstein flows of type I}~~\\
The {\it \underline{first} normalized Einstein flow of type I} is defined by the differential equation
\begin{equation}\label{Flow3}
\dot{\underline{z}}(t)=-2 \widehat{\underline{Ein}}_I(\underline{z}(t)).
\end{equation}
The {\it \underline{second} normalized Einstein flow of type I} is defined by the differential equation
\begin{equation}\label{Flow32}
\dot{\underline{z}}(t)=-2\underline{Ein}(\underline{z}(t))
+\frac{4}{n}\frac{\langle\underline{v}(\underline{z}(t)),\underline{Ein}(\underline{z}(t))\rangle}
{V(\underline{z}(t))}\underline{z}(t).
\end{equation}
The {\it \underline{third} normalized  Einstein flow of type I} is defined by the differential 
equation
\begin{equation}\label{Flow33}
\dot{\underline{z}}(t)=-2\underline{Ein}(\underline{z}(t))
+\frac{2}{n-2}\frac{||\underline{Ein}(\underline{z}(t))||^2}
{\cR(\underline{z}(t))}\underline{z}(t).
\end{equation}
By Theorem \ref{theo:EI} a p.l.\ Einstein metric of type I is a fixed point of all these
flow equations, whence the name flows of type I.
By standard results for non-linear differential equations all these equations have solutions 
$\underline{z}(t)$ for all small $t$ as long as the initial condition $\underline{z}(0)$ lies in $\cM(K^n)$.
For the third flow \eqref{Flow33} one has to assume $\cR(\underline{z}(0))\neq 0$ in addition.
\begin{proposition}\label{prop:3}
\begin{itemize}
\item{For any solution $\underline{z}(t)$ of the flow equation \eqref{Flow3} $||\underline{z}(t)||$ 
and $V(\underline{z}(t))$ 
are constant.
}
\item{For any solution $\underline{z}(t)$ of the flow equation \eqref{Flow32} the volume
$V(\underline{z}(t))$ is constant. 
}
\item{For any solution $\underline{z}(t)$ of the flow equation 
\eqref{Flow33} the total scalar curvature
$\cR(\underline{z}(t))$ is constant.
}
\end{itemize}
\end{proposition}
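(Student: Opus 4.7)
The plan for all three statements is identical in spirit: differentiate the purported conserved quantity in time, substitute the right-hand side of the defining flow equation, and then invoke Euler's relation (Lemma \ref{lem:euler}) in the two concrete forms \eqref{scaling4} and \eqref{volgrad3} to collapse the resulting scalar products. Nothing beyond algebraic bookkeeping on $\cM(K^n)$ is needed, because both $\underline{Ein}(\underline{z})$ and $\underline{v}(\underline{z})$ are by construction the $\underline{z}$-gradients of $\cR(\underline{z})$ and $V(\underline{z})$ respectively.

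For the first flow \eqref{Flow3} the claim $\|\underline{z}(t)\|=\mathrm{const}$ is essentially tautological from the definition of $\widehat{\underline{Ein}}_{I}=Q(\underline{z})\underline{Ein}(\underline{z})$. Since $Q(\underline{z})$ is by \eqref{pxx} the orthogonal projection onto $T_{\underline{z}}\cM_{\|\underline{z}\|}(K^{n})$, one gets
$$
\tfrac{d}{dt}\tfrac{1}{2}\|\underline{z}(t)\|^{2}=\langle\underline{z}(t),\dot{\underline{z}}(t)\rangle=-2\langle\underline{z}(t),Q(\underline{z}(t))\underline{Ein}(\underline{z}(t))\rangle=0.
$$
For the companion assertion that $V(\underline{z}(t))$ is also constant I would expand \eqref{ricchat} to rewrite the flow as $\dot{\underline{z}}=-2\underline{Ein}+(n-2)\cR\,\underline{z}/\|\underline{z}\|^{2}$, compute $\tfrac{d}{dt}V=\langle\underline{v},\dot{\underline{z}}\rangle$, and apply $\langle\underline{z},\underline{v}\rangle=\tfrac{n}{2}V$ from \eqref{volgrad3} to collect the surviving terms. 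This is the step I expect to be the main obstacle: a priori the calculation leaves $-2\langle\underline{v},\underline{Ein}\rangle+\tfrac{n(n-2)\cR V}{2\|\underline{z}\|^{2}}$, and one has to see whether the two contributions cancel identically or only on a distinguished subset of $\cM(K^n)$.

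For the second flow \eqref{Flow32} the calculation goes through cleanly: using $\underline{v}=\underline{\nabla}V$,
$$
\tfrac{d}{dt}V=\langle\underline{v},\dot{\underline{z}}\rangle=-2\langle\underline{v},\underline{Ein}\rangle+\tfrac{4}{n}\tfrac{\langle\underline{v},\underline{Ein}\rangle}{V}\langle\underline{v},\underline{z}\rangle,
$$
and \eqref{volgrad3} makes $\langle\underline{v},\underline{z}\rangle=\tfrac{n}{2}V$, so the two contributions cancel and $V$ is conserved. The normalization in \eqref{Flow32} has been tuned precisely for this. For the third flow \eqref{Flow33} the analogous step is
$$
\tfrac{d}{dt}\cR=\langle\underline{Ein},\dot{\underline{z}}\rangle=-2\|\underline{Ein}\|^{2}+\tfrac{2}{n-2}\tfrac{\|\underline{Ein}\|^{2}}{\cR}\langle\underline{Ein},\underline{z}\rangle,
$$
into which I plug $\langle\underline{Ein},\underline{z}\rangle=\tfrac{n-2}{2}\cR$ from \eqref{scaling4}. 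The outcome is a multiple of $\|\underline{Ein}\|^{2}$, and the whole point is to check that the numerical coefficient of the correction term in \eqref{Flow33} is precisely the one producing the required cancellation. In every case, then, the proof reduces to matching a single numerical coefficient against an Euler identity; the only real mathematical content is the scaling information encoded in \eqref{scaling4} and \eqref{volgrad3}.
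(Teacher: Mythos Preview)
Your overall strategy is exactly that of the paper: differentiate the candidate invariant, insert the flow, and clear the result using \eqref{scaling4} or \eqref{volgrad3}. For the second bullet and for $\|\underline{z}(t)\|$ under \eqref{Flow3} your computations match the paper verbatim. For the third bullet your calculation is correct in spirit, but note that if you actually carry out the check with the coefficient $\tfrac{2}{n-2}$ as printed in \eqref{Flow33} you get $-\|\underline{Ein}\|^{2}$, not $0$; the paper's own proof (and the computation \eqref{zflow} later on) uses $\tfrac{4}{n-2}$, so \eqref{Flow33} evidently has a factor-of-two typo. With the intended coefficient the cancellation works exactly as you describe.

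Your reservation about the claim $V(\underline{z}(t))=\mathrm{const}$ under the first flow \eqref{Flow3} is well founded, and this is a genuine issue rather than a gap in your argument. Carrying out the computation you set up gives
\[
\frac{\rd}{\rd t}V(\underline{z}(t))=-2\langle\underline{v},\underline{Ein}\rangle+\frac{n(n-2)}{2}\,\frac{\cR\,V}{\|\underline{z}\|^{2}},
\]
and there is no Euler-type identity forcing $\langle\underline{v},\underline{Ein}\rangle=\tfrac{n(n-2)}{4}\cR V/\|\underline{z}\|^{2}$ for a general metric. If you look at the paper's proof of this part, you will see that the line purporting to show $\dot V=0$ under \eqref{Flow3} in fact substitutes the right-hand side of \eqref{Flow32} (the coefficient $\tfrac{4}{n}\langle\underline{v},\underline{Ein}\rangle/V$ is visible), not of \eqref{Flow3}. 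So the paper's argument for this particular assertion is in error, and the claim that the volume is conserved under \eqref{Flow3} does not follow from the methods at hand; only $\|\underline{z}(t)\|$ is established. You should not try to close this gap---it is a slip in the paper, not in your reasoning.
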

Before we turn to a proof, we use this result to elaborate on the differential equation \eqref{Flow3}.
$\underline{\widehat{Ric}}(\underline{z}(t))$ can only become singular, when $\underline{Ein}(\underline{z}(t))$ 
becomes singular. By \eqref{totalscalar3} in turn this is only possible if 
$\partial^{\sigma^1}|\sigma^{n-2}|(\underline{z}(t))$ becomes 
singular for at least one pair $\sigma^1\subset \sigma^{n-2}$. Therefore by \eqref{dervol} 
the r.h.s.\ of \eqref{Flow3} can only become singular when at least one of the volumes 
$|\sigma^{n-2}|(\underline{z}(t))$ 
tends to zero. The two other flow equations may be discussed similarly.
\begin{proof}
\eqref{pxx}, \eqref{pzricI} and \eqref{Flow3} give
$$
\frac{\rd}{\rd t}\,\langle\underline{z}(t),\underline{z}(t)\rangle=
2\langle\underline{z}(t),\dot{\underline{z}}(t)\rangle=
-4\langle\underline{z}(t),\underline{\widehat{Ric}}_I(\underline{z}(t))\rangle=0,
$$
as well as 
\begin{align*}
\frac{\rd}{\rd t}\,V(\underline{z}(t))&=\langle\dot{\underline{z}}(t),\underline{v}(\underline{z}(t))\rangle\\\nonumber
&=-2\langle\underline{Ein}(\underline{z}(t)), \underline{v}(\underline{z}(t))\rangle
+\frac{4}{n}\frac{\langle\underline{Ein}(\underline{z}(t)), \underline{v}(\underline{z}(t))\rangle}{V(\underline{z}(t))}
\langle\underline{v}(\underline{z}(t)),\underline{z}(t)\rangle=0,
\end{align*}
which proves the first claim. As for the second claim
\begin{align*}
\frac{\rd}{\rd t}\,V(\underline{z}(t))&=\langle \dot{\underline{z}}(t),\underline{v}(\underline{z}(t))\rangle\\
&=-2\langle \underline{Ein}(\underline{z}(t), \underline{v}(\underline{z}(t))\rangle
+\frac{4}{n}\frac{\langle\underline{v}(\underline{z}(t),\underline{Ein}(\underline{z}(t))\rangle}
{V(\underline{z}(t))}\langle\underline{z}(t),\underline{v}(\underline{z}(t))\rangle =0.
\end{align*}
We have used \eqref{volgrad3}. The last claim also follows by arguments, which by now are standard
$$
\frac{\rd }{\rd t}\cR(\underline{z}(t))
=-2||\underline{Ein}(\underline{z}(t))||^2
+\frac{4}{n-2}\frac{||\underline{Ein}(\underline{z}(t))||^2}{\cR(\underline{z}(t))}
\langle \underline{z}, \underline{Ein}(\underline{z}(t))\rangle=0.
$$
\end{proof}
This result states that with initial condition $\underline{z}(0)$
\begin{itemize}
\item{
the first normalized Einstein flow of type I is a flow in 
$\cM_{r=||\underline{z}(0)||}(K^n)$,} 
\item{
the second normalized Einstein flow of type II is a flow  
in $\widetilde{\cM}_{v=V(\underline{z}(0))}(K^n)$,
}
\item{the third normalized Einstein flow of type III 
is a flow in 
$\widetilde{\widetilde{\cM}}_{\rho=\cR(\underline{z}(0))}(K^n)$.
}
\end{itemize}

If the initial condition $\underline{z}(0)$ happens to be such that $(K^n,\underline{z}(0))$ is Einstein-flat at 
a 1-simplex $\sigma^1$, then 
\begin{itemize}
 \item{$z_\sigma^1$ and hence also $l_{\sigma^1}$ increase for all small $t$ if the total scalar curvature 
 $\cR(\underline{z}(0))$ is strictly positive.}
 \item{$z_\sigma^1$ and hence also $l_{\sigma^1}$ decrease for all small $t$ if the total scalar curvature 
 $\cR(\underline{z}(0))$ is strictly negative.}
 \item{$z_\sigma^1(t)$ and hence also $l_{\sigma^1}(t)$ are stationary at $t=0$, if $\cR(\underline{z}(0))=0$.}
 \end{itemize}
The following example in 3 dimensions illustrates this point.
For $n=3$ by \eqref{totalscalar3} the Einstein vector field takes the form
\begin{equation}\label{ric=3}
 Ein_{\sigma^1}(K^{n=3},\underline{z})
 =\left(1-\sum_{\sigma^3\supset \sigma^1}\left(\sigma^1,\sigma^3\right)\right)\frac{1}{2\sqrt{z_{\sigma^1}}}.
\end{equation}
\begin{example}\label{ex:lengthincr}
Let $(K^{n=3\;\prime},\underline{z}^\prime)$ be a subdivision of $(K^{n=3},\underline{z})$. Since the
deficit angle around any 
$\sigma^{1\;\prime}\in \Theta^1(K^{n=3\;\prime},\underline{z}^\prime)$ 
vanishes - see the discussion of relation \eqref{totalscalar10} -
 $(K^{n=3\;\prime},\underline{z}^\prime)$ is Einstein-flat at such $\sigma^{1\;\prime}$.
\end{example}
Of special interest is the case $(K^{n=3},\underline{z})=(\partial \sigma^4,\underline{a})$, a p.l. Einstein 
space with positive total scalar curvature. We now make a specific choice of the 
subdivision, namely we take $(K^{n=3\,\prime},\underline{a}^\prime)$ to be the 
{\rm barycentric subdivision}. This has the advantage that the symmetry of 
$(\partial \sigma^4,\underline{a})$ 
under the group of permutations of the vertices is preserved.
\begin{proposition}\label{prop:barysub}
Under the barycentric subdivision  $(K^{n=3\,\prime},\underline{a}^\prime)$ of the 
p.l.\ Einstein space $(\partial \sigma^4,\underline{a})$, we have 
\begin{equation}\label{barysub}
\widehat{Ric}_{\sigma^{1\;\prime}}(K^{n=3\,\prime},\underline{a}^\prime)=
\begin{cases} < 0\quad\mbox{for} \quad \sigma^{1\;\prime}\in \Theta^1(K^{n=3\;\prime},\underline{a}^\prime)\\
  > 0\quad\mbox{for} \quad \sigma^{1\;\prime}\notin \Theta^1(K^{n=3\;\prime},\underline{a}^\prime).
\end{cases}
\end{equation}
Accordingly the lengths 
$\underline{z}_{\sigma^1}^\prime(t)$ increase or decrease 
for all small $t$ under the flow \eqref{Flow3} with initial 
condition $\underline{z}^\prime(t=0)=\underline{a}^\prime$.
Moreover 
$\widehat{Ric}_{\sigma^{1\;\prime}}(K^{n=3\,\prime},\underline{a}^\prime)$ takes the 
same value for all 
$\sigma^{1\;\prime}\notin \Theta^1(K^{n=3\;\prime},\underline{a}^\prime)$.
 In particular the barycentric subdivision of  the 
p.l. Einstein space $(\partial \sigma^4,\underline{a})$ is not a p.l. Einstein space.

\end{proposition}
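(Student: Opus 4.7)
The plan is to combine three ingredients: the vanishing of the Einstein vector field on $\Theta^1$-edges (inherited from the subdivision invariance of deficit angles discussed around \eqref{totalscalar10}), the trace-free identity \eqref{pzricI} satisfied by $\widehat{\underline{Ric}}_I$, and the $S_5$ symmetry enjoyed by $(\partial\sigma^4, \underline{a})$ and inherited by its barycentric subdivision.

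For the first line of \eqref{barysub}, I would observe that for any $\sigma^{1\prime} \in \Theta^1(K^{3\prime})$ the argument preceding \eqref{totalscalar10} gives $\delta(\sigma^{1\prime}) = 0$, so the explicit formula \eqref{ric=3} yields $Ein_{\sigma^{1\prime}}(K^{3\prime}, \underline{a}^\prime) = 0$. Substituting into the definition \eqref{ricchat} and invoking both subdivision invariance of the total scalar curvature and its strict positivity from Example \ref{ex:1} gives
\begin{equation*}
\widehat{Ric}_{\sigma^{1\prime}}(K^{3\prime}, \underline{a}^\prime) = -\frac{n-2}{2}\,\cR(\partial\sigma^4, \underline{a})\,\frac{z^\prime_{\sigma^{1\prime}}}{||\underline{a}^\prime||^2} < 0,
\end{equation*}
with $n=3$ here.

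For the second line, I would use that the symmetric group $S_5$ of permutations of the vertices of $\sigma^4$ acts by isometries on $(\partial\sigma^4, \underline{a})$, and since the barycentric subdivision is a natural construction this action lifts to an isometric $S_5$-action on $(K^{3\prime}, \underline{a}^\prime)$. The edges $\sigma^{1\prime} \notin \Theta^1$ are precisely those joining an original vertex to the barycenter of an incident original edge, and $S_5$ acts transitively on this collection. Because $\underline{z}^\prime$ and $\widehat{\underline{Ric}}_I$ depend only on the isometry class of the triangulated space, both $z^\prime_{\sigma^{1\prime}}$ and $\widehat{Ric}_{\sigma^{1\prime}}$ are constant on this orbit, which already establishes the equality claim of the proposition. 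The trace-free relation \eqref{pzricI} then reads
\begin{equation*}
0 = \sum_{\sigma^{1\prime} \in \Theta^1} z^\prime_{\sigma^{1\prime}}\widehat{Ric}_{\sigma^{1\prime}} + \sum_{\sigma^{1\prime} \notin \Theta^1} z^\prime_{\sigma^{1\prime}}\widehat{Ric}_{\sigma^{1\prime}},
\end{equation*}
in which the first sum is strictly negative by the previous step, so the second sum is strictly positive; together with the constancy just noted this forces each individual term $\widehat{Ric}_{\sigma^{1\prime}}$ with $\sigma^{1\prime} \notin \Theta^1$ to be positive.

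Finally, the flow assertion follows from $\dot z^\prime_{\sigma^{1\prime}}(0) = -2\widehat{Ric}_{\sigma^{1\prime}}(K^{3\prime}, \underline{a}^\prime)$ combined with continuity in $t$, so the signs above determine the direction of monotonicity for all small $t > 0$. Since $\widehat{\underline{Ric}}_I(\underline{a}^\prime) \ne 0$, the equivalence (1)$\Leftrightarrow$(2) in Theorem \ref{theo:EI} shows that $(K^{3\prime}, \underline{a}^\prime)$ fails to be a p.l. Einstein space of type I. I do not anticipate any serious obstacle; the only point requiring some care is verifying the $S_5$-equivariance of the gradient fields, but this is immediate because all quantities involved are intrinsic invariants of the metric simplicial complex.
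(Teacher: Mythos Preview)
Your argument is correct and mirrors the paper's own proof. The paper likewise relies on (i) the Einstein-flatness at $\Theta^1$-edges established in Example \ref{ex:lengthincr} together with the positivity of $\cR(\partial\sigma^4,\underline{a})$ to get the first case, (ii) the $\bS_5$-symmetry of the barycentric subdivision for the constancy on the complement of $\Theta^1$, and (iii) the tracelessness $\langle\underline{z}',\underline{\widehat{Ric}}_I(\underline{z}')\rangle=0$ to force positivity of the common value---exactly the three ingredients you identify.
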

Observe that for a barycentric subdivision 
$a^\prime_{\sigma^{1\,\prime}}
=\frac{1}{4}a_{\sigma^1}$ 
when $\sigma^{1\;\prime}\preceq\sigma^1$.

\begin{proof}
 The last claim follows by the symmetry of the barycentric subdivision mentioned above.
 Also this common value has to be positive by the first case in \eqref{barysub} and since $|\underline{z}(t)|^2$ is conserved 
 under the flow \eqref{Flow3} with initial condition $\underline{z}^\prime$ or 
 equivalently by the tracelessness of 
 $\underline{\widehat{Ric}}$, that is
 $\langle\underline{z}^\prime, \underline{\widehat{Ric}}(\underline{z}^\prime)\rangle=0$.
\end{proof}

By the scaling properties of the quantities involved, we immediately obtain
the following 
\begin{proposition}\label{prop:31}
Let $\underline{z}(t)$ be a solution of any of the three flow equations \eqref{Flow3}, \eqref{Flow32} and 
\eqref{Flow33}
with initial condition $\underline{z}(0)$ and 
let $\lambda>0$ be arbitrary.  
Then $\underline{z}^\lambda(t)=\lambda\underline{z}(\lambda^{(n-6)/2}t)$ is also a solution of the 
same flow equation with initial condition $\lambda\underline{z}(0)$.
\end{proposition}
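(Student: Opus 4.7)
The proof is a direct verification using the scaling behavior of all the quantities appearing in the three flow equations, so the plan is essentially bookkeeping of homogeneity degrees.

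First I would write the ansatz $\underline{z}^\lambda(t)=\lambda\,\underline{z}(\mu t)$ with $\mu=\lambda^{(n-6)/2}$ and compute the left hand side by the chain rule:
\[
\dot{\underline{z}}^{\lambda}(t)=\lambda\mu\,\dot{\underline{z}}(\mu t)=\lambda^{(n-4)/2}\,\dot{\underline{z}}(\mu t).
\]
Since $\underline{z}(\cdot)$ is assumed to solve the relevant flow equation at time $\mu t$, the task reduces to showing that the right hand side $F(\underline{z})$ of each of \eqref{Flow3}, \eqref{Flow32}, \eqref{Flow33} is homogeneous of degree $(n-4)/2$ in $\underline{z}$, i.e. $F(\lambda\underline{z})=\lambda^{(n-4)/2}F(\underline{z})$; the choice of $\mu$ is precisely what makes the two sides match, and the initial condition $\underline{z}^{\lambda}(0)=\lambda\underline{z}(0)$ is immediate.

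Next I would check the three homogeneity statements using the scaling laws already recorded in the paper. For \eqref{Flow3} this is literally the formula displayed just before Theorem \ref{theo:EI}, namely $\widehat{\underline{Ein}}_I(\lambda\underline{z})=\lambda^{(n-4)/2}\widehat{\underline{Ein}}_I(\underline{z})$. For \eqref{Flow32}, I use \eqref{Ricscale}, \eqref{volgrad2} and \eqref{volumescaling} to see
\[
\frac{\langle\underline{v}(\lambda\underline{z}),\underline{Ein}(\lambda\underline{z})\rangle}{V(\lambda\underline{z})}\,(\lambda\underline{z})=\frac{\lambda^{(n-2)/2}\lambda^{(n-4)/2}}{\lambda^{n/2}}\,\lambda\,\frac{\langle\underline{v}(\underline{z}),\underline{Ein}(\underline{z})\rangle}{V(\underline{z})}\,\underline{z}=\lambda^{(n-4)/2}\,\frac{\langle\underline{v}(\underline{z}),\underline{Ein}(\underline{z})\rangle}{V(\underline{z})}\,\underline{z},
\]
so both terms on the right hand side of \eqref{Flow32} scale with the same exponent $(n-4)/2$. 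For \eqref{Flow33} I use \eqref{Ricscale} and \eqref{scaling3} in the same way: $\|\underline{Ein}\|^2$ is of degree $n-4$, $\cR$ of degree $(n-2)/2$, and multiplication by $\underline{z}$ (degree $1$) again yields total degree $(n-4)/2$, matching that of $\underline{Ein}$.

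Combining these homogeneity statements with the chain rule computation of $\dot{\underline{z}}^\lambda(t)$ then finishes the argument in one line for each of the three flows. There is no real obstacle; the only mildly delicate point is to make sure that the exponent $(n-6)/2$ in the time rescaling $\mu$ is forced by the relation $\lambda\mu=\lambda^{(n-4)/2}$, which is exactly how the number $(n-6)/2$ arises (and which also explains the special behaviour at $n=6$ already visible in \eqref{f6}).
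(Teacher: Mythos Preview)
Your proof is correct and is exactly the explicit verification the paper has in mind when it says the result follows ``by the scaling properties of the quantities involved''; you have simply written out the bookkeeping that the paper omits.
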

Returning to \eqref{Flow1} and \eqref{Flow3}, by a proper scaling in space and time one 
can obtain a solution of the normalized Einstein flow 
from one of the Einstein flow itself. Indeed, let $\underline{z}(t)$ be a 
solution of the Einstein flow and set $\underline{\tilde{l}}(\tilde{t})=c(t)\underline{z}(t)$ with 
\begin{align*}
c(t)=\e^{\frac{2}{n}\int_0^t \cR(\underline{z}(s))\rd s},\qquad \tilde{t}(t)=\int_0^t c(s)\rd s.
\end{align*}
Then $\underline{\tilde{z}}(\tilde{t})$ is a solution of the first normalized Einstein flow. 
The proof is just as in the smooth case, see e.g. \cite{ChowLuNi}.

\begin{theorem}\label{theo:4}~~\\
\begin{itemize}
\item{Under the first normalized Einstein flow \eqref{Flow3} the total scalar curvature is a strictly decreasing 
function 
of $t$ except when $\underline{z}(t)$ is an Einstein metric of type I
\begin{equation}\label{RdotI}
\dot{\cR}(\underline{z}(t))= -2\,\Big|\Big|\underline{\widehat{Ric}}_I(\underline{z}(t))\Big|\Big|^2.
\end{equation}
}
\item{Let $\underline{z}(t)$ be a solution of the third normalized Einstein flow.
Assume $\underline{z}(t)$ is not an Einstein metric of type I and $\cR(\underline{z}(t))\neq 0$. Then 
$||\underline{z}(t)||$ is strictly increasing at $t$ if $\cR(\underline{z}(t))>0$ and  
strictly decreasing at $t$ if $\cR(\underline{z}(t))<0$.
}
\end{itemize}
\end{theorem}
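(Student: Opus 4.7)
My plan is to differentiate $\cR$ and $\|\underline{z}\|^2$ along the respective flows and reduce everything to the traceless decomposition \eqref{ricchat} together with Euler's relation \eqref{scaling4}. In both parts the proof is essentially algebraic once the right substitution is made, so I expect no serious analytic obstacle; the only subtle point in the second statement is that division by $\cR(\underline{z}(t))$ makes sense, and this is guaranteed because $\cR$ is conserved along \eqref{Flow33} by Proposition \ref{prop:3}.

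For the first statement, I apply the chain rule to get
\[
\dot{\cR}(\underline{z}(t)) = \langle \underline{Ein}(\underline{z}(t)), \dot{\underline{z}}(t)\rangle = -2\,\langle \underline{Ein}(\underline{z}(t)),\, \widehat{\underline{Ein}}_I(\underline{z}(t))\rangle.
\]
Then I use \eqref{ricchat} to split
\[
\underline{Ein}(\underline{z}) = \widehat{\underline{Ein}}_I(\underline{z}) + \frac{n-2}{2}\frac{\cR(\underline{z})}{\|\underline{z}\|^2}\,\underline{z},
\]
and invoke the trace-free property $\langle \underline{z},\widehat{\underline{Ein}}_I(\underline{z})\rangle = 0$ (equivalently \eqref{pzricI}) to collapse the inner product to $\|\widehat{\underline{Ein}}_I\|^2$. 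This gives \eqref{RdotI}. The strict-decrease clause follows because $\dot{\cR} = 0$ iff $\widehat{\underline{Ein}}_I(\underline{z}(t)) = 0$, which by Theorem \ref{theo:EI}(2) is equivalent to $\underline{z}(t)$ being Einstein of type I.

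For the second statement, I compute $\tfrac{d}{dt}\|\underline{z}\|^2 = 2\langle\underline{z}(t),\dot{\underline{z}}(t)\rangle$, substitute \eqref{Flow33}, and apply Euler's relation \eqref{scaling4} to the first term on the right, which produces
\[
\frac{d}{dt}\|\underline{z}(t)\|^2 = \frac{2}{(n-2)\,\cR(\underline{z}(t))}\Bigl(4\,\|\underline{Ein}(\underline{z}(t))\|^2\,\|\underline{z}(t)\|^2 - (n-2)^2\,\cR(\underline{z}(t))^2\Bigr).
\]
By the Cauchy--Schwarz inequality together with \eqref{scaling4}, one has
\[
(n-2)^2\cR(\underline{z})^2 = 4\,\langle \underline{z},\underline{Ein}(\underline{z})\rangle^2 \le 4\,\|\underline{z}\|^2\,\|\underline{Ein}(\underline{z})\|^2,
\]
with equality iff $\underline{z}$ and $\underline{Ein}(\underline{z})$ are collinear. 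By Theorem \ref{theo:EI}(9) this collinearity is equivalent to $\underline{z}$ being an Einstein metric of type I. Hence under the hypothesis that $\underline{z}(t)$ is not Einstein, the parenthesis on the right is strictly positive, and the sign of $\frac{d}{dt}\|\underline{z}(t)\|^2$ coincides with the sign of $\cR(\underline{z}(t))$, proving the claim.
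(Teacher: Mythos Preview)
Your proof is correct and follows essentially the same route as the paper: chain rule plus the trace-free decomposition \eqref{ricchat} for the first part, and differentiation of $\|\underline z\|^2$ along \eqref{Flow33} followed by Cauchy--Schwarz on the pair $(\underline z,\underline{Ein})$ for the second. The only cosmetic difference is that the paper keeps $\langle\underline{Ein},\underline z\rangle^2$ unexpanded in \eqref{zflow} and applies Schwarz directly, whereas you substitute Euler's relation first and then re-identify $(n-2)^2\cR^2$ with $4\langle\underline z,\underline{Ein}\rangle^2$.
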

Below, see Lemma \ref{lem:4}, we will see that $R(\underline{z})$ remains bounded, when $||\underline{z}||$ 
stays bounded.  
\begin{proof}
Taking derivative of $R(\underline{z}(t))$ w.r.t. $t$ and using \eqref{Flow3} gives
\begin{align}\label{derr}
\dot{\cR}(\underline{z}(t))&=\langle\dot{\underline{z}}(t),\underline{Ein}(\underline{z}(t))\rangle\\\nonumber&= 
-2\langle\underline{\widehat{Ric}}_I(\underline{z}(t)),\underline{Ein}(\underline{z}(t))\rangle
\end{align}
and \eqref{RdotI} follows by \eqref{ricchat}. As for the second claim we calculate
\begin{align}\label{zflow}
\frac{\rd}{\rd t}||\underline{z}(t)||^2&=-4\langle\underline{Ein}(\underline{z}(t)),\underline{z}(t)\rangle
+\frac{8}{n-2}\frac{||\underline{Ein}(\underline{z}(t)||^2}{\cR(\underline{z}(t))}||\underline{z}(t)||^2
\\\nonumber
&=\frac{8}{(n-2)\cR(\underline{z}(t))}\left(||\underline{Ein}(\underline{z}(t)||^2||\underline{z}(t)||^2-
\langle\underline{Ein}(\underline{z}(t)),\underline{z}(t)\rangle^2\right)
\end{align}
and so the claim follows by Schwarz inequality.
\end{proof}
Observe that for given $t$ the right hand side of \eqref{RdotI} vanishes if and only if 
$\underline{z}(t)$ is an Einstein metric of type I, see Theorem \ref{theo:EI}. 
The same holds for the r.h.s.\ of \eqref{zflow}.
An immediate consequence is the 
\begin{corollary}\label{corr:4}
 Let $\underline{z}_0$ be an Einstein metric of type I. For the first normalized Einstein flow of type I to 
approach $\underline{z}_0$ from the initial 
condition $\underline{z}(t=0)\neq \underline{z}_0$ it is necessary that 
\begin{itemize}
 \item{$||\underline{z}(t=0)||=||\underline{z}_0||$}
 \item{$\cR(\underline{z}(t=0))>\cR(\underline{z}_0)$}
\end{itemize}
holds.
\end{corollary}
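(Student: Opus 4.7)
The plan is to read off both necessary conditions directly from the conservation law in Proposition \ref{prop:3} and the monotonicity formula \eqref{RdotI} in Theorem \ref{theo:4}, with an additional ODE-uniqueness argument ruling out the trivial case.

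First, the equality $||\underline{z}(t=0)||=||\underline{z}_0||$ is immediate. Indeed, Proposition \ref{prop:3} asserts that $||\underline{z}(t)||$ is constant along any solution of the first normalized Einstein flow \eqref{Flow3}. Since by assumption $\underline{z}(t)\to \underline{z}_0$, continuity of the euclidean norm gives
\begin{equation*}
||\underline{z}(0)||=\lim_{t}||\underline{z}(t)||=||\underline{z}_0||.
\end{equation*}

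Next I would argue that $\underline{z}(0)$ itself cannot be an Einstein metric of type I. If it were, then by the equivalence (1)$\Leftrightarrow$(2) in Theorem \ref{theo:EI} we would have $\widehat{\underline{Ein}}_I(\underline{z}(0))=0$, so $\dot{\underline{z}}(0)=0$, and by uniqueness of solutions of \eqref{Flow3} (the right-hand side being locally Lipschitz on $\cM(K^n)$ by the smoothness properties collected in Appendix \ref{app:2}) the solution would be constant, $\underline{z}(t)\equiv \underline{z}(0)$, contradicting $\underline{z}(t)\to\underline{z}_0\neq \underline{z}(0)$. The very same uniqueness argument, applied at any later time $t$, shows that $\underline{z}(t)$ is not an Einstein metric of type I at any point along the flow, for otherwise the flow would remain stuck there and could not approach $\underline{z}_0$.

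Given this, the monotonicity formula \eqref{RdotI} of Theorem \ref{theo:4} yields
\begin{equation*}
\dot{\cR}(\underline{z}(t))=-2\,\Big|\Big|\widehat{\underline{Ein}}_I(\underline{z}(t))\Big|\Big|^{2}<0
\end{equation*}
for every $t$ in the interval of existence, so $t\mapsto\cR(\underline{z}(t))$ is strictly decreasing. Passing to the limit and using continuity of $\cR$ on $\cM(K^n)$, we obtain for any $t>0$
\begin{equation*}
\cR(\underline{z}_0)=\lim_{s\to\infty}\cR(\underline{z}(s))\le \cR(\underline{z}(t))<\cR(\underline{z}(0)),
\end{equation*}
which is the required strict inequality.

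The main potential obstacle is the ODE-uniqueness step used to exclude the possibility that the trajectory passes through or starts at some other Einstein metric of type I distinct from $\underline{z}_0$; however, this is routine once one knows that $\widehat{\underline{Ein}}_I$ is locally Lipschitz on $\cM(K^n)$, which follows from the smoothness of $\underline{Ein}$ and of $\underline{z}\mapsto \underline{z}/||\underline{z}||^{2}$ on $\cM(K^n)$. Everything else is a direct application of the conservation and monotonicity properties already proven.
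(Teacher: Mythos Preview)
Your proof is correct and follows the same line as the paper, which simply records the corollary as ``an immediate consequence'' of Proposition~\ref{prop:3} (conservation of $||\underline{z}||$) and Theorem~\ref{theo:4} (strict monotonicity of $\cR$ away from Einstein metrics). Your additional ODE-uniqueness argument excluding the possibility that the trajectory ever sits at an Einstein metric is exactly what the paper isolates immediately afterward as Lemma~\ref{lem:asymp}; you have effectively folded that lemma into the proof of the corollary, which is perfectly fine.
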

Because any Einstein metric of type I is a fixed point of any of these three flows, an approach 
to such a metric can only be asymptotic due to the following lemma.
\begin{lemma}\label{lem:asymp}
An approach to an Einstein metric of type I under any of these flows can at most be asymptotic.
\end{lemma}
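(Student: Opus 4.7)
The plan is to invoke uniqueness of solutions to ODEs. Each of the three normalized flows \eqref{Flow3}, \eqref{Flow32}, \eqref{Flow33} is of the form $\dot{\underline{z}}(t)=F(\underline{z}(t))$ with $F$ a smooth (in fact real-analytic) vector field on the open set $\cM(K^n)$, once one excludes the coordinate singularities coming from the denominators. Indeed, $\underline{Ein}(\underline{z})$, $\underline{v}(\underline{z})$, $\cR(\underline{z})$ and $V(\underline{z})$ are smooth in $\underline{z}\in\cM(K^n)$ by the formula \eqref{totalscalar3} together with \eqref{dervol} and Lemma \ref{sympol}; the norm $||\underline{z}||$ is obviously smooth and positive; and $V(\underline{z})>0$ everywhere on $\cM(K^n)$. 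So $F$ is locally Lipschitz on a neighborhood of any $\underline{z}\in\cM(K^n)$ with $\cR(\underline{z})\neq 0$ (this caveat being needed only for the third flow \eqref{Flow33}).

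Next I would recall from Theorem \ref{theo:EI} that any Einstein metric $\underline{z}_0$ of type I satisfies $\widehat{\underline{Ein}}_I(\underline{z}_0)=0$ and also, by items (7) and (8) of that theorem, annihilates the right-hand sides of \eqref{Flow32} and \eqref{Flow33}. Hence $F(\underline{z}_0)=0$ for each of the three flows, so the constant curve $\underline{z}(t)\equiv \underline{z}_0$ is a solution.

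Now suppose, toward contradiction, that some solution $\underline{z}(t)$ with $\underline{z}(0)\neq \underline{z}_0$ reaches $\underline{z}_0$ at a finite time $t_*$, i.e.\ $\underline{z}(t_*)=\underline{z}_0$. Since $F$ is locally Lipschitz on a neighborhood of $\underline{z}_0$ (for the third flow, use that $\cR(\underline{z}_0)\neq 0$ unless the metric is Einstein-flat, which is a case treated by Proposition \ref{prop:1}; in the Einstein-flat case, the neighborhood to consider is shrunk until $\cR$ does not vanish along the trajectory), Picard--Lindelöf uniqueness for the initial value problem with data $\underline{z}(t_*)=\underline{z}_0$ forces $\underline{z}(t)\equiv\underline{z}_0$ in a two-sided neighborhood of $t_*$, hence by a standard connectedness argument on its domain of definition, for all $t\le t_*$. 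This contradicts $\underline{z}(0)\neq \underline{z}_0$. Therefore the solution can only tend to $\underline{z}_0$ asymptotically as $t\to\infty$.

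The main obstacle is the minor technical point about the third flow \eqref{Flow33}, where the vector field is only defined outside $\{\cR=0\}$; one has to argue that the trajectory stays away from this set on a neighborhood of $\underline{z}_0$. For the first normalized flow this is unnecessary since $F=-2\widehat{\underline{Ein}}_I$ is defined and smooth on all of $\cM(K^n)$, and for the second normalized flow one only needs $V(\underline{z})>0$, which is automatic. Thus the uniqueness argument uniformly covers all three cases, and the lemma follows.
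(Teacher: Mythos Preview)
Your proof is correct and follows essentially the same approach as the paper: both use that an Einstein metric of type~I is a fixed point of each flow and then invoke uniqueness of ODE solutions to derive a contradiction (the paper phrases this via the time-reversed flow, but that is exactly the backward-uniqueness half of the Picard--Lindel\"of argument you invoke). For the third flow the cleaner way to dispose of the denominator is to note, via Proposition~\ref{prop:3}, that $\cR$ is conserved along trajectories, so the flow never meets $\{\cR=0\}$ and any Einstein metric reached is automatically non-Einstein-flat; your parenthetical about ``shrinking the neighborhood'' is unnecessary once this is observed.
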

\begin{proof}
It suffices to consider the first flow, for the other two flows the proof is similar with some adaptions. 
Assume that under 
the flow $\underline{z}(t)$, where $\underline{z}(0)$ is not an Einstein metric, an Einstein metric 
$\underline{z}_0$ 
is reached in finite 
time, say $\underline{z}(T)=\underline{z}_{0}$. 
Consider the time reversed flow defined by $\underline{z}_{\mbox{\it rev}}(t)=\underline{z}(T-t),\; 0\le t\le T$.
It satisfies the time reversed flow equation
\begin{equation}\label{trev}
\dot{\underline{z}}_{\mbox{\it rev}}(t)=2\underline{\widehat{Ric}}_I(\underline{{z}}_{\mbox{\it rev}}(t))
\end{equation}
and starts at $\underline{z}(T)$. But this leads to a contradiction, since 
$\dot{\underline{z}}_{\mbox{\it rev}}(t)$ vanishes for $t=0$ by \eqref{trev} and therefore for all $0\le t\le T$ by the 
uniqueness of solutions of \eqref{trev} for given initial condition.
\end{proof}
We consider the first normalized flow to be the most promising one for further studies. 
Indeed, in combination with 
condition (5) of Theorem \ref{theo:EI} we have 
\begin{corollary}\label{corr:41}
 Assume $\underline{z}_{\min}$ is a local minimum of $\cR(\underline{z})$ on the set 
 $\cM_{r=||\underline{z}_{\min}||}(K^n)$. Then $\underline{z}_{\min}$ is an Einstein metric of type I.
Assume in addition that $\underline{z}_{\min}$ is non-degenerate.
 Then there is a neighborhood $\cU(\underline{z}_{\min})$ in 
$\cM_{r=||\underline{z}_{\min}||}(K^n)$ of $\underline{z}_{\min}$, 
such that the flow \eqref{Flow3} starting there (but away from $\underline{z}_{\min}$) will stay there and 
approach $\underline{z}_{\min}$ asymptotically.
\end{corollary}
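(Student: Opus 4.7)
The first assertion is immediate from Theorem \ref{theo:EI}. A local minimum of $\cR$ restricted to $\cM_r(K^n)$ (with $r=\|\underline{z}_{\min}\|$) is in particular a critical point of this restricted function, and by the equivalence of conditions (1) and (5) of Theorem \ref{theo:EI} every such critical point is an Einstein metric of type I.

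For the asymptotic stability statement the plan is to use $\cR$ as a Lyapunov function in the style of LaSalle. First, by the first bullet of Proposition \ref{prop:3}, the flow \eqref{Flow3} preserves $\|\underline{z}\|$, so solutions of \eqref{Flow3} with initial data in $\cM_r(K^n)$ remain in this finite-dimensional manifold. Second, by Theorem \ref{theo:4}, along any such trajectory
\[
\dot{\cR}(\underline{z}(t)) = -2\,\|\widehat{\underline{Ein}}_I(\underline{z}(t))\|^2 \le 0,
\]
with equality precisely at Einstein metrics of type I, i.e.\ at the critical points of $\cR|_{\cM_r(K^n)}$. This gives the Lyapunov property. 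The non-degeneracy assumption on $\underline{z}_{\min}$ means the Hessian of $\cR|_{\cM_r(K^n)}$ at $\underline{z}_{\min}$ is positive definite; a Morse-lemma style Taylor argument then produces a ball $B_\delta(\underline{z}_{\min})\subset \cM_r(K^n)$ in which $\underline{z}_{\min}$ is the unique critical point of $\cR|_{\cM_r(K^n)}$ and the unique minimum of $\cR$, so that for all sufficiently small $\epsilon>0$ the connected component
\[
\cU = \bigl\{\,\underline{z}\in B_\delta(\underline{z}_{\min}) : \cR(\underline{z}) < \cR(\underline{z}_{\min})+\epsilon\,\bigr\}
\]
containing $\underline{z}_{\min}$ has compact closure inside $B_\delta(\underline{z}_{\min})$.

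Monotonicity of $\cR$ along the flow then forces any trajectory starting in $\cU$ to remain in $\cU$: such a trajectory can only escape by crossing the level set $\{\cR = \cR(\underline{z}_{\min})+\epsilon\}$, which is forbidden by $\dot{\cR}\le 0$. Hence the orbit has precompact closure, and its $\omega$-limit set is non-empty, compact, and flow-invariant. Standard LaSalle reasoning, applied to the Lyapunov identity above, forces the $\omega$-limit set to lie in the zero set of $\widehat{\underline{Ein}}_I$, i.e.\ the set of Einstein metrics of type I in $\overline{\cU}$; but by construction this set is just $\{\underline{z}_{\min}\}$, so $\underline{z}(t)\to\underline{z}_{\min}$ as $t\to\infty$. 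The delicate point, and in my view the main obstacle to make fully rigorous, is the construction of $\cU$ as a trapping region: non-degeneracy is essential here, because without it small sublevel sets of $\cR$ around $\underline{z}_{\min}$ need not be topological balls and trajectories could in principle drift along a flat direction towards some other critical point rather than converging to $\underline{z}_{\min}$.
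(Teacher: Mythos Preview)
Your proof is correct and follows essentially the same approach as the paper's. The paper's own argument is a terse sketch: it observes that a local minimum forces the traceless Einstein vector field to vanish (hence Einstein of type I), that non-degeneracy isolates $\underline{z}_{\min}$ from other critical points, and then invokes the strict monotonicity of $\cR$ along the flow together with Lemma~\ref{lem:asymp}. You make the same argument explicit by casting $\cR$ as a Lyapunov function, constructing a sublevel-set trapping region from the non-degenerate Hessian, and applying LaSalle's principle to identify the $\omega$-limit set---this is precisely the standard way to make the paper's sketch rigorous, and your remark that non-degeneracy is what guarantees the sublevel sets form genuine trapping regions is exactly the point.
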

\begin{proof}
The first part follows from the following observation. $\cR(\underline{z}(t))$ is strictly decreasing as long as 
the traceless Einstein vector field is non-vanishing. Since $\underline{z}_{\min}$ is a local minimum, the 
traceless Einstein vector field must be vanishing there and this is equivalent for $\underline{z}_{\min}$ to be an 
Einstein metric of the type I. If $\underline{z}_{\min}$ is non-degenerate, there is a 
neighborhood of $\underline{z}_{\min}$, which does not contain another Einstein metric, that is any other critical 
point of $\cR(\underline{z})$ on $\cM_{r=||\underline{z}_{\min}||}(K^n)$. Now we again use the 
fact that $\cR(\underline{z}(t))$ is strictly decreasing away from an Einstein metric. The last claim follows by the 
previous lemma.
\end{proof}
A further immediate consequence of \eqref{derr} and Proposition \ref{prop:3} is the relation
\begin{equation}\label{derr1}
\cR(\underline{z}(t))= \cR(\underline{z}(0))-2\,\int_0^t
\Big|\Big|\underline{\widehat{Ric}}_I(\underline{z}(s))\Big|\Big|^2\,\rd s
\end{equation}
for a solution of the normalized Einstein flow equation up to time $t$. 

Each of the three quantities
\begin{align}\label{deltaI1}
 \Delta_I^{(1)}(\underline{z})&=
\Big|\Big|\underline{\widehat{Ric}}_I(\underline{z})\Big|\Big|^2,\\\nonumber 
\Delta_I^{(2)}(\underline{z})&=
\Big|\Big|\underline{Ein}(\underline{z})
-\frac{2}{n}\frac{\langle\underline{v}(\underline{z}),\underline{Ein}(\underline{z})\rangle}
{V(\underline{z})}\underline{z}\Big|\Big|^2,\\\nonumber
\Delta_I^{(3)}(\underline{z})&=\Big|\Big|\underline{Ein}(\underline{z})
-\frac{2}{n-2}\frac{||\underline{Ein}(\underline{z})||^2}
{\cR(\underline{z})}\underline{z}\Big|\Big|^2
\end{align}
can be viewed as a measure for how much $\underline{z}$ deviates from an Einstein metric of type I
on $K^n$.

The next result states that the total scalar curvature decreases at least linearly in 
time as long as one stays strictly away from an Einstein metric.
\begin{corollary}\label{corr:1}
For given initial condition $\underline{z}(0)$, which is not an Einstein metric, let the solution 
of the first flow equation exist up to time $T>0$.  Then there is a constant $c>0$, depending on $\underline{z}(0)$ and 
$T$ only, such that 
\begin{equation*}
\cR(\underline{z}(t))\le \cR(\underline{z}(0))-2ct
\end{equation*}
holds for all $0\le t\le T$.
\end{corollary}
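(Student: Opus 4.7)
The plan is to leverage the identity \eqref{RdotI} from Theorem \ref{theo:4}, namely
$\dot{\cR}(\underline{z}(t))=-2\,\|\underline{\widehat{Ric}}_I(\underline{z}(t))\|^2$,
and show that under the hypotheses the right-hand side is bounded above by a strictly negative constant uniformly in $t\in[0,T]$. Integrating this bound in $t$ then produces the claimed linear decay.

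First I would argue that the trajectory $\Gamma=\{\underline{z}(t):0\le t\le T\}$ is a compact subset of $\cM(K^n)$. By Proposition \ref{prop:3} the norm $\|\underline{z}(t)\|$ is constant along the first normalized flow, so the trajectory is bounded; and since the solution is assumed to exist on $[0,T]$, it cannot approach $\partial\cM(K^n)$ during this interval (that would force $d(\underline{z}(t))\downarrow 0$ and the vector field $\underline{\widehat{Ric}}_I$ to become singular, contradicting existence of the solution up to $T$). Hence $\Gamma$ is compact in $\cM(K^n)$, and continuity of the map $\underline{z}\mapsto\|\underline{\widehat{Ric}}_I(\underline{z})\|^2$ (inherited from the smoothness of $\cR$ in $\underline{z}$) implies the map attains its infimum on $\Gamma$.

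Next I would show this infimum is strictly positive. By Theorem \ref{theo:EI}, condition (2), $\underline{\widehat{Ric}}_I(\underline{z})=0$ at some point $\underline{z}\in\cM(K^n)$ if and only if $\underline{z}$ is an Einstein metric of type I. The initial metric $\underline{z}(0)$ is not Einstein by hypothesis, and by Lemma \ref{lem:asymp} the flow cannot reach an Einstein metric in finite time. Consequently no $\underline{z}(t)$ for $t\in[0,T]$ is Einstein, so the function $t\mapsto \|\underline{\widehat{Ric}}_I(\underline{z}(t))\|^2$ is strictly positive on the compact interval $[0,T]$ and attains its minimum there; call this minimum $c>0$. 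Clearly $c$ depends only on the trajectory, which in turn is determined by $\underline{z}(0)$ and $T$.

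Combining these two steps gives $\dot{\cR}(\underline{z}(t))\le -2c$ on $[0,T]$, and integrating from $0$ to $t$ yields
\begin{equation*}
\cR(\underline{z}(t))=\cR(\underline{z}(0))-2\int_0^t\|\underline{\widehat{Ric}}_I(\underline{z}(s))\|^2\,\rd s\le \cR(\underline{z}(0))-2ct,
\end{equation*}
which is the claim. The main conceptual obstacle is the compactness argument: one must rule out that the trajectory asymptotically approaches the boundary $\partial\cM(K^n)$ or an Einstein metric on $[0,T]$. The boundary issue is handled by the assumed existence of the solution up to $T$ (which forbids singularities of the vector field), and the Einstein issue is precisely the content of Lemma \ref{lem:asymp} together with the hypothesis on $\underline{z}(0)$.
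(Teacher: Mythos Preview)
Your proof is correct and follows essentially the same approach as the paper: define $c=\inf_{0\le s\le T}\|\underline{\widehat{Ric}}_I(\underline{z}(s))\|^2$, argue $c>0$ by continuity on the compact interval $[0,T]$ together with the fact that no $\underline{z}(s)$ is Einstein, and integrate \eqref{derr1}. The paper's proof is terser and does not explicitly cite Lemma~\ref{lem:asymp}, but your invocation of it makes the positivity of $c$ fully transparent; your compactness discussion of the trajectory in $\cM(K^n)$ is a slight elaboration (compactness of $[0,T]$ and continuity of $s\mapsto\Delta_I^{(1)}(\underline{z}(s))$ already suffice), but it is not wrong.
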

\begin{proof}
By assumption, by the continuity of $s\mapsto \underline{z}(s)$ and the continuity of the 
maps $\underline{z}\mapsto R(\underline{z})$ and 
$\underline{z}\mapsto\underline{Ein}(\underline{z})$ 
$$
c=\inf_{0\le s\le T} \;\Delta_I^{(1)}(\underline{z}(s))
$$
is strictly positive. The claim now follows from \eqref{derr1}.
\end{proof}

The $\Delta_I^{(i)}(\underline{z}),\; (i=1,2,3)$ satisfy the scaling relation 
\begin{equation}
\Delta_I^{(i)}(\lambda \underline{z})=\lambda^{n-4}\Delta_{I}^{(i)}(\underline{z}).
\end{equation}
Therefore their infimum
\begin{equation*}\label{derr3}
 N_I^{(i)}(r)=\inf_{\underline{z}\in \cM_r(K^n)}\Delta_I^{(i)}(\underline{z})
\end{equation*}
satisfy the scaling relation $N_I^{(i)}(\lambda r)=\lambda^{n-4}N_I^{(i)}(r)$.
We have the obvious result 
\begin{lemma}\label{lem:21}
For $K^n$ to have an Einstein metric of type I, it is necessary that $N_I^{(i)}(r)=0$ holds for all $i$ 
and some $r>0$ (and hence all $r$).
\end{lemma}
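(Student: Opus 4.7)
The proof will be essentially a direct combination of the characterization of Einstein metrics given in Theorem \ref{theo:EI} with the scaling behavior of the functions $\Delta_I^{(i)}$.

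My plan is to suppose $\underline{z}_0 \in \cM(K^n)$ is an Einstein metric of type I and set $r_0 = ||\underline{z}_0||$. First I would note that each $\Delta_I^{(i)}(\underline{z})$ is manifestly nonnegative, being the squared euclidean norm of a vector. Hence $N_I^{(i)}(r)\ge 0$ for every $r>0$, and it suffices to exhibit a single point in $\cM_{r_0}(K^n)$ at which $\Delta_I^{(i)}$ vanishes for each $i$.

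The key step is to check that $\underline{z}_0$ itself is such a point, invoking Theorem \ref{theo:EI}. Indeed, condition (2) of that theorem gives $\widehat{\underline{Ein}}_I(\underline{z}_0)=0$, which is exactly $\Delta_I^{(1)}(\underline{z}_0)=0$. Condition (7) rewrites as
\begin{equation*}
\underline{Ein}(\underline{z}_0)-\frac{2}{n}\frac{\langle\underline{v}(\underline{z}_0),\underline{Ein}(\underline{z}_0)\rangle}{V(\underline{z}_0)}\underline{z}_0=0,
\end{equation*}
which is precisely $\Delta_I^{(2)}(\underline{z}_0)=0$. For $i=3$, condition (8) together with Proposition \ref{pro:EI1} (the three expressions $\kappa_I^{(1)},\kappa_I^{(2)},\kappa_I^{(3)}$ coincide at an Einstein metric whenever they are defined) yields $\Delta_I^{(3)}(\underline{z}_0)=0$ provided $\cR(\underline{z}_0)\ne 0$; in the remaining Einstein-flat case $\underline{Ein}(\underline{z}_0)=0$ and the vanishing is built into the defining expression (adopting the natural convention that the scalar coefficient is set to zero when the numerator $||\underline{Ein}||^2$ vanishes). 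Since $\underline{z}_0\in\cM_{r_0}(K^n)$, we obtain $0\le N_I^{(i)}(r_0)\le \Delta_I^{(i)}(\underline{z}_0)=0$ for all $i$.

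Finally, I would use the stated scaling relation $N_I^{(i)}(\lambda r)=\lambda^{n-4}N_I^{(i)}(r)$, which immediately propagates vanishing at $r=r_0$ to vanishing at every $r>0$ by choosing $\lambda=r/r_0$. I do not anticipate any serious obstacle: the only mildly delicate point is the treatment of $\Delta_I^{(3)}$ when $\underline{z}_0$ is Einstein-flat, which is handled by the interpretation just mentioned (or alternatively by restricting the $i=3$ assertion to Einstein metrics with $\cR(\underline{z}_0)\ne 0$, Proposition \ref{prop:1} then exhausting the complementary case).
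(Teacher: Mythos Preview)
Your argument is correct and is precisely the unpacking of what the paper leaves implicit: the lemma is stated there as ``the obvious result'' with no proof given, and your route via Theorem~\ref{theo:EI} (conditions (2), (7), (8)) together with the scaling relation $N_I^{(i)}(\lambda r)=\lambda^{n-4}N_I^{(i)}(r)$ is exactly the intended justification. Your handling of the Einstein-flat case for $i=3$ is also the sensible reading of the definition.
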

From \eqref{derr1} we derive the {\it a priori} estimate
\begin{equation}\label{derr4}
 \cR(\underline{z}(t))\le \cR(\underline{z}(0))-2tN_I^{(1)}(||\underline{z}(0)||)
\end{equation}
for any initial condition $\underline{z}(0)$.

In Appendix \ref{app:2} we prove the next lemma. 
It provides smoothness properties of the total scalar curvature and 
the Einstein vector field, some of which we already have used.
\begin{lemma}\label{lem:3}
The volume, the total scalar curvature, the Einstein vector field and the traceless Einstein vector fields are 
smooth functions of the metric $\underline{z}\in\cM(K^n)$.
\end{lemma}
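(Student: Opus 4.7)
The plan is to build smoothness up from the elementary pieces (simplex volumes and dihedral angles) to the composite quantities ($\cR$, $\underline{Ein}$, and the traceless versions).

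First I would handle the volumes. By \eqref{amatrixvol}, for any $\sigma^k\in K^n$,
\begin{equation*}
|\sigma^k|(\underline{z})=\frac{1}{k!}\sqrt{\det A(\underline{z}(\sigma^k))},
\end{equation*}
where the entries of $A(\underline{z}(\sigma^k))$ are linear in $\underline{z}$ by \eqref{amatrix}. On $\cM(K^n)$, Theorem \ref{theo:5} together with Lemma \ref{tartaglia} guarantees $\det A(\underline{z}(\sigma^k))>0$ for every face $\sigma^k$, so the square root is smooth. Hence each $|\sigma^k|$ is smooth on $\cM(K^n)$, and so in particular is $V(K^n,\underline{z})=\sum_{\sigma^n}|\sigma^n|(\underline{z})$, and so are all partial derivatives $\partial^{\sigma^1}|\sigma^{n-2}|$ and $\partial^{\sigma^1}|\sigma^n|$ (the gradient of volume $\underline{v}$).

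Next I would show that every dihedral angle $(\sigma^{n-2},\sigma^n)$ is a smooth function of $\underline{z}$. The plan here is to write $\cos$ of the (ordinary, unnormalized) dihedral angle as a ratio built out of minors of the Gram matrix $A(\underline{z}(\sigma^n))$. Fixing $\sigma^{n-2}\subset\sigma^n$, with $\sigma^{n-1}_1,\sigma^{n-1}_2$ the two $(n-1)$-faces of $\sigma^n$ meeting at $\sigma^{n-2}$, a direct computation (the alternative description promised in Appendix \ref{app:2}) gives $\langle n_1,n_2\rangle$ as a rational function of determinants of submatrices of $A(\underline{z}(\sigma^n))$, with denominator $|\sigma^{n-1}_1|(\underline{z})\,|\sigma^{n-1}_2|(\underline{z})$, which is strictly positive on $\cM(K^n)$ by step~1. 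Since the outward normals $n_1,n_2$ are genuinely different unit vectors for any non-degenerate simplex, $\langle n_1,n_2\rangle$ stays strictly inside $(-1,1)$ on $\cM(K^n)$, so composing with $\arccos$ preserves smoothness. This is the step I expect to be the main technical obstacle, because it requires an explicit and manifestly smooth formula for the dihedral angle, rather than just a geometric description.

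Once these two ingredients are in place, the rest is algebra. The defining formula \eqref{totalscalar1} exhibits $\cR(K^n,\underline{z})$ as a finite sum of products of the smooth quantities $1-\sum(\sigma^{n-2},\sigma^n)$ and $|\sigma^{n-2}|$, so $\cR$ is smooth on $\cM(K^n)$. Consequently its gradient $\underline{Ein}=\underline{\nabla}\cR$ is smooth as well; equivalently one can invoke the explicit form \eqref{totalscalar3} and argue componentwise, since each $\partial^{\sigma^1}|\sigma^{n-2}|$ was already shown to be smooth.

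Finally, for the two traceless Einstein fields one reads off smoothness from the formulas themselves. Definition \eqref{ricchat} writes $\widehat{\underline{Ein}}_I(\underline{z})$ as $\underline{Ein}(\underline{z})-\tfrac{n-2}{2}\cR(\underline{z})\,\underline{z}/\|\underline{z}\|^2$, and since $\|\underline{z}\|^2>0$ on $\cM(K^n)$, this is smooth. Similarly \eqref{RicII} expresses $\widehat{\underline{Ein}}_{II}(\underline{z})$ as a smooth combination of $\underline{Ein}$, $\cR$, $V$, and $\underline{v}$, using only the strictly positive denominator $V(\underline{z})$ from \eqref{avtotalscalar}. This completes all four claims. \qed
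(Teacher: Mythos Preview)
Your proposal is correct and follows essentially the same route as the paper's proof in Appendix~\ref{app:2}: smoothness of the volumes from $|\sigma^k|=\tfrac{1}{k!}\sqrt{\det A(\underline{z}(\sigma^k))}$ with $\det A>0$ on $\cM(K^n)$, and then smoothness of each dihedral angle by expressing $\langle n_1,n_2\rangle$ as a ratio of polynomials in $\underline{z}$ with strictly positive denominator. The paper carries out the second step via exterior algebra, writing $w_i=(-1)^i v_1\wedge\cdots\wedge\widehat{v}_i\wedge\cdots\wedge v_n\in\Lambda^{n-1}(E^n)$ so that $\langle w_i,w_k\rangle$ is a minor of the Gram matrix; this is exactly your ``rational function of minors'' description, and your observation that $\langle n_1,n_2\rangle\in(-1,1)$ on $\cM(K^n)$ (needed for $\arccos$ to be smooth) is a point the paper leaves implicit.
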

Therefore by standard results from the theory of differential equations, 
for given initial condition $\underline{z}(0)\in \cM(K^n)$ there is a unique solution 
$\underline{z}(t)\in \cM(K^n)$ 
to the normalized Einstein equation of type I for 
$0\le t\le T\, (T>0)$.
We will choose $T$ to be maximal, thus allowing for $T=\infty$ and then 
$T$ depends on the initial condition condition only,
$T=T(\underline{z}(0))$. Observe that the solution can not run to infinity, since 
$||\underline{z}(t)||=||\underline{z}(0)||$ for all $t$. 

So if we assume $T<\infty$, then 
$\underline{z}(T)\in \partial\cM(K^n)\cap\overline{\cM_{||\underline{z}_0||}(K^n)}$.

If we could prove that the vector field $\underline{\widehat{Ric}}_I(\underline{z})$ 
is ``tangential'' to the boundary $\partial \cM(K^n)$ for $\underline{z}\in\partial \cM(K^n)$, and
hence actually ``tangent'' to $\partial\cM_{||\underline{z}||}(K^n)$, 
then the flow could never leave $\cM(K^n)$ and we would have arrived at a contradiction that $T$ 
is finite. So we turn to a more detailed analysis, first of the total scalar curvature, the Ricci vector 
field and the traceless Einstein vector fields near the the boundary and then to an analysis of the 
boundary itself.
The following bounds are obvious
\begin{equation*}
 |\sigma^k|\le c_k||\underline{z}||^{k/2},\qquad 0<(\sigma^{n-2},\sigma^n)\le 1.
\end{equation*}
The $c_k<\infty$ are universal constants.
Let $N_k(K^n)$ denote the number of $k$-simplexes in $K^n$, and
$$
N_{k,l}(K^n)=\max_{\sigma^k\in K^n}\;\sharp(\sigma^l\,:\,\sigma^l\supset\sigma^k),
$$
the maximum number of times a $k$-simplex is the face of an $l$-simplex.
\begin{lemma}\label{lem:4}
The bounds 
\begin{align*}
V(\underline{z})&\le c_n N_n(K^n)||\underline{z}||^{n/2}\\\nonumber
 |\cR(K^n,\underline{z})|&\le c_{n-2}N_{n-2}(K)N_{n-2,n}(K^n)||\underline{z}||^{(n-2)/2}
\end{align*}
are valid. 
\end{lemma}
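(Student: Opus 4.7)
The plan is to obtain both bounds as direct consequences of the definitions of $V$ and $\cR$ together with the two pointwise ingredients stated immediately before the lemma: the volume estimate $|\sigma^k|\le c_k\|\underline{z}\|^{k/2}$ and the angle estimate $0<(\sigma^{n-2},\sigma^n)\le 1$. No non-trivial geometry enters beyond these; the work is essentially bookkeeping with triangle inequalities.

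First I would handle the volume. By definition $V(\underline{z})=\sum_{\sigma^n\in K^n}|\sigma^n|(\underline{z})$, so the triangle inequality and the pointwise bound applied at $k=n$ give
\[
V(\underline{z})\;\le\;\sum_{\sigma^n\in K^n} c_n\|\underline{z}\|^{n/2}\;=\;c_n\,N_n(K^n)\,\|\underline{z}\|^{n/2},
\]
which is the first claim.

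Next I would treat $\cR(K^n,\underline{z})$. Starting from the defining formula \eqref{totalscalar1},
\[
|\cR(K^n,\underline{z})|\;\le\;\sum_{\sigma^{n-2}}\Bigl|1-\sum_{\sigma^n\supset\sigma^{n-2}}(\sigma^{n-2},\sigma^n)\Bigr|\;|\sigma^{n-2}|(\underline{z}).
\]
The inner sum has at most $N_{n-2,n}(K^n)$ terms, each at most $1$ by the angle bound, so the bracketed factor is dominated by $1+N_{n-2,n}(K^n)\le 2N_{n-2,n}(K^n)$ (since $N_{n-2,n}(K^n)\ge 1$ for any $\sigma^{n-2}$ that actually appears in a pseudomanifold). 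Using the pointwise volume bound at $k=n-2$, each $|\sigma^{n-2}|\le c_{n-2}\|\underline{z}\|^{(n-2)/2}$, and the outer sum contains $N_{n-2}(K^n)$ terms. Combining these and absorbing the factor $2$ into the constant $c_{n-2}$ yields
\[
|\cR(K^n,\underline{z})|\;\le\;c_{n-2}\,N_{n-2}(K^n)\,N_{n-2,n}(K^n)\,\|\underline{z}\|^{(n-2)/2},
\]
which is the second claim.

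There is no real obstacle here; the only point that requires any care is bounding the deficit-angle factor in $\cR_{\sigma^{n-2}}$, where one must allow for the possibility that the subtracted sum exceeds $1$ (so the factor is \emph{not} controlled by $1$ on its own) and fold the combinatorial maximum $N_{n-2,n}(K^n)$ into the estimate. Once this is done, both bounds are just the definitions plus the stated pointwise inequalities.
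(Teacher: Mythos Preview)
Your proof is correct and follows exactly the approach intended by the paper, which treats the lemma as an immediate consequence of the ``obvious bounds'' $|\sigma^k|\le c_k\|\underline{z}\|^{k/2}$ and $0<(\sigma^{n-2},\sigma^n)\le 1$ stated just before it. One minor sharpening: since the angle sum lies in $[0,N_{n-2,n}]$, you in fact have $|1-\sum (\sigma^{n-2},\sigma^n)|\le \max(1,N_{n-2,n}-1)\le N_{n-2,n}$ directly, so no factor of $2$ needs to be absorbed into $c_{n-2}$.
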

As an immediate consequence we obtain the following result: With 
$$
\cR_{\min}(r)=\min_{\underline{z}\;:\;||\underline{z}||=r}\cR(\underline{z})
$$
the estimate 
\begin{equation*}
 \cR_{\min}(r)\ge -c_{n-2}N_{n-2}(K^n)N_{n-2,n}(K^n)r^{(n-2)/2}
\end{equation*}
is valid. Combining this with the estimate \eqref{derr4} we obtain the 
\begin{proposition}
If $K^n$ is such that $N_I(K^n,r)>0$, then a flow starting at $\underline{z}_0$ cannot be continued beyond
the time $T$ with 
\begin{align*}
T&\le \frac{1}{2N_I(K^n,||\underline{z}_0||)}
\left(\cR(K^n,\underline{z}_0)-\cR_{\min}(K^n,||\underline{z}_0||) \right)\\\nonumber
&\le \frac{1}{2N_I(K^n,||\underline{z}_0||)}
\left(\cR(K^n,\underline{z}_0)-c_{n-2}N_{n-2}(K^n)N_{n-2,n}(K^n)||\underline{z}_0||^{(n-2)/2}\right).
\end{align*}
\end{proposition}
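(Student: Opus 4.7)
The plan is to combine the monotone decrease of $\cR$ along the first normalized Einstein flow with the elementary lower bound for $\cR$ on the invariant sphere. First I would observe that by Proposition \ref{prop:3} the quantity $||\underline{z}(t)||$ is conserved under the flow \eqref{Flow3}, so setting $r:=||\underline{z}_0||$ the solution satisfies $\underline{z}(t)\in\cM_r(K^n)$ for every $t\in[0,T)$ where the flow exists. Consequently $\cR(\underline{z}(t))\ge \cR_{\min}(r)$ throughout the lifetime of the flow.

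Next I would invoke Theorem \ref{theo:4}, which gives the exact identity $\dot{\cR}(\underline{z}(t))=-2\,\Delta_I^{(1)}(\underline{z}(t))$. Since $\underline{z}(t)\in\cM_r(K^n)$, the definition of $N_I^{(1)}(r)$ as the infimum of $\Delta_I^{(1)}$ over $\cM_r(K^n)$ gives the pointwise bound $\Delta_I^{(1)}(\underline{z}(t))\ge N_I^{(1)}(r)$, whence $\dot{\cR}(\underline{z}(t))\le -2N_I^{(1)}(r)$. Integrating from $0$ to $t$ yields the linear decrease
\begin{equation*}
\cR(\underline{z}(t))\le \cR(\underline{z}_0)-2\,N_I^{(1)}(r)\,t,
\end{equation*}
which is precisely the a priori estimate \eqref{derr4} already recorded in the paper.

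The crux is then to combine this upper bound for $\cR(\underline{z}(t))$ with the lower bound $\cR(\underline{z}(t))\ge \cR_{\min}(r)$. Under the hypothesis $N_I^{(1)}(r)>0$ these two inequalities are only jointly consistent for
\begin{equation*}
t\le \frac{\cR(\underline{z}_0)-\cR_{\min}(r)}{2\,N_I^{(1)}(r)},
\end{equation*}
which forces the maximal existence time to satisfy the first stated bound for $T$. The second inequality then follows by substituting the explicit lower bound $\cR_{\min}(r)\ge -c_{n-2}N_{n-2}(K^n)N_{n-2,n}(K^n)r^{(n-2)/2}$ supplied by Lemma \ref{lem:4}.

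No deep obstacle is expected here, since the proof is essentially a synthesis of Theorem \ref{theo:4}, Proposition \ref{prop:3}, and Lemma \ref{lem:4}; the only point one has to be careful about is that $\cM_r(K^n)$ is not compact (its closure in $\R^{n_1(K^n)}$ meets $\partial\cM(K^n)$), so the infimum defining $N_I^{(1)}(r)$ need not be attained and cannot be upgraded to a minimum. This does not affect the argument, however, because only the lower bound $\Delta_I^{(1)}(\underline{z}(t))\ge N_I^{(1)}(r)$ is used, and this is immediate from the definition of the infimum regardless of attainment.
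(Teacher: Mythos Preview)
Your proposal is correct and follows exactly the approach the paper intends: the paper itself says only that the proposition is obtained by ``combining this [the lower bound on $\cR_{\min}(r)$ from Lemma \ref{lem:4}] with the estimate \eqref{derr4}'', and you have spelled out precisely that combination, together with the invariance of $||\underline{z}(t)||$ from Proposition \ref{prop:3} that justifies restricting attention to $\cM_r(K^n)$.
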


\vspace{0.5cm}

\subsection{Normalized Einstein flows of type II}~~\\
In this subsection we provide an alternative definition of a normalized Einstein flow and which is closely 
related to the 
concept of a p.l. Einstein space of type II. For this definition we invoke the gradient 
$\underline{v}$ of the volume.  

By definition the {\it \underline{first} normalized Einstein flow equation of type II} is given as 
\begin{equation}\label{newflow}
\dot{\underline{z}}(t)=-2 \underline{\widehat{Ric}}_{II}(\underline{z}(t)),
\end{equation}
see \eqref{RicII}.
The right hand side of \eqref{newflow} equals
\begin{equation*}
-2V(\underline{z})^{(n-2)/n}\underline{\nabla} F_{II}(\underline{z}),
\end{equation*}
see \eqref{FIIgrad}.

By definition the {\it \underline{second} normalized Einstein flow equation of type II} is given as 
\begin{equation}\label{newflow1}
\dot{\underline{z}}(t)=-2 \underline{Ein}(\underline{z}(t))
+2\frac{\langle \underline{v}(\underline{z}(t)),
\underline{Ein}(\underline{z}(t))\rangle}{||\underline{v}(\underline{z}(t))||^2}\underline{v}(\underline{z}(t)).
\end{equation}
By definition the {\it \underline{third} normalized Einstein flow equation of type II} is given as 
\begin{equation}\label{newflow11}
\dot{\underline{z}}(t)=-2 \underline{Ein}(\underline{z}(t))+
2\frac{||\underline{Ein}(\underline{z}(t))||^2}
{\langle\underline{v}(\underline{z}(t)),\underline{Ric}(\underline{z}(t))\rangle}.
\end{equation}
By Proposition \ref{pro:EII1} an Einstein metric of type II is a fixed point under all these flows.
Set 
$$
\cM^0(K^n)=\{\underline{z}\in\cM(K^n)\,|\, 
\langle\underline{v}(\underline{z}),\underline{Ric}(\underline{z})\rangle= 0\}.
$$
In analogy to Proposition \ref{prop:3}  we have 
\begin{proposition}\label{prop:newflow}
Under the flow \eqref{newflow} $||\underline{z}(t)||$ is constant while under the flow \eqref{newflow1} 
$V(\underline{z}(t))$ is constant. Under the flow \eqref{newflow11} $\cR(\underline{z}(t))$ is constant as long as 
$\underline{z}(t)\notin\cM^0(K^n)$.
\end{proposition}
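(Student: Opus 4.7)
The plan is to proceed in exact analogy with the proof of Proposition \ref{prop:3}: for each of the three flows one computes the time derivative of the conserved quantity, substitutes the flow equation, and shows the result vanishes. All three conservation laws will reduce to an algebraic identity that has already been established earlier in the paper. The key ingredients are (i) the tracelessness relation $\langle \underline{z}, \widehat{\underline{Ein}}_{II}(\underline{z})\rangle = 0$ stated immediately after \eqref{RicII}, (ii) Euler's relation \eqref{volgrad3} in the form $\langle \underline{z}, \underline{v}(\underline{z})\rangle = (n/2)V(\underline{z})$, and (iii) the fact that $\underline{Ein}$ and $\underline{v}$ are the gradients of $\cR$ and $V$ respectively.

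For the first flow \eqref{newflow}, I would differentiate $\|\underline{z}(t)\|^2$ directly:
\begin{equation*}
\tfrac{\rd}{\rd t}\|\underline{z}(t)\|^2 = 2\langle \underline{z}(t), \dot{\underline{z}}(t)\rangle = -4\langle \underline{z}(t), \widehat{\underline{Ein}}_{II}(\underline{z}(t))\rangle = 0,
\end{equation*}
the last equality by the aforementioned tracelessness of $\widehat{\underline{Ein}}_{II}$. For the second flow \eqref{newflow1}, since $\underline{v}(\underline{z}) = \underline{\nabla} V(\underline{z})$, I would compute
\begin{equation*}
\tfrac{\rd}{\rd t}V(\underline{z}(t)) = \langle \dot{\underline{z}}(t), \underline{v}(\underline{z}(t))\rangle = -2\langle \underline{Ein}, \underline{v}\rangle + 2\frac{\langle \underline{v}, \underline{Ein}\rangle}{\|\underline{v}\|^2}\langle \underline{v}, \underline{v}\rangle,
\end{equation*}
where the arguments are evaluated at $\underline{z}(t)$; the two terms cancel.

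For the third flow \eqref{newflow11}, I would similarly compute, using $\underline{Ein} = \underline{\nabla}\cR$,
\begin{equation*}
\tfrac{\rd}{\rd t}\cR(\underline{z}(t)) = \langle \dot{\underline{z}}(t), \underline{Ein}(\underline{z}(t))\rangle = -2\|\underline{Ein}\|^2 + 2\frac{\|\underline{Ein}\|^2}{\langle \underline{v}, \underline{Ein}\rangle}\langle \underline{v}, \underline{Ein}\rangle = 0,
\end{equation*}
where cancellation of the second pair of terms requires $\langle \underline{v}(\underline{z}(t)), \underline{Ein}(\underline{z}(t))\rangle \neq 0$, i.e.\ $\underline{z}(t) \notin \cM^0(K^n)$, which is exactly the hypothesis of the third part.

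There is no real obstacle here, as each identity is a one-line consequence of well-defined pairings. The only subtlety worth mentioning is the well-posedness of the third flow on $\cM(K^n) \setminus \cM^0(K^n)$: the coefficient of $\underline{v}$ in \eqref{newflow11} blows up exactly on $\cM^0(K^n)$, so one must restrict the flow to the open subset where the denominator is nonzero; as long as the trajectory remains there, the computation above guarantees $\cR(\underline{z}(t))$ is constant, and this invariance is what ultimately allows one to keep the trajectory away from $\cM^0(K^n)$ whenever $\cR(\underline{z}(0)) \neq 0$ and $\underline{z}(0) \notin \cM^0(K^n)$.
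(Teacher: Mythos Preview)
Your proof is correct and follows essentially the same approach as the paper: differentiate the candidate conserved quantity, insert the flow equation, and observe the algebraic cancellation, invoking the tracelessness of $\widehat{\underline{Ein}}_{II}$ for the first flow. The paper in fact omits the third computation as ``analogous,'' so you have written out what the paper leaves implicit. One small quibble: your final remark that constancy of $\cR$ ``ultimately allows one to keep the trajectory away from $\cM^0(K^n)$'' is not quite justified, since $\cM^0$ is defined by $\langle\underline{v},\underline{Ein}\rangle=0$ rather than $\cR=0$, and these are not the same condition; but this is a side comment outside the scope of the proposition itself.
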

Recall that unless the Einstein metric $\underline{z}_0$ of type II is Einstein-flat, one has 
$\langle\underline{v}(\underline{z}_0),\underline{Ein}(\underline{z}_0)\rangle\ne 0$, that is 
$\underline{z}_0\notin\cM^0(K^n)$. Therefore by continuity there is a whole neighborhood of $\underline{z}_0$, 
which does not meet $\cM^0(K^n)$.
\begin{proof}
The first claim follows from the tracelessness of $\underline{\widehat{Ric}}_{II}$, since
$$
\frac{\rd}{\rd t}\,\langle\underline{z}(t),\underline{z}(t)\rangle=
2\langle\underline{z}(t),\dot{\underline{z}}(t)\rangle
=-4 \langle\underline{z}(t),\underline{\widehat{Ric}}_{II}(\underline{z}(t))\rangle
=0.
$$
As for the second claim 
\begin{align*}
\frac{\rd}{\rd t}\,V(\underline{z}(t))&=\langle \dot{\underline{z}}(t),\underline{v}(\underline{z}(t))\rangle\\
&=-2\langle \underline{Ein}(\underline{z}(t), \underline{v}(\underline{z}(t))\rangle
+2\frac{\langle\underline{v}(\underline{z}(t),\underline{Ein}(\underline{z}(t))\rangle}
{||\underline{v}(\underline{z}(t))||^2}
\langle\underline{v}(\underline{z}(t)),\underline{v}(\underline{z}(t))\rangle =0.
\end{align*}
The proof of the last claim is analogous and will be left out.
\end{proof}
In analogy to Proposition \ref{prop:31} we have
\begin{proposition}\label{prop:newflow0}
Let $\underline{z}(t)$ be a solution of one of the three flow equations \eqref{newflow}, \eqref{newflow1} and
\eqref{newflow11}
with initial condition $\underline{z}(0)$ and 
let $\lambda>0$ be arbitrary.  
Then $\underline{z}^\lambda(t)=\lambda\underline{z}(\lambda^{(n-6)/2}t)$ is a solution of the same flow equation
with initial condition $\lambda\underline{z}(0)$.
\end{proposition}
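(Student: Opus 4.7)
The plan is to verify homogeneity of the right-hand side of each of the three flow equations in the variable $\underline{z}$, and then confirm by a direct chain-rule computation that the rescaled curve $\underline{z}^{\lambda}(t) = \lambda\, \underline{z}(\lambda^{(n-6)/2} t)$ satisfies the same equation with initial condition $\lambda\,\underline{z}(0)$. This mirrors Proposition \ref{prop:31}, with the only real input being the list of scaling exponents already tabulated in the paper.

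First I would record the relevant scaling laws: by \eqref{Ricscale}, \eqref{volgrad2}, \eqref{volumescaling} and \eqref{scaling3},
\[
\underline{Ein}(\lambda\underline{z}) = \lambda^{(n-4)/2}\,\underline{Ein}(\underline{z}), \qquad
\underline{v}(\lambda\underline{z}) = \lambda^{(n-2)/2}\,\underline{v}(\underline{z}),
\]
\[
\cR(\lambda\underline{z}) = \lambda^{(n-2)/2}\cR(\underline{z}), \qquad
V(\lambda\underline{z}) = \lambda^{n/2} V(\underline{z}).
\]
Using these, I would check that the vector field defining each of \eqref{newflow}, \eqref{newflow1}, \eqref{newflow11} is homogeneous of degree $(n-4)/2$ in $\underline{z}$. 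For \eqref{newflow} this is immediate from the definition \eqref{RicII} of $\underline{\widehat{Ric}}_{II}$ and the already-noted scaling $\underline{\widehat{Ric}}_{II}(\lambda\underline{z}) = \lambda^{(n-4)/2}\,\underline{\widehat{Ric}}_{II}(\underline{z})$. For \eqref{newflow1}, the prefactor $\langle \underline{v},\underline{Ein}\rangle / \|\underline{v}\|^2$ scales as $\lambda^{(n-3)-(n-2)} = \lambda^{-1}$, and multiplying by $\underline{v}$ (scaling $\lambda^{(n-2)/2}$) yields overall exponent $(n-4)/2$, matching the first term $-2\,\underline{Ein}$. The same count applies to \eqref{newflow11}: $\|\underline{Ein}\|^2 / \langle \underline{v},\underline{Ein}\rangle$ contributes $\lambda^{(n-4)-(n-3)} = \lambda^{-1}$, which combined with the vector factor again gives $(n-4)/2$.

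Next I would compute the time derivative of the rescaled curve. Setting $s = \lambda^{(n-6)/2}\,t$, the chain rule gives
\[
\dot{\underline{z}}^\lambda(t) \;=\; \lambda\cdot \lambda^{(n-6)/2}\,\dot{\underline{z}}(s) \;=\; \lambda^{(n-4)/2}\,\dot{\underline{z}}(s).
\]
Substituting the flow equation for $\underline{z}(s)$ and invoking the degree-$(n-4)/2$ homogeneity just established, the right-hand side equals the corresponding normalized Einstein vector field evaluated at $\lambda\,\underline{z}(s) = \underline{z}^{\lambda}(t)$. For \eqref{newflow}, e.g., one gets
\[
\dot{\underline{z}}^\lambda(t) = -2\,\lambda^{(n-4)/2}\,\underline{\widehat{Ric}}_{II}(\underline{z}(s)) = -2\,\underline{\widehat{Ric}}_{II}(\lambda\underline{z}(s)) = -2\,\underline{\widehat{Ric}}_{II}(\underline{z}^{\lambda}(t)),
\]
and analogous identities hold for \eqref{newflow1} and \eqref{newflow11}. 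Finally, at $t=0$ one has $s=0$, so $\underline{z}^{\lambda}(0) = \lambda\,\underline{z}(0)$, matching the asserted initial condition.

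There is no genuine obstacle here; the only possible source of error is an arithmetic slip in the exponent bookkeeping, in particular verifying that the prefactor $\lambda^{(n-6)/2}$ in the time argument is precisely what is needed so that the factor $\lambda$ pulled out by differentiating the spatial rescaling, multiplied by the factor produced by rescaling time, produces the homogeneity exponent $(n-4)/2$ of the right-hand side. Once one observes that the exponent $(n-6)/2$ is forced by the identity $1 + (n-6)/2 = (n-4)/2$, the proof reduces to a uniform computation applied to each of the three flows.
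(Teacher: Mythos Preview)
Your proof is correct and follows exactly the approach the paper itself indicates: the paper does not give an explicit proof of this proposition, merely stating it ``in analogy to Proposition~\ref{prop:31}'', which in turn is justified only by the remark ``by the scaling properties of the quantities involved''. Your explicit verification that each right-hand side is homogeneous of degree $(n-4)/2$, followed by the chain-rule computation producing the exponent identity $1+(n-6)/2=(n-4)/2$, is precisely the content the paper leaves implicit.
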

Set 
\begin{equation*}
 \widehat{\cR}(\underline{z})=\frac{\cR(\underline{z})}{V(\underline{z})^{\frac{n-2}{n}}}
=\cR\left(\frac{1}{V(\underline{z})^{\frac{2}{n}}}\underline{z}\right)\;,
\end{equation*}
a scale invariant quantity. In analogy to Theorem \ref{theo:4} we have 
\begin{theorem}\label{theo:6}
Under the flow \eqref{newflow} $\widehat{\cR}(\underline{z}(t))$ is decreasing and strictly decreasing except 
at an Einstein metric of type II 
\begin{equation}\label{rren}
 \frac{\rd}{\rd t}\widehat{\cR}(\underline{z}(t))=
-\frac{2}{V(\underline{z}(t))^{\frac{n-2}{n}}}
\Big|\Big|\underline{\widehat{Ric}}_{II}(\underline{z}(t))\Big|\Big|^2.
\end{equation}
Under the flow \eqref{newflow1} $\cR(\underline{z}(t))$ is strictly decreasing except 
at an Einstein metric of type II due to 
\begin{equation}\label{dotrII}
 \frac{\rd}{\rd t}\cR(\underline{z}(t))=-2\frac{1}{||\underline{v}(\underline{z}(t))||^2}
\left(||\underline{v}(\underline{z}(t))||^2||\underline{Ein}(\underline{z}(t))||^2-
\langle\underline{v}(\underline{z}(t)),\underline{Ein}_{II}(\underline{z}(t))\rangle^2\right)
\end{equation}
and Schwarz inequality.
\end{theorem}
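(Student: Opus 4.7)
The plan is to differentiate the functional of interest along each flow using the chain rule, exploiting the fact that $\underline{Ein}=\underline{\nabla}\cR$ and $\underline{v}=\underline{\nabla}V$ are gradient vector fields, and then to recognize the resulting inner product either as $-2\|\widehat{\underline{Ein}}_{II}\|^2$ or as a Gram-type quantity amenable to Schwarz. The characterization of zeros will come directly from Theorem \ref{theo:EII}.

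For the first assertion, along the flow \eqref{newflow} I compute
\begin{equation*}
\frac{\rd}{\rd t}\widehat{\cR}(\underline{z}(t))=\frac{\langle\underline{Ein}(\underline{z}),\dot{\underline{z}}\rangle}{V(\underline{z})^{(n-2)/n}}-\frac{n-2}{n}\frac{\cR(\underline{z})\,\langle\underline{v}(\underline{z}),\dot{\underline{z}}\rangle}{V(\underline{z})^{(2n-2)/n}}=\frac{1}{V(\underline{z})^{(n-2)/n}}\Big\langle\underline{Ein}(\underline{z})-\frac{n-2}{n}\overline{\cR}(\underline{z})\,\underline{v}(\underline{z}),\,\dot{\underline{z}}\Big\rangle,
\end{equation*}
where $\overline{\cR}=\cR/V$ was introduced in \eqref{avtotalscalar}. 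By the definition \eqref{RicII}, the bracketed vector is precisely $\widehat{\underline{Ein}}_{II}(\underline{z})$. Substituting $\dot{\underline{z}}=-2\widehat{\underline{Ein}}_{II}(\underline{z})$ from \eqref{newflow} yields \eqref{rren}. Non-positivity of the right-hand side is immediate, and the right-hand side vanishes at time $t$ iff $\widehat{\underline{Ein}}_{II}(\underline{z}(t))=0$, which by the equivalence (1)$\Leftrightarrow$(2) of Theorem \ref{theo:EII} happens exactly when $\underline{z}(t)$ is an Einstein metric of type II.

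For the second assertion, along the flow \eqref{newflow1} the same chain-rule calculation gives
\begin{equation*}
\frac{\rd}{\rd t}\cR(\underline{z}(t))=\langle\underline{Ein}(\underline{z}),\dot{\underline{z}}\rangle=-2\|\underline{Ein}(\underline{z})\|^{2}+2\,\frac{\langle\underline{v}(\underline{z}),\underline{Ein}(\underline{z})\rangle^{2}}{\|\underline{v}(\underline{z})\|^{2}},
\end{equation*}
and placing the two terms over the common denominator $\|\underline{v}(\underline{z})\|^{2}$ produces exactly \eqref{dotrII}. The Cauchy--Schwarz inequality $\|\underline{v}\|^{2}\|\underline{Ein}\|^{2}\ge\langle\underline{v},\underline{Ein}\rangle^{2}$ then shows the right-hand side is $\le 0$, with equality iff $\underline{Ein}(\underline{z})$ and $\underline{v}(\underline{z})$ are collinear; by the equivalence (1)$\Leftrightarrow$(8) of Theorem \ref{theo:EII}, this in turn is equivalent to $\underline{z}$ being an Einstein metric of type II.

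The only potential obstacles are purely bookkeeping: one must check that the denominators never vanish. The volume $V(\underline{z})$ is strictly positive on $\cM(K^n)$ by definition, and $\underline{v}(\underline{z})$ cannot vanish on $\cM(K^n)$ because $\langle\underline{z},\underline{v}(\underline{z})\rangle=\frac{n}{2}V(\underline{z})>0$ by \eqref{volgrad3}. With these in hand the two computations are routine, and I do not anticipate any serious difficulty.
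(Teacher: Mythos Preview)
Your proof is correct and follows essentially the same route as the paper: compute the time derivative via the chain rule, recognize the gradient of $\widehat{\cR}$ as $V^{-(n-2)/n}\widehat{\underline{Ein}}_{II}$, substitute the flow equation, and for the second flow invoke Schwarz together with condition (8) of Theorem~\ref{theo:EII}. Your write-up simply fills in the details that the paper leaves to the reader.
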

The comment after Corollary \ref{corr:4} carries over to the present situation: Since any Einstein metric of 
type II is a 
fixed point of the flow \eqref{newflow}, any approach 
to such a metric under this flow can at most be asymptotic. 
\begin{proof}
A short calculation gives
$$
\frac{\rd}{\rd t}\widehat{\cR}(\underline{z}(t))=\frac{1}{V(\underline{z}(t))^{\frac{n-2}{n}}}
\langle\underline{\dot{z}}(t),
\underline{\widehat{Ric}}_{II}(\underline{z}(t))\rangle
$$
and \eqref{rren} follows by inserting the flow equation \eqref{newflow}. \eqref{dotrII} follows by an easy 
calculation, so the 
last claim is a consequence of Schwarz inequality and statement (8) in Theorem \ref{theo:EII}.
\end{proof}
In analogy to Corollary \ref{corr:4} we have the 
\begin{corollary}\label{corr:5}
 Let $\underline{z}_0$ be an Einstein metric of type II. For the first normalized Einstein flow of type II to 
approach $z_0$ from the initial 
condition $\underline{z}(t=0)\neq \underline{z}_0$ it is necessary that 
\begin{itemize}
 \item{$||\underline{z}(t=0)||=||\underline{z}_0||$}
 \item{$\widehat{\cR}(\underline{z}(t=0))>\widehat{\cR}(\underline{z}_0)$}
\end{itemize}
holds.

For the second normalized Einstein flow of type II to 
approach $z_0$ from the initial 
condition $\underline{z}(t=0)\neq \underline{z}_0$ it is necessary that 
\begin{itemize}
 \item{$V(\underline{z}(t=0))=V(\underline{z}_0)$}
 \item{$\cR(\underline{z}(t=0))>\cR(\underline{z}_0)$}
\end{itemize}
holds.
\end{corollary}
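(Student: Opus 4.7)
The plan is to mirror the argument of Corollary \ref{corr:4} verbatim, substituting the type~II conservation laws and the type~II monotonicity formulae supplied by Proposition \ref{prop:newflow} and Theorem \ref{theo:6} for their type~I counterparts.

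First I would treat the first normalized Einstein flow of type~II, \eqref{newflow}. Proposition \ref{prop:newflow} gives that $\|\underline{z}(t)\|$ is constant along the flow; combined with continuity of $t\mapsto\underline{z}(t)$ and $\underline{z}(t)\to\underline{z}_0$, this yields $\|\underline{z}(0)\|=\|\underline{z}_0\|$. For the second necessary condition, I invoke the identity \eqref{rren}: the scale invariant functional $\widehat{\cR}(\underline{z}(t))$ is nonincreasing, and strictly decreasing whenever $\underline{\widehat{Ric}}_{II}(\underline{z}(t))\neq 0$, equivalently whenever $\underline{z}(t)$ is not an Einstein metric of type~II (Theorem \ref{theo:EII}(2)). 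Because $\underline{z}(0)\neq\underline{z}_0$, $\underline{z}(0)$ cannot itself be an Einstein metric (else by uniqueness the flow would not move). Hence $\widehat{\cR}$ is strictly decreasing on a right-neighborhood of $0$, and since $\widehat{\cR}(\underline{z}(t))\to\widehat{\cR}(\underline{z}_0)$, one concludes $\widehat{\cR}(\underline{z}(0))>\widehat{\cR}(\underline{z}_0)$.

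For the second normalized flow \eqref{newflow1} the argument is formally parallel. Proposition \ref{prop:newflow} gives conservation of $V(\underline{z}(t))$, hence $V(\underline{z}(0))=V(\underline{z}_0)$. The identity \eqref{dotrII}, combined with Schwarz, shows $\dot{\cR}(\underline{z}(t))\leq 0$ with equality exactly when $\underline{v}(\underline{z}(t))$ and $\underline{Ein}(\underline{z}(t))$ are collinear, which by condition~(8) of Theorem \ref{theo:EII} is precisely the Einstein condition of type~II. The same reasoning as in the previous paragraph then yields $\cR(\underline{z}(0))>\cR(\underline{z}_0)$.

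The one point requiring genuine care, and effectively the only obstacle, is that the approach $\underline{z}(t)\to\underline{z}_0$ must be asymptotic rather than in finite time, so that strict monotonicity at $t=0$ translates into strict inequality against the limiting value. This is the type~II analogue of Lemma \ref{lem:asymp}: if $\underline{z}(T)=\underline{z}_0$ for some $T<\infty$, the time-reversed trajectory would start at the fixed point $\underline{z}_0$ with vanishing right-hand side and would, by uniqueness of the ODE, remain at $\underline{z}_0$, contradicting $\underline{z}(0)\neq\underline{z}_0$. Uniqueness relies on the smoothness of the right-hand sides (Lemma \ref{lem:3}); for \eqref{newflow1} one additionally needs that $\underline{v}(\underline{z})$ is non-vanishing near $\underline{z}_0$, which is guaranteed in a neighborhood of any $\underline{z}_0\in\cM(K^n)$ by \eqref{volgrad3}. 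Once this is in place the proof is complete.
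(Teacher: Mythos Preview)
Your argument is correct and matches the paper's intended reasoning: the paper states the corollary ``in analogy to Corollary~\ref{corr:4}'' without further proof, relying on Proposition~\ref{prop:newflow}, Theorem~\ref{theo:6}, and the remark (just before the corollary) that the approach to an Einstein metric under \eqref{newflow} can at most be asymptotic. You have simply spelled out these ingredients in the right order, including the type~II analogue of Lemma~\ref{lem:asymp}, so there is nothing to add.
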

In analogy to Corollary \ref{corr:41} we have 
\begin{corollary}\label{corr:51}~~\\
\begin{itemize}
\item{Let $\underline{z}_{\min}$ be a local minimum of $\widehat{\cR}(\underline{z})$ on the set 
$\cM_{r=||\underline{z}_{\min}||}(K^n)$.
Then there is a neighborhood $\cU(\underline{z}_{\min})$ in $\cM_{r=||\underline{z}_{\min}||}(K^n)$ of 
$\underline{z}_{\min}$, 
such that the flow \eqref{newflow} starting there will stay there and approach $\underline{z}_{\min}$.
}
\item{Let $\underline{z}_{\min}$ be a local minimum of $\cR(\underline{z})$ on the set 
$\widetilde{\cM}_{v=V(\underline{z}_{\min})}(K^n)$.
Then there is a neighborhood $\cU(\underline{z}_{\min})$ in $\widetilde{\cM}_{v=V(\underline{z}_{\min})}(K^n)$ of 
$\underline{z}_{\min}$, such that the flow \eqref{newflow1} starting there will stay there 
and approach $\underline{z}_{\min}$.}
\end{itemize}
\end{corollary}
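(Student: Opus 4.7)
The plan is to follow the proof of Corollary \ref{corr:41} closely, with $\widehat{\cR}$ playing the role of a strict Lyapunov function for the first normalized Einstein flow of type II, and $\cR$ playing that role for the second. For part (1), I would first observe that since $\widehat{\cR}=F_{II}$ (compare Theorem \ref{theo:EII}(3)) and its gradient $\underline{\nabla}F_{II}(\underline{z})=V(\underline{z})^{-(n-2)/n}\widehat{\underline{Ein}}_{II}(\underline{z})$ from \eqref{FIIgrad} is already tangent to $\cM_{r}(K^n)$ by the tracelessness of $\widehat{\underline{Ein}}_{II}$, a local minimum of $\widehat{\cR}$ on $\cM_{r}(K^n)$ forces $\widehat{\underline{Ein}}_{II}(\underline{z}_{\min})=0$, so by Theorem \ref{theo:EII}(2) $\underline{z}_{\min}$ is an Einstein metric of type II, and in particular a fixed point of the flow \eqref{newflow}.

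Next, Proposition \ref{prop:newflow} guarantees that \eqref{newflow} preserves $\lVert\underline{z}(t)\rVert$, so any trajectory issued from $\cM_{r=\lVert\underline{z}_{\min}\rVert}(K^n)$ stays on this constraint set. By Theorem \ref{theo:6}, along such a trajectory $\widehat{\cR}(\underline{z}(t))$ is strictly decreasing except at Einstein metrics of type II. Assuming, as is implicit and matches the non-degeneracy hypothesis of Corollary \ref{corr:41}, that $\underline{z}_{\min}$ is isolated in the critical set of $\widehat{\cR}|_{\cM_{r}(K^n)}$, I would pick a neighborhood $\cU(\underline{z}_{\min})$ and a constant $c>\widehat{\cR}(\underline{z}_{\min})$ so that the sublevel set $\{\underline{z}\in\cU\,:\,\widehat{\cR}(\underline{z})\le c\}$ has compact closure in $\cU$ and contains no other critical point. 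A standard LaSalle argument then shows that any trajectory starting in this sublevel set remains there (since $\widehat{\cR}$ is non-increasing) and its $\omega$-limit set is contained in the critical set within $\cU$, hence reduces to $\{\underline{z}_{\min}\}$. The approach is only asymptotic, by the same time-reversal argument as in Lemma \ref{lem:asymp}.

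Part (2) follows by an essentially identical argument, replacing the constraint $\lVert\underline{z}(t)\rVert=\text{const}$ by $V(\underline{z}(t))=\text{const}$ (again from Proposition \ref{prop:newflow}) and replacing $\widehat{\cR}$ by $\cR$ as the Lyapunov function: along the flow \eqref{newflow1}, formula \eqref{dotrII} combined with Schwarz's inequality shows that $\cR$ is strictly decreasing except at Einstein metrics of type II, which correspond exactly to the critical points of $\cR|_{\widetilde{\cM}_{v}(K^n)}$ by Theorem \ref{theo:EII}(5)--(6). The trapping-sublevel-set construction on $\widetilde{\cM}_{v=V(\underline{z}_{\min})}(K^n)$ goes through verbatim.

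The main technical point I expect to check is the implicit non-degeneracy assumption needed (as in Corollary \ref{corr:41}) to guarantee isolation of $\underline{z}_{\min}$: without it, a connected family of critical points sharing the same $\widehat{\cR}$- (resp.\ $\cR$-) value could contain $\underline{z}_{\min}$ and the flow could drift along this family rather than converging specifically to $\underline{z}_{\min}$. A secondary concern is ensuring that the trapping neighborhood remains bounded away from $\partial\cM(K^n)$ so the flow is defined for all positive time; this follows from Lemma \ref{lem:3} and continuity of $\widehat{\cR}$ (resp.\ $\cR$) by choosing the sublevel set with compact closure inside $\cM_{r}(K^n)$ (resp.\ $\widetilde{\cM}_{v}(K^n)$).
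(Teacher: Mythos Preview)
Your proposal is correct and follows exactly the approach the paper intends: Corollary~\ref{corr:51} is stated without proof, merely as ``in analogy to Corollary~\ref{corr:41}'', and your argument is precisely the type~II transcription of that proof, using Proposition~\ref{prop:newflow} for the conserved quantity and Theorem~\ref{theo:6} for the strict Lyapunov property. You are also right to flag that the non-degeneracy hypothesis, explicit in Corollary~\ref{corr:41}, is tacitly assumed here; without it the convergence to $\underline{z}_{\min}$ specifically (rather than to some connected critical set containing it) is not guaranteed.
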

In analogy to \eqref{deltaI1}, each of the quantities
\begin{align*}
\widehat{\Delta}_{II}^{(1)}(\underline{z})&
=\Big|\Big|\underline{\widehat{Ric}}_{II}(\underline{z})\Big|\Big|^2\\\nonumber
\widehat{\Delta}_{II}^{(2)}(\underline{z})&=
\Big|\Big|\underline{Ein}(\underline{z})
-\frac{\langle \underline{v}(\underline{z}),
\underline{Ein}(\underline{z})\rangle}{||\underline{v}(\underline{z})||^2}
\underline{v}(\underline{z})\Big|\Big|^2
\\\nonumber
\widehat{\Delta}_{II}^{(3)}(\underline{z})&=
\Big|\Big| \underline{Ein}(\underline{z})-
\frac{||\underline{Ein}(\underline{z})||^2}
{\langle\underline{v}(\underline{z}),\underline{Ein}(\underline{z})\rangle} \underline{v}(\underline{z})\Big|\Big|^2
\end{align*}
is a measure for how much the metric $\underline{z}$ deviates from an Einstein 
metric of type II on $K^n$.  


\section{Second variation of the total scalar curvature at the boundary 
\newline
of the equilateral $4-$simplex.}\label{sec:var}

In this section we will analyze the behavior of $\cR(\partial\sigma^4,\underline{z})$, where
$\underline{z}$ is close to the Einstein metric $\underline{a}$, by computing the second variation. 
Similar calculations have been carried out on the double tetrahedron in \cite{CGY}.

As a preparation we discuss the general case, namely the second order variation of the 
total scalar curvature at an arbitrary p.l. Einstein 
space $(K^n,\underline{z}_E)$ (of the first or second type). 
Then we consider the variation at fixed fourth moment of the edge lengths, that is $||\underline{z}||^2$ 
stays fixed. Finally we determine the variation at fixed volume $V(\underline{z})$.
For a corresponding discussion in the smooth case see \cite{Schoen}.

The pseudomanifold $\partial\sigma^4$ has five vertices and
ten $1-$simplexes. The relations $||\underline{a}||^2=10a^2$ and 
$a\sum_{\sigma^1}u_{\sigma^1}=\langle \underline{a},\underline{u}\rangle$ will often be used without 
explicit mentioning. Any nonempty set of vertices defines a simplex in $\partial\sigma^4$. 
Therefore any $1-$ simplex is contained in three $3-$ simplexes.
The automorphism group {\it Aut}$(\partial \sigma^4)$ is easily seen to 
be isomorphic to $\bS_5$, the permutation 
group of 5 elements. In fact, any restriction $s\in\;${\it Aut}$(\partial \sigma^4)$ to the five 
vertices is just a permutation. Conversely any permutation $s$ of the vertices can uniquely be extended 
to an automorphism of the pseudomanifold $\partial\sigma^4$. Any automorphism automatically 
extends to a metric preserving automorphism of $(\partial\sigma^4,\underline{a})$. We shall refer to 
this observation as the {\it symmetry} (of $(\partial\sigma^4,\underline{a})$).
There is a representation $s\mapsto T(s)$ of 
${\it Aut}(\partial \sigma^4)$ into $GL(10,\R)$ given as $(T(s)\underline{z})_{\sigma^1}=
z_{s^{-1}\sigma^1}$, where we assume the set of $1-$simplexes to be ordered in some way. $T(s)$ is just a 
permutation matrix and $\det T(s)^2=1$ holds. Observe that the set of $10\times 10$ permutation matrices
defines a representation of the permutation group $\bS_{10}$, a much greater set.

Furthermore consider the following linear real representation $s\rightarrow O(s)$ of $Aut(\partial\sigma^4)$ 
 on $\R^{10}$ given as $(O(s)\underline{x})_{\sigma^1}=x_{s^{-1}\sigma^1}$.
 Since obviously $||O(s)\underline{x}||=||\underline{x}||$, this representation is also orthogonal. 
 It leaves $\cM(\partial\sigma^4)$ 
 and each $\cM_{||\underline{a}||}(\partial\sigma^4)$ invariant. In other words $Aut(\partial\sigma^4)$ 
 acts as a transformation group on each of these spaces. 
 $(\partial\sigma^4,\underline{a})$ is the only fixed point on $\cM_{||\underline{a}||}(\partial\sigma^4)$.
 
Let $\underline{z}(t)$ be a local differentiable one-parameter family of edge lengths squared and 
let $\;\dot{}\;$ 
denote differentiation 
w.r.t. $t$. By \eqref{Reggevar} 
\begin{equation}\label{var5}
 \ddot{\cR}=\sum_{\sigma^{n-2}}\dot{\delta}(\sigma^{n-2})\dot{|\sigma^{n-2}|}+
 \sum_{\sigma^{n-2}}\delta(\sigma^{n-2})\ddot{|\sigma^{n-2}|}.
\end{equation}
The obvious relations
\begin{align*}
 \dot{|\sigma^{n-2}|}&=\sum_{\rho^1}\dot{z}_{\rho^1}\partial^{\rho^1}|\sigma^{n-2}|,\\\nonumber
 \ddot{|\sigma^{n-2}|}&=\sum_{\rho^1}\ddot{z}_{\rho^1}\partial^{\rho^1}|\sigma^{n-2}|
 +\sum_{\sigma^1, \rho^1}\dot{z}_{\sigma^1}\dot{z}_{\rho^1}\partial^{\sigma^1}\partial^{\rho^1}|\sigma^{n-2}|
\end{align*}
give the general relation
\begin{align}\label{var52}
\ddot{\cR}&=-\sum_{\rho^1}\sum_{\sigma^{n-2}}\sum_{\sigma^n\supset\sigma^{n-2}}\dot{(\sigma^{n-2},\sigma^n)}
\dot{z}_{\rho^1}\partial^{\rho^1}|\sigma^{n-2}|\\\nonumber
&\quad+\sum_{\rho^1,\sigma^{n-2}}\delta(\sigma^{n-2})
\left(\ddot{z}_{\rho^1}\partial^{\rho^1}|\sigma^{n-2}|
+\sum_{\sigma^1}\dot{z}_{\sigma^1}\dot{z}_{\rho^1}\partial^{\sigma^1}\partial^{\rho^1}|\sigma^{n-2}|
\right).
\end{align}
In the concrete case of 
$(\partial\sigma^4,\underline{a})$ we are able to determine the explicit form of the second order 
variation.

\subsection{Second variation of the total scalar curvature with fixed
fourth 
\newline
moment of the edge lengths.} 

\begin{theorem}\label{theo:locmax}
 The second order variation of the total scalar curvature on $\cM_{||\underline{a}||}(\partial\sigma^4)$ 
at $(\partial\sigma^4,\underline{a})$ is negative definite. 
Therefore $(\partial\sigma^4,\underline{a})$ is a local maximum on 
$\cM_{||\underline{a}||}(\partial\sigma^4)$.
\end{theorem}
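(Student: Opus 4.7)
The plan is to exploit the $\mathrm{Aut}(\partial\sigma^4)\cong\bS_5$ symmetry to reduce the problem to two scalar inequalities. Since $\underline{a}$ is an Einstein metric of type I with $\kappa_I>0$ by Example \ref{ex:1}, Theorem \ref{theo:EI}\,(4)--(6) identifies the constrained second variation of $\cR$ on $\cM_{||\underline{a}||}(\partial\sigma^4)$ at $\underline{a}$ with the unconstrained Hessian $H$ of the action $A_I(\underline{z})=\cR(\underline{z})-\frac{\kappa_I}{2}||\underline{z}||^2$ at $\underline{a}$, restricted to the tangent space
\[
T_{\underline{a}}\cM_{||\underline{a}||}(\partial\sigma^4)=\{\underline{u}\in\R^{10}\,:\,\langle\underline{a},\underline{u}\rangle=0\}.
\]
Thus the theorem reduces to showing that $H$ is negative definite on this $9$-dimensional subspace.

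Because $A_I$ is invariant under the orthogonal representation $s\mapsto O(s)$ of $\bS_5$ introduced above the theorem, $H$ commutes with every $O(s)$. The representation $O$ is the permutation representation on the $\binom{5}{2}=10$ unordered pairs of $\{1,\dots,5\}$; its irreducible decomposition over $\R$ reads
\[
\R^{10}=V_0\oplus V_1\oplus V_2,\qquad\dim V_k=1,4,5,
\]
with $V_0=\R\,\underline{a}$ trivial, $V_1\cong S^{(4,1)}$ realised as vectors $u_{ij}=\alpha_i+\alpha_j$ with $\sum_i\alpha_i=0$, and $V_2\cong S^{(3,2)}$ characterised by $\sum_{j\neq i}u_{ij}=0$ for each vertex $i$. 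Each Specht module is absolutely irreducible, so by Schur's lemma $H$ acts as a scalar $\mu_k$ on each $V_k$. Since the tangent space equals $V_1\oplus V_2$, the claim reduces to the two inequalities $\mu_1<0$ and $\mu_2<0$.

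To pin down $\mu_1$ and $\mu_2$ I would evaluate $\langle\underline{u}^{(k)},H\underline{u}^{(k)}\rangle$ on one concrete vector in each component: $u^{(1)}_{ij}=\alpha_i+\alpha_j$ with $\alpha=(4,-1,-1,-1,-1)$ for $V_1$ (equal to $3$ on the four edges at vertex $1$ and $-2$ on the remaining six), and $u^{(2)}_{12}=u^{(2)}_{34}=1$, $u^{(2)}_{13}=u^{(2)}_{24}=-1$, others zero, for $V_2$. Substituting these into the second variation formula \eqref{var52} specialised to $n=3$ (where $|\sigma^{n-2}|=\sqrt{z_{\sigma^1}}$, every deficit angle equals $\delta(\sigma^1)|_{\underline{a}}=1-3\phi(3)$, and $\phi(3)=\frac{1}{2\pi}\arccos\frac{1}{3}$) and subtracting $\kappa_I||\underline{u}^{(k)}||^2$ to pass from $H_{\cR}$ to $H$ expresses each $\mu_k$ as an explicit linear combination of the three inequivalent equilateral dihedral-angle derivatives $\partial^{\rho^1}(\sigma^1,\sigma^3)|_{\underline{a}}$ (distinguished by whether $\rho^1$ equals $\sigma^1$, shares exactly one vertex with $\sigma^1$ inside $\sigma^3$, or is the edge of $\sigma^3$ opposite to $\sigma^1$), together with simple combinatorial counts and the known value of $\kappa_I$ from Example \ref{ex:1}.

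The main obstacle is the concrete evaluation of those three dihedral-angle derivatives in an equilateral tetrahedron. They are obtained by differentiating $\cos 2\pi\,(\sigma^1,\sigma^3)=\langle n_1,n_2\rangle$ through the Gram-matrix description \eqref{amatrix}--\eqref{amatrixvol}, which is tedious but elementary, and carried out in Appendix \ref{app:3}. Once the resulting numbers are substituted, $\mu_1$ and $\mu_2$ become two specific real constants; a direct check that both are strictly negative completes the proof that $(\partial\sigma^4,\underline{a})$ is a (non-degenerate) local maximum of $\cR$ on $\cM_{||\underline{a}||}(\partial\sigma^4)$.
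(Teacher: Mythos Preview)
Your proposal is correct and lands on essentially the same proof as the paper, though with the representation theory placed up front rather than discovered \textit{a posteriori}. The paper parametrizes the sphere explicitly via \eqref{var1}, computes the second variation $\ddot\cR|_{t=0}$ as a sum $Q_1+Q_2$ of two $10\times10$ quadratic forms, and then expresses everything through the commuting $\bS_5$-intertwiners $N_1,N_2$ of \eqref{Ks}; Lemma~\ref{lem:knauf} yields the spectral projections $H_1,H_4,H_5$ onto exactly your $V_0,V_1,V_2$, and the eigenvalues on $H_4$ and $H_5$ are then read off and checked to be negative. Your route---identifying the constrained second variation with the Hessian of $A_I=\cR-\tfrac{\kappa_I}{2}\|\underline{z}\|^2$ on the tangent space and invoking Schur's lemma to reduce to two Rayleigh quotients on explicit test vectors---is equivalent (the Lagrangian identification is precisely what the paper's curve computation produces once one tracks the $\langle\ddot{\underline{z}}(0),\nabla\cR\rangle$ term) and arguably cleaner, since it bypasses assembling the full matrices. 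Either way the computation bottoms out in the same three equilateral dihedral-angle derivatives $\alpha,\beta,\gamma$ of Appendix~\ref{app:3}, and both approaches require the same final numerical check that the two nontrivial eigenvalues are negative.
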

For a comparison with the smooth case, see \cite{Schoen}, p. 125.

The remainder of this subsection is devoted to a proof of this theorem.
So we specialize \eqref{var52} to $(\partial\sigma^4,\underline{a})$, 
such that in particular $n=3$, and we will take recourse to \eqref{var5} rather than \eqref{var52}.
Also we make the choice 
\begin{equation}\label{var1}
 \underline{z}(t)=||\underline{a}||\frac{\underline{a}+t\underline{u}}{||\underline{a}+t\underline{u}||},
\end{equation}
a vector with $||\underline{z}(t)||=||\underline{a}||$ and $\underline{z}(t=0)=\underline{a}$. 
$\underline{u}$ is arbitrary and $-\varepsilon< t< \varepsilon$ with  $\varepsilon>0$ 
sufficiently small. 

Set $F_{\underline{u}}(t)=\cR(\underline{z}(t))$, 
so the object of interest is $\ddot{F}_{\underline{u}}(t=0)$. Observe that $F_{\underline{u}=0}(t)$ 
is a constant, namely $R(\underline{a})$. The following lemma will be useful.
\begin{lemma}\label{lem:F}
For any $\lambda$ the relation 
\begin{equation}\label{var1301}
F_{\underline{u}+\lambda\underline{a}}(t)
=F_{\underline{u}}(t^\prime)
\end{equation}
with $t^\prime=t/(1+\lambda t)$ is valid, such that 
\begin{equation}\label{var1302}
\frac{\rd^2}{\rd t^2}F_{\underline{u}+\lambda\underline{a}}(t=0)=
\frac{\rd^2}{\rd t^{\prime^2}}F_{\underline{u}}(t^\prime=0)
\end{equation}
holds. In particular 
$F_{\underline{u}}(t)$ is constant if $P(\underline{a})\underline{u}=\underline{u}$ and 
\begin{equation}\label{var1303}
\ddot{F}_{\underline{u}}(t=0)=\ddot{F}_{(\1-P(\underline{a}))\underline{u}}(t=0).  
\end{equation}
holds for general $\underline{u}$.
\end{lemma}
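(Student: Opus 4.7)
The plan is to establish the identity \eqref{var1301} first by direct substitution into the formula \eqref{var1}, then derive \eqref{var1302} from the chain rule using the crucial fact that $\underline{a}$ is a critical point of $\cR$ restricted to $\cM_{||\underline{a}||}(\partial\sigma^4)$, and finally read off the statements about $P(\underline{a})$ from \eqref{var1302} applied to specific choices of $\lambda$.

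Concretely, I first substitute $\underline{u}+\lambda\underline{a}$ into the definition \eqref{var1} and pull out the common factor $(1+\lambda t)$ from both the numerator and the norm in the denominator; this immediately identifies the resulting expression with $\underline{z}_{\underline{u}}(t')$ for $t' = t/(1+\lambda t)$, whence \eqref{var1301}. Next I differentiate \eqref{var1301} twice at $t=0$ via the chain rule. Since $t'=0$ and $dt'/dt=1$ at $t=0$, I obtain
\begin{equation*}
\ddot{F}_{\underline{u}+\lambda\underline{a}}(0) = \ddot{F}_{\underline{u}}(0) + \dot{F}_{\underline{u}}(0)\,\frac{d^2 t'}{dt^2}\Big|_{t=0}.
\end{equation*}
Here is the one place where I must invoke external input: because $(\partial\sigma^4,\underline{a})$ is a p.l.\ Einstein space of type I by Example \ref{ex:1}, Theorem \ref{theo:EI}(5) says that $\underline{a}$ is a critical point of $\cR$ on $\cM_{||\underline{a}||}(\partial\sigma^4)$. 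Since the curve $\underline{z}(t)$ is constructed to lie on the sphere $||\underline{z}(t)||=||\underline{a}||$, it is tangent to $\cM_{||\underline{a}||}(\partial\sigma^4)$ at $t=0$, so $\dot{F}_{\underline{u}}(0)=0$ for every $\underline{u}$. This kills the second term and yields \eqref{var1302}.

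Finally, if $P(\underline{a})\underline{u}=\underline{u}$ then $\underline{u}=\lambda\underline{a}$ for some scalar $\lambda$, in which case \eqref{var1} gives $\underline{z}(t)=\underline{a}$ identically, so $F_{\underline{u}}(t)$ is the constant $\cR(\underline{a})$. For the last statement I decompose $\underline{u} = (\1-P(\underline{a}))\underline{u} + \lambda\underline{a}$ with $\lambda = \langle\underline{a},\underline{u}\rangle/||\underline{a}||^2$, and apply \eqref{var1302} with this $\lambda$ and with $\underline{u}$ replaced by $(\1-P(\underline{a}))\underline{u}$. The mild obstacle is simply justifying that the curve stays in $\cM(\partial\sigma^4)$ for small $|t|$ (which is immediate from openness of $\cM$ proved in Theorem \ref{theo:5}) and ensuring that the denominator $||\underline{a}+t\underline{u}||$ and the factor $1+\lambda t$ do not vanish, which holds for $|t|$ small enough.
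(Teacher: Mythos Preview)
Your proof is correct and follows essentially the same route as the paper's: both establish \eqref{var1301} by pulling out the factor $(1+\lambda t)$ from numerator and norm, both obtain \eqref{var1302} via the chain rule together with $\dot{F}_{\underline{u}}(0)=0$ (which the paper verifies directly from $\underline{Ein}(\underline{a})=\kappa\underline{a}$ and $\dot{\underline{z}}(0)=(\1-P(\underline{a}))\underline{u}$, while you invoke the equivalent characterization Theorem~\ref{theo:EI}(5)), and both deduce the final two statements by specializing $\lambda$.
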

\begin{proof}
\eqref{var1301} follows from the trivial relation 
$$
\frac{\underline{a}+t(\underline{u}+\lambda\underline{a})}
{||\underline{a}+t(\underline{u}+\lambda\underline{a})||}
=\frac{\underline{a}+t^\prime \underline{u}}{||\underline{a}+t^\prime \underline{u}||}.
$$
\eqref{var1302} follows from a short calculation using \eqref{var1301} and the relation 
$$
\dot{F}_{\underline{u}}(t=0)=\dot{\cR}(\underline{z}(t=0))=\langle\dot{\underline{z}}(t=0),
\underline{\nabla} \cR(\underline{a})\rangle
 =\langle\dot{\underline{z}}(t=0),k\underline{a}\rangle=0,
$$
which holds due to \eqref{var3} and since $(\partial\sigma^4,\underline{a})$  is a p.l. Einstein space.
The last claims follows from \eqref{var1301} by making the choice 
$\lambda=-\langle \underline{a},\underline{u}\rangle/||\underline{a}||^2$, such that 
$\underline{u}+\lambda\underline{a}=0$ and by using \eqref{var1302}.
\end{proof}
For the computation of \eqref{var5} the derivatives therein have to be calculated.
The relation
\begin{equation}\label{var2}
 \dot{\underline{z}}(t)=||\underline{a}||\left(\frac{1}{||\underline{a}+t\underline{u}||}\underline{u}
 -\frac{\langle(\underline{a}+t\underline{u}),\underline{u}\rangle}{||\underline{a}+t\underline{u}||^3}
 \left(\underline{a}+t\underline{u}\right)\right)
\end{equation}
gives
\begin{equation}\label{var3}
\dot{\underline{z}}(t=0)=(\1-P(\underline{a}))\underline{u}
\end{equation}
and therefore the first variation of the total scalar curvature at $t=0$ vanishes as should be, since
\begin{equation}\label{var4}
 \dot{\cR}(\underline{z}(t=0))=\langle\dot{\underline{z}}(t=0),\underline{\nabla}\cR(\underline{a})\rangle
 =\langle\dot{\underline{z}}(t=0),k\underline{a}\rangle=0.
\end{equation}
Relation \eqref{var3} gives
\begin{equation}\label{var6}
\dot{|\sigma^1|}(\underline{z}(t=0))=\dot{\sqrt{z_{\sigma^1}}}(t=0)
=\frac{\dot{z}_{\sigma^1}(t=0)}{2\sqrt{a}}
=\frac{1}{2\sqrt{a}}\left((\1-P(\underline{a}))\underline{u}\right)_{\sigma^1}.
\end{equation}
Taking the derivative of \eqref{var2} gives
\begin{equation}\label{var7}
 \ddot{\underline{z}}(t)=||\underline{a}||\left(-2\frac{\langle\underline{a},
 \underline{u}\rangle}{||\underline{a}+t\underline{u}||^3}\underline{u}
 -\frac{\langle\underline{u},\underline{u}\rangle}{||\underline{a}+t\underline{u}||^3}
 \left(\underline{a}+t\underline{u}\right)
 +\frac{3\langle(\underline{a}+t\underline{u}),\underline{u}\rangle^2}{||\underline{a}+t\underline{u}||^5}
  (\underline{a}+t\underline{u})\right)
\end{equation}
and hence
\begin{equation}\label{var8}
 \ddot{\underline{z}}(t=0)=-2
 \frac{\langle\underline{a},\underline{u}\rangle}{||\underline{a}||^2}\underline{u}
 -\frac{\langle\underline{u},\underline{u}\rangle}{||\underline{a}||^2}\underline{a}
 +\frac{3\langle\underline{a},\underline{u}\rangle^2}{||\underline{a}||^4}\underline{a}.
\end{equation}
The relation 
\begin{equation}\label{var9}
 \ddot{\sqrt{z_{\sigma^1}}}(t=0)=\frac{1}{2}\frac{\ddot{z}_{\sigma^1}(t=0)}{z_{\sigma^1}(t=0)^{1/2}}
 -\frac{1}{4}\frac{\dot{z}_{\sigma^1}^2(t=0)}{z_{\sigma^1}(t=0)^{3/2}}
\end{equation}
implies
\begin{equation}\label{var10}
\ddot{|\sigma^1|}(\underline{z}(t=0))
=-\frac{\langle\underline{a},\underline{u}\rangle u_{\sigma^1}}{\sqrt{a}||\underline{a}||^2}
-\frac{1}{2}\frac{\sqrt{a}\langle\underline{u},\underline{u}\rangle}{||\underline{a}||^2}
+\frac{3}{2}\frac{\sqrt{a}\langle \underline{u},P(\underline{a})\underline{u}\rangle}
{||\underline{a}||^2}
-\frac{1}{4}\frac{\left(\left(\1-P(\underline{a})\right)\underline{u}\right)_{\sigma^1}^2}{a^{3/2}},
\end{equation}
where 
$
\langle\underline{a},\underline{u}\rangle^2/||\underline{a}||^2
=\langle \underline{u},P(\underline{a})\underline{u}\rangle
$
has been used. 
Now we are able to provide the second term on the r.h.s.\ of \eqref{var5} in the present context.
A short calculation gives the following quadratic form 
\begin{equation}\label{var102}
\langle\underline{u},Q_2\underline{u}\rangle= \sum_{\sigma^{1}}\delta(\sigma^{1})\ddot{|\sigma^{1}|}(\underline{a})=
 \left(1-\frac{3}{2\pi}\arccos\frac{1}{3}\right)\sum_{\sigma^{1}}\ddot{|\sigma^{1}|}(\underline{a})
\end{equation}
that is 
\begin{equation}\label{var103}
 Q_2=-\frac{3}{4a^{3/2}}\left(1-\frac{3}{2\pi}\arccos\frac{1}{3}\right)\left(\1-P(\underline{a})\right).
\end{equation}
Use has been made of the symmetry by which all $\delta(\sigma^{1})$ are equal. Note that this result is 
in agrement with relation \eqref{var1302}. Actually by this relation one may make the replacement 
$\underline{u}\rightarrow (\1-P(\underline{a}))\underline{u}$ in \eqref{var10} providing an easier proof of 
\eqref{var102}. Below, see \eqref{V8}, 
a similar argument will be used to simplify an otherwise lengthier calculation.

The term $\dot{\delta}(\sigma^1)$ in \eqref{var5} (with $n=3$) is harder to come by. By the chain rule
\begin{equation}\label{var110}
 \dot{\delta}(\sigma^1)=\sum_{\rho^1}\partial^{\rho^1}\delta(\sigma^1)\dot{z}_{\rho^1},
\end{equation}
that is
\begin{equation}\label{var11}
\dot{\delta}(\sigma^1)=-\sum_{\rho^1}M^{\sigma^1\;\rho^1}\dot{z}_{\rho^1}
\end{equation}
with the $10\times 10$ matrix
\begin{equation}\label{var12}
 M^{\sigma^1\;\rho^1}=-\partial^{\rho^1}\delta(\sigma^1)=
 \sum_{\sigma^3\supset\sigma^1,\;\sigma^3\supset\rho^1}\partial^{\rho^1}(\sigma^1,\sigma^3).
\end{equation}
We claim that $\partial^{\rho^1}(\sigma^1,\sigma^3)=0$ unless both $\rho^1$ and $\sigma^1$ are in $\sigma^3$ and 
then
\begin{equation}\label{M0}
\partial^{\rho^1}(\sigma^1,\sigma^3)(\underline{a})= \begin{cases}
 -\frac{1}{2\pi a 3\sqrt{2}}\quad&\mbox{if}\quad\sigma^1=\rho^1\\
 \:\:\:\frac{1}{2\pi a 3\sqrt{2}}&\mbox{if}\quad \sigma^1\neq\rho^1,\sigma^1\cap\rho^1\neq 
 \emptyset\\
  -\frac{1}{2\pi a\sqrt{2}}\quad &\mbox{if}\quad\sigma^1\cap\rho^1=\emptyset.
\end{cases}
\end{equation}
The summation over $\sigma^3$ in \eqref{var12} may be carried out using the combinatorial structure of 
$\partial\sigma^4$, see the discussion at the beginning Appendix \ref{app:3}, to give
\begin{equation}\label{M1}
M^{\rho^1,\sigma^1}=
 \begin{cases}
 -\frac{1}{2\pi a \sqrt{2}}\quad&\mbox{if}\quad\sigma^1=\rho^1\\
\:\:\:\frac{2}{2\pi a 3\sqrt{2}}&\mbox{if}\quad \sigma^1\neq\rho^1,\sigma^1\cap\rho^1\neq 
 \emptyset\\
 -\frac{1}{2\pi a\sqrt{2}}\quad &\mbox{if}\quad\sigma^1\cap\rho^1=\emptyset.
\end{cases} 
\end{equation}
In particular $M$ is a symmetric matrix.
The proof will be given in Appendix \ref{app:3}.
Thus 
\begin{equation}\label{var13}
 \dot{\delta}(\sigma^1)(\underline{a})=-\sum_{\tau^1}M^{\sigma^1\;\rho^1}(\underline{a})
 \left((\1-P(\underline{a}))u\right)_{\rho^1}.
\end{equation}
Introduce the symmetric matrices $N_1$ and $N_2$
\begin{align}\label{Ks}
N_1&=  \begin{cases}1&\mbox{if}\quad\sigma^1\cap\rho^1=\emptyset\\
                     0&\mbox{otherwise}
        \end{cases},\\
N_2&=\begin{cases} 1&\mbox{if}\quad \sigma^1\neq\rho^1,\sigma^1\cap\rho^1\neq \emptyset\\
                   0&\mbox{otherwise}
      \end{cases}.      
\end{align}
An explicit matrix representation of $N_1$ and $N_2$ will be given Appendix \ref{app:3}.

\begin{lemma}\label{lem:knauf} \cite{Knauf}
$N_1N_2=N_2N_1= 2(N_1+N_2)$ holds, so that these matrices commute.
They have the spectral decompositions
\begin{align}\label{n121}
\1 &= H_1+H_4+H_5\ \mbox{ , } \ N_1= 3H_1-2H_4+H_5\ \mbox{ , } \ 
  N_2 = 6H_1 + H_4 -2H_5\\\nonumber 
\1-P(\underline{a})&= (9\,\1-N_1-N_2)/10 = H_4 + H_5    
\end{align}
with the orthogonal projections $H_i$ to $i$--dimensional eigenspaces:
\begin{equation}\label{n122}
H_1 := (\1+N_1+N_2)/10\ \mbox{ , } \ H_4 := (6\, \1 - 4 N_1 + N_2)/15
\ \mbox{ , } \ H_5 := (3\,\1 + N_1-N_2)/6.
\end{equation}
\end{lemma}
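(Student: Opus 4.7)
The plan is to reduce everything to one combinatorial identity plus standard linear algebra. First I would note the trivial partition identity: for any ordered pair of $1$-simplexes in $\partial\sigma^4$ exactly one of the three alternatives $\sigma^1=\rho^1$, $\sigma^1\cap\rho^1\ne\emptyset$ with $\sigma^1\ne\rho^1$, $\sigma^1\cap\rho^1=\emptyset$ holds. Hence
\[
\1+N_1+N_2 = J,
\]
where $J$ is the $10\times 10$ all-ones matrix. Since $\underline{a}$ is (a multiple of) the all-ones vector and $P(\underline{a})=J/10$, this immediately gives the identification $H_1=P(\underline{a})=(\1+N_1+N_2)/10$ as a rank-one projection.

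Next I would prove $N_1N_2=2(N_1+N_2)$ by direct counting of $\tau^1$ for each of the three classes of $(\sigma^1,\rho^1)$. In the diagonal case the conditions ``$\tau^1$ disjoint from $\sigma^1$'' and ``$\tau^1$ shares a vertex with $\sigma^1$'' are incompatible, so $(N_1N_2)(\sigma^1,\sigma^1)=0$, matching the right-hand side. In the adjacent case one may take $\sigma^1=\{1,2\}$, $\rho^1=\{1,3\}$; the three $\tau^1$ disjoint from $\sigma^1$ lie in $\{3,4,5\}$, and exactly two of them ($\{3,4\}$ and $\{3,5\}$) share a vertex with $\rho^1$, giving the value $2$. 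In the disjoint case, take $\sigma^1=\{1,2\}$, $\rho^1=\{3,4\}$; again $\tau^1\subset\{3,4,5\}$, and excluding $\tau^1=\rho^1$ two choices remain, also giving $2$. Symmetry of $N_1$ and $N_2$ then forces $N_2N_1=(N_1N_2)^{T}=2(N_1+N_2)$, so they commute.

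From here the spectral decomposition is forced. Using $N_2=J-\1-N_1$ and $N_1J=3J$ (each edge has exactly three disjoint edges), the relation $N_1N_2=2(N_1+N_2)$ rearranges to
\[
N_1^{\,2}+N_1-2\1 = J.
\]
Restricted to the $9$-dimensional space $\Ran(\1-H_1)$ where $J$ vanishes, $N_1$ satisfies $(N_1-\1)(N_1+2\,\1)=0$, so its only eigenvalues off $H_1$ are $1$ and $-2$. Let $H_5$ and $H_4$ be the corresponding spectral projections. Since $\tr N_1=0$ (zero diagonal) and $N_1\!\restriction_{H_1}=3$, the conditions $\dim H_4+\dim H_5=9$ and $3-2\dim H_4+\dim H_5=0$ yield $\dim H_4=4$, $\dim H_5=5$, together with
\[
N_1 = 3H_1-2H_4+H_5.
\]
Substituting this into $N_2=J-\1-N_1=10H_1-(H_1+H_4+H_5)-(3H_1-2H_4+H_5)$ gives $N_2=6H_1+H_4-2H_5$.

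Finally I would invert the $3\times 3$ system $\{\1, N_1, N_2\}\mapsto\{H_1,H_4,H_5\}$, a routine computation yielding the stated formulas for $H_1,H_4,H_5$; the relation $\1-P(\underline{a})=H_4+H_5=(9\,\1-N_1-N_2)/10$ is then just $\1-H_1$ rewritten. The only genuine combinatorial content is the case analysis behind $N_1N_2=2(N_1+N_2)$; I expect no obstacle there, since the vertex set has only five elements, but some care is needed to make sure the count is symmetric in $\sigma^1,\rho^1$, which the explicit case distinction above confirms.
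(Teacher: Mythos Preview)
Your argument is correct and complete. The case analysis for $N_1N_2=2(N_1+N_2)$ is clean, the reduction to $N_1^2+N_1-2\1=J$ via $N_2=J-\1-N_1$ and $N_1J=3J$ is valid, and the trace computation pins down the dimensions $4$ and $5$ without ambiguity (in particular it confirms that both eigenvalues $1$ and $-2$ actually occur on $\Ran(\1-H_1)$). The inversion to recover the formulas \eqref{n122} is routine.

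This is a genuinely different route from the paper's. The paper fixes an explicit ordering of the ten $1$-simplexes, writes down $N_1$, $N_2$, and $P(\underline{a})$ as concrete $10\times 10$ matrices, and then appeals to direct (computer-assisted) matrix computation. Your proof, by contrast, never writes any matrix explicitly: it uses only the partition identity $\1+N_1+N_2=J$, the row-sum fact $N_1J=3J$, one combinatorial count, and $\tr N_1=0$. The gain is that your argument is basis-free, explains \emph{why} the eigenvalues are $3,-2,1$ (resp.\ $6,1,-2$) rather than merely verifying them, and would adapt with minimal change to $\partial\sigma^{n+1}$ for other $n$. The paper's approach buys only brevity of exposition at the cost of an opaque verification.
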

The proof will be given in Appendix \ref{app:3} by providing an explicit matrix representation for 
$N_1,N_2$ and $\1-P(\underline{a})$.

Set 
\begin{equation}\label{var14}
 M=\frac{1}{2\pi a^{3/2}6\sqrt{2}}\widehat{M}. 
\end{equation}
With respect to a specific ordering of the $1-$simplexes and hence of the matrix indices for $M$, 
$\widehat{M}$ is given by \eqref{simplord1} in Appendix \ref{app:3}. 
Therefore with
\begin{equation}\label{var16}
\widehat{Q}_{1}=-(1-P(\underline{a}))\widehat{M}(1-P(\underline{a}))
\end{equation}
we have
\begin{equation}\label{var15}
 Q_1=\frac{1}{2\pi a^{3/2}6\sqrt{2}}\widehat{Q}_1
 \ \mbox{ with }\ \widehat{Q}_1 = 5H_4 - 10H_5.
\end{equation}
To prove the theorem, it suffices to analyze the spectrum of $Q=Q_1+Q_2$. Indeed, 
observe that $\underline{a}\in\ker Q$ and hence also
$\langle\underline{a},Q\underline{a}\rangle=0$ as predicted by Lemma \ref{lem:F}. So $0$ is an eigenvalue
of $Q$ of multiplicity at least 1. The tangent space 
$T_{\underline{a}}\cM(\partial\sigma^4,\underline{a})$ to $\cM(\partial\sigma^4,\underline{a})$ at 
$\underline{a}$, however, is just $(\1-P(\underline{a}))$. Therefore, if we can show that $Q\le 0$ and 
that
$0$ is a simple eigenvalue, then we are done. Finally, it suffices to prove this for one value of $a$ 
and we 
choose $a$ such that $2\pi a^{3/2}6\sqrt{2}=1$. 
So for the matrix 
$\widehat{Q}_1-\kappa(1-P(\underline{a}))=(5-\kappa)H_4+(-10-\kappa)H_5$, 
with $\kappa= 9\sqrt{2}\pi\left(1-\frac{3}{2\pi}\arccos\frac{1}{3}\right)=16.4846$,
we obtain its eigenvalues and their multiplicities as \cite{KK}
\begin{center}
$$
-26.4846\quad(5-\mbox{fold}),\quad -11.4846\quad (4-\mbox{fold}),
\quad  0\quad (\mbox{simple}).
$$
\end{center}
This 
shows in particular that $0$ is a simple eigenvalue.

The degeneracies of the eigenvalues in the two second variations have a simple explanation in terms of 
 representation theory. Indeed we have the following 
 \begin{theorem}\label{theo:intertw}
 Both $N_1$ and $N_2$ are intertwiners for the representation $O(s)$ of $Aut(\partial\sigma^4)$ on 
 $\R^{10}$.
 In addition $O(s)P(\underline{a})=P(\underline{a})O(s)=P(\underline{a})$ holds. 
 \end{theorem}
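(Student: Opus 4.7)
The plan is to reduce everything to the combinatorial fact that the automorphism group acts on the set of ordered pairs $(\sigma^1,\rho^1)$ of $1$-simplexes while preserving the intersection pattern $\sigma^1\cap\rho^1$. First I would unpack what it means for a $10\times 10$ matrix $B$ to intertwine $O(s)$. From the definition $(O(s)\underline{x})_{\sigma^1}=x_{s^{-1}\sigma^1}$ one computes
\begin{equation*}
(O(s)\,B\,\underline{x})_{\sigma^1}=\sum_{\rho^1}B_{s^{-1}\sigma^1,\rho^1}\,x_{\rho^1},\qquad
(B\,O(s)\underline{x})_{\sigma^1}=\sum_{\rho^1}B_{\sigma^1,s\rho^1}\,x_{\rho^1},
\end{equation*}
so $O(s)B=B\,O(s)$ for all $s$ is equivalent to the invariance
\begin{equation*}
B_{s\sigma^1,s\rho^1}=B_{\sigma^1,\rho^1}\qquad\text{for all }s\in\mathrm{Aut}(\partial\sigma^4),\ \sigma^1,\rho^1\in\partial\sigma^4.
\end{equation*}

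Next I would apply this criterion to $N_1$ and $N_2$. Since any $s\in\mathrm{Aut}(\partial\sigma^4)$ is induced by a permutation of the five vertices, it preserves set-theoretic operations on simplexes; in particular $\sigma^1\cap\rho^1=\emptyset$ iff $s\sigma^1\cap s\rho^1=\emptyset$, and $\sigma^1=\rho^1$ iff $s\sigma^1=s\rho^1$. Consequently the three mutually exclusive conditions defining the support patterns of $N_1$ and $N_2$ (empty intersection; distinct with nonempty intersection; equal) are all stable under $s$, so $(N_i)_{s\sigma^1,s\rho^1}=(N_i)_{\sigma^1,\rho^1}$ for $i=1,2$, which by the criterion above gives the intertwining property.

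For the second assertion I would use that the vector $\underline{a}$ has all components equal, hence $O(s)\underline{a}=\underline{a}$ for every $s$, so the orthogonality of $O(s)$ gives $\langle\underline{a},O(s)\underline{y}\rangle=\langle O(s^{-1})\underline{a},\underline{y}\rangle=\langle\underline{a},\underline{y}\rangle$. Inserting this in the explicit formula \eqref{pxx} for $P(\underline{a})$ yields
\begin{equation*}
P(\underline{a})\,O(s)\underline{y}=\frac{\langle\underline{a},\underline{y}\rangle}{\|\underline{a}\|^{2}}\underline{a}=P(\underline{a})\underline{y},\qquad
O(s)\,P(\underline{a})\underline{y}=\frac{\langle\underline{a},\underline{y}\rangle}{\|\underline{a}\|^{2}}O(s)\underline{a}=P(\underline{a})\underline{y},
\end{equation*}
which is exactly $O(s)P(\underline{a})=P(\underline{a})O(s)=P(\underline{a})$. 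There is no real obstacle here; the only point requiring a bit of care is the bookkeeping between the action of $s$ on simplex indices and the matrix indices, which is handled once by the invariance criterion displayed above.
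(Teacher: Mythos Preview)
Your proof is correct and follows essentially the same approach as the paper: both reduce the intertwining to the observation that automorphisms of $\partial\sigma^4$ preserve the three possible intersection types of pairs of $1$-simplexes, and both treat the $P(\underline{a})$ statement as a direct consequence of $O(s)\underline{a}=\underline{a}$. You are merely more explicit than the paper in writing out the criterion $B_{s\sigma^1,s\rho^1}=B_{\sigma^1,\rho^1}$ and in computing the $P(\underline{a})$ identity, which the paper calls trivial.
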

  \begin{proof}
 The last part is trivial. As for the first part observe that for any pair of 1-simplexes $\sigma^1$ and $\tau^1$
 and any $s$ the following is valid
   \begin{itemize}
   \item{$\sigma^1=s\tau^1$ if and only if $s^{-1}\sigma^1=\tau$}
    \item{$\sigma^1$ and $s\tau^1$ have exactly one vertex in common if and only $s^{-1}\sigma^1$ and $\tau^1$ 
    have 
    one vertex in common}
    \item{$\sigma^1$ and $s\tau^1$ have no vertex in common if and only $s^{-1}\sigma^1$ and $\tau^1$ have 
    no vertex in common.}
   \end{itemize}
The first claim then follows directly from the definitions of $N_1$ and $N_2$.
\end{proof}
\begin{corollary}\label{cor:intertw}
The spaces $\Ran H_i\;i=1,4,5$ are invariant under the representation $O(s)$. 
\end{corollary}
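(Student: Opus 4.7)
The plan is to observe that each projection $H_i$ is a polynomial (in fact a linear combination) in $\1$, $N_1$, and $N_2$, and therefore inherits the intertwining property from these operators. From Lemma \ref{lem:knauf} we have explicitly
\begin{equation*}
H_1 = \tfrac{1}{10}(\1 + N_1 + N_2),\qquad H_4 = \tfrac{1}{15}(6\,\1 - 4N_1 + N_2),\qquad H_5 = \tfrac{1}{6}(3\,\1 + N_1 - N_2),
\end{equation*}
so each $H_i$ lies in the (commutative) algebra generated by $\1, N_1, N_2$.

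First, by Theorem \ref{theo:intertw}, $O(s)N_j = N_j O(s)$ for $j=1,2$ and every $s \in Aut(\partial\sigma^4)$, and trivially $O(s)\1 = \1 O(s)$. Taking the appropriate linear combinations, I would conclude that $O(s) H_i = H_i O(s)$ for $i=1,4,5$ and all $s$.

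Finally, I would invoke the standard fact that whenever a projection $H$ commutes with an operator $T$, its range is $T$-invariant: for $v = H w \in \Ran H$, we have $T v = T H w = H T w \in \Ran H$. Applying this with $T = O(s)$ and $H = H_i$ gives the claim. There is no real obstacle here — the content of the corollary is essentially packaged already in Theorem \ref{theo:intertw} combined with the explicit spectral formulas of Lemma \ref{lem:knauf}; the corollary just records the representation-theoretic consequence that the eigenspaces of the commuting intertwiners $N_1, N_2$ carry subrepresentations of $O$, explaining the degeneracies observed in the spectrum of $\widehat{Q}_1$ (and, together with the analogous statement for $P(\underline{a})$, those in $Q = Q_1 + Q_2$).
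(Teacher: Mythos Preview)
Your proof is correct and follows precisely the paper's own argument: the paper simply states that the claim ``follows directly from \eqref{n122}'', i.e., from the expression of each $H_i$ as a linear combination of $\1, N_1, N_2$, combined (implicitly) with Theorem \ref{theo:intertw}. You have merely spelled out the routine details of that deduction.
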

\begin{proof} This follows directly from \eqref{n122}.
 \end{proof}

Now decompose the representation $O(s)$ into irreducible components. By this theorem each of the second 
variations is a 
multiple of the identity transformation on each of the irreducible components. Of course $\Ran P(\underline{a})$ 
is the (only) invariant subspace for the 
trivial representation. 
\begin{lemma}
The alternating representation $s\rightarrow \sign s$ does not appear as a sub-representation of $O(s)$.
\end{lemma}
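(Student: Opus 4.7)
The plan is to exploit the fact that $O$ is simply the permutation representation of $\mathrm{Aut}(\partial\sigma^4)\cong \bS_5$ on the ten $1$-simplexes, and that each $1$-simplex is a two-element subset of the vertex set $\{1,\ldots,5\}$. With $e_{\sigma^1}$ denoting the standard basis vector indexed by the edge $\sigma^1$, the defining relation $(O(s)\underline{x})_{\sigma^1}=x_{s^{-1}\sigma^1}$ becomes $O(s)e_{\sigma^1}=e_{s\sigma^1}$, so each $O(s)$ is an honest permutation matrix and the action on edges is the natural $\bS_5$ action on two-element subsets of the vertices.

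Suppose, toward a contradiction, that there existed a nonzero vector $v=\sum_{\sigma^1}c_{\sigma^1}e_{\sigma^1}$ satisfying $O(s)v=\sign(s)\,v$ for every $s\in\bS_5$. Pick any edge $\sigma^1=\{i,j\}$ and take $s$ to be the transposition $(i\,j)$. This $s$ has sign $-1$, while as a permutation of unordered pairs it fixes $\sigma^1$ (it merely swaps its two vertices). Reading off the $\sigma^1$-component on both sides of $O(s)v=-v$ therefore yields $c_{\sigma^1}=-c_{\sigma^1}$, hence $c_{\sigma^1}=0$. Since $\sigma^1$ was arbitrary, $v=0$, a contradiction.

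I do not expect any serious obstacle here; the whole argument is the observation that for every two-element subset of $\{1,\ldots,5\}$ there is an odd permutation fixing it, namely the transposition of its two elements. The same reasoning shows more generally that the alternating representation never occurs in the permutation representation of $\bS_n$ on $k$-element subsets for $1\le k\le n-1$, since the transposition of any two elements of the chosen subset fixes the subset while carrying sign $-1$.
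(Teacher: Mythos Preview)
Your proof is correct and is essentially identical to the paper's: both assume a vector transforming by the sign character, apply the transposition of the two vertices of an arbitrary edge $\sigma^1$ (which fixes $\sigma^1$ but has sign $-1$), and deduce that every coordinate vanishes. Your added remark about the general $\bS_n$-action on $k$-subsets is a nice observation, though note that the phrase ``the transposition of any two elements of the chosen subset'' only makes sense for $k\ge 2$; for $k=1$ one must instead transpose two elements outside the subset, which requires $n\ge 3$.
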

\begin{proof}
Assume there is $\underline{x}$ such that $O(s)\;\underline{x}=\sign s\; \underline{x}$ holds for all $s$. 
We will show that 
$\underline{x}=0$. Fix any $\sigma^1$. Then $x_{s^{-1}\sigma^1}=\sign s\; x_{\sigma^1}$ by the definition of $O(s)$. 
Let $s$ be the transposition of the
two vertices contained in $\sigma^1$, such that $\sign s=-1$ and $s^{-1}\sigma^1=\sigma^1$. Therefore 
$x_{\sigma^1}=0$ holds. Since $\sigma^1$ is arbitrary, this concludes the proof.
\end{proof}
As for its irreducible representations, $\bS_5$ has two one-, four-, and five- 
dimensional representations and one 6-dimensional representation. The representation matrices can be chosen such 
that their entries are integer valued, see e.g. \cite{FH}, page 28 and 60. 
Observe that $\tr O(s)=4$ holds for any transposition $s$. By comparison,
an inspection of the characters evaluated at the transpositions shows that the four-dimensional representation arising as a sub-representation of our 
$O(s)$ is the one denoted by $V$ in \cite{FH}. Similarly the five-dimensional representation arising as a 
sub-representation of $O(s)$ is the one denoted by $W$ in \cite{FH}.
This gives all irreducible components 
of $O(s)$: The trivial one- , the four-dimensional representation $V$ and the five-dimensional representation $W$, 
all appearing once. To sum up, this discussion explains the degeneracies of the two eigenvalues of the 
second variations.

This completes the proof of Theorem \ref{theo:locmax}.

\subsection{Second variation of the total scalar curvature with fixed volume.} \label{sec:volfix}

Now we will consider the variation with
\begin{equation}\label{V1}
 \underline{z}(t)=\frac{V(\underline{a})^{2/3}}
 {V(\underline{a}+t\underline{u})^{2/3}}(\underline{a}+t\underline{u}).
\end{equation}
$\underline{z}(0)=\underline{a}$ and by 
\eqref{volumescaling} $V(\underline{z}(t))=V(\underline{a})$ for all $t$. 
Set 
$$
\cM_{V(\underline{a})}(\partial\sigma^4)=\{\underline{z}\in 
\cM(\partial\sigma^4)\,|\,V(\underline{z})=V(\underline{a})\}
$$ 
and $G_{\underline{u}}(t)=\cR(\underline{z}(t))$.
In analogy to Lemma \ref{lem:F} there is 
\begin{lemma}\label{lem:G}
For any $\lambda$ the relation 
\begin{equation}\label{G1}
G_{\underline{u}+\lambda\underline{a}}(t)
=G_{\underline{u}}(t^\prime)
\end{equation}
with $t^\prime=t/(1+\lambda t)$ is valid, such that 
\begin{equation}\label{G2}
\frac{\rd^2}{\rd t^2}G_{\underline{u}+\lambda\underline{a}}(t=0)=
\frac{\rd^2}{\rd t^{\prime^2}}G_{\underline{u}}(t^\prime=0)
\end{equation}
holds. In particular
$G_{\underline{u}}(t)$ is constant if $P(\underline{a})\underline{u}=\underline{u}$ and
\begin{equation}\label{G21}
 \ddot{G}_{\underline{u}}(t=0)=\ddot{G}_{(\1-P(\underline{a}))\underline{u}}(t=0)
\end{equation}
holds for all $\underline{u}$.
\end{lemma}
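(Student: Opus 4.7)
The plan is to mimic the proof of Lemma \ref{lem:F} step by step, with the scale-invariant normalization $||\underline{z}(t)||=||\underline{a}||$ replaced by the volume-invariant normalization $V(\underline{z}(t))=V(\underline{a})$. The whole argument hinges on two facts: a purely algebraic reparametrization identity, and the vanishing of $\dot{G}_{\underline{u}}(0)$, which comes from $(\partial\sigma^4,\underline{a})$ being an Einstein space of type~II.

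The first step is to establish \eqref{G1}. I would compute directly: since
\[
\underline{a}+t(\underline{u}+\lambda\underline{a}) = (1+\lambda t)\Bigl(\underline{a}+\tfrac{t}{1+\lambda t}\,\underline{u}\Bigr) = (1+\lambda t)(\underline{a}+t'\underline{u}),
\]
the scaling law \eqref{volumescaling} in dimension $n=3$ gives $V(\underline{a}+t(\underline{u}+\lambda\underline{a}))^{2/3} = (1+\lambda t)\,V(\underline{a}+t'\underline{u})^{2/3}$. Substituting into the definition \eqref{V1}, the factor $(1+\lambda t)$ cancels and one obtains $\underline{z}^{(\underline{u}+\lambda\underline{a})}(t) = \underline{z}^{(\underline{u})}(t')$, hence \eqref{G1}.

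The next step is to check that $\dot{G}_{\underline{u}}(0)=0$, so that the chain rule applied to \eqref{G1} yields \eqref{G2} without a stray first-derivative contribution. Differentiating \eqref{V1} at $t=0$ gives
\[
\dot{\underline{z}}(0) = \underline{u} - \tfrac{2}{3}\,\frac{\langle\underline{u},\underline{v}(\underline{a})\rangle}{V(\underline{a})}\,\underline{a}.
\]
Then $\dot{G}_{\underline{u}}(0) = \langle\dot{\underline{z}}(0),\underline{Ein}(\underline{a})\rangle$; plugging in the Einstein-II relation $\underline{Ein}(\underline{a}) = \kappa_{II}\,\underline{v}(\underline{a})$ and using Euler's identity $\langle\underline{a},\underline{v}(\underline{a})\rangle = \tfrac{3}{2}V(\underline{a})$ makes the two terms cancel. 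Differentiating \eqref{G1} twice, using $\frac{dt'}{dt}\big|_{t=0}=1$ together with the vanishing first derivative, then yields \eqref{G2}.

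For the two remaining statements, the argument is purely structural. If $P(\underline{a})\underline{u}=\underline{u}$, then $\underline{u}=\mu\underline{a}$ and \eqref{V1} reduces to
\[
\underline{z}(t) = \frac{V(\underline{a})^{2/3}}{V((1+\mu t)\underline{a})^{2/3}}(1+\mu t)\underline{a} = \underline{a}
\]
by \eqref{volumescaling}, so $G_{\underline{u}}(t)\equiv\cR(\underline{a})$ is constant. Finally, \eqref{G21} follows from \eqref{G2} by the same trick as in Lemma~\ref{lem:F}: writing $\underline{u} = (\1-P(\underline{a}))\underline{u} + \mu\underline{a}$ with $\mu = \langle\underline{a},\underline{u}\rangle/\|\underline{a}\|^2$, the second summand can be absorbed without changing $\ddot{G}$. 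The only nontrivial input anywhere is the Einstein-II property used to kill $\dot{G}_{\underline{u}}(0)$; everything else is the scaling law and algebra, so I do not anticipate a serious obstacle.
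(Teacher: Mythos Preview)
Your proposal is correct and follows essentially the same route as the paper: the algebraic reparametrization identity for \eqref{G1}, the vanishing of $\dot{G}_{\underline{u}}(0)$ to get \eqref{G2} via the chain rule, and the projection trick for \eqref{G21}. The only cosmetic difference is that you invoke the Einstein-II relation $\underline{Ein}(\underline{a})=\kappa_{II}\underline{v}(\underline{a})$ to kill $\dot{G}_{\underline{u}}(0)$, whereas the paper first uses $\underline{v}(\underline{a})\parallel\underline{a}$ to rewrite $\dot{\underline{z}}(0)=(\1-P(\underline{a}))\underline{u}$ (this is \eqref{V3}) and then applies the Einstein-I relation $\underline{Ein}(\underline{a})=k\underline{a}$; since $(\partial\sigma^4,\underline{a})$ is Einstein of both types, the two arguments are interchangeable.
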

\begin{proof}
\eqref{G1} follows from the trivial scaling relation
$$
\frac{1}{V(\underline{a}+t(\underline{u}+\lambda\underline{a}))}
\left(\underline{a}+t(\underline{u}+\lambda\underline{a})\right)
=\frac{1}{(1+\lambda t)^{1/2}V(\underline{a}+t^\prime \underline{u})}(\underline{a}+t^\prime\underline{u})
$$
and the scaling behavior of the Regge curvature.
In a moment we will prove
\begin{equation}\label{V3}
\dot{\underline{z}}(t=0)=(\1-P(\underline{a}))\underline{u}.
\end{equation}
Therefore the arguments in the proof of Lemma \ref{lem:F} may be 
taken over to verify the remaining claims.
\end{proof}
\begin{theorem}\label{theo:volvar}
The second variation $\ddot{G}_{\underline{u}}(t=0)$ defines an indefinite, non-degenerate quadratic form on
the tangent space $T_{\underline{a}}\cM_{V(\underline{a})}(\partial\sigma^4)$. 
Thus $(\partial\sigma^4,\underline{a})$ is a saddle point of the 
total scalar curvature when restricted to the space $\cM_{V(\underline{a})}(\partial\sigma^4)$.
\end{theorem}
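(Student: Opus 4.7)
The plan is to compute the constrained Hessian of $\cR$ on $\cM_{V(\underline{a})}(\partial\sigma^4)$ at $\underline{a}$ via a Lagrange multiplier argument, and then to diagonalize it using the same representation--theoretic decomposition employed in the proof of Theorem \ref{theo:locmax}.

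Using Lemma \ref{lem:G} I restrict to variations $\underline{u}$ with $P(\underline{a})\underline{u}=0$. Example \ref{ex:1} gives $\underline{v}(\underline{a})\propto\underline{a}$, so such $\underline{u}$ automatically lie in the tangent space $T_{\underline{a}}\cM_{V(\underline{a})}(\partial\sigma^4)$. Differentiating \eqref{V1} twice at $t=0$ I obtain $\dot{\underline{z}}(0)=\underline{u}$ and $\ddot{\underline{z}}(0)$ proportional to $\underline{a}$, the scalar factor being expressible through $\langle\underline{u},V''(\underline{a})\underline{u}\rangle$ upon differentiating the identity $V(\underline{z}(t))=V(\underline{a})$ twice. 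Inserting the Einstein condition $\underline{Ein}(\underline{a})=\kappa_{II}\underline{v}(\underline{a})$ from Definition \ref{def:einstein} together with Euler's relation \eqref{scaling4}, a short computation yields the Lagrange--multiplier identity
\begin{equation*}
\ddot{G}_{\underline{u}}(0)=\langle\underline{u},\bigl(\cR''(\underline{a})-\kappa_{II}V''(\underline{a})\bigr)\underline{u}\rangle,\qquad \underline{u}\in\Ran(\1-P(\underline{a})).
\end{equation*}

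I now compare with the quadratic form $Q_1+Q_2$ of Theorem \ref{theo:locmax}, which represents $\cR''(\underline{a})-\kappa_I\1$ on the same subspace. The difference of the two constrained Hessians therefore equals
\begin{equation*}
\kappa_I\1-\kappa_{II}V''(\underline{a})
\end{equation*}
restricted to $\Ran(\1-P(\underline{a}))=\Ran H_4\oplus\Ran H_5$. By Theorem \ref{theo:intertw} and Corollary \ref{cor:intertw}, $V''(\underline{a})$ is an $O(s)$--intertwiner, so Schur's lemma applied to the inequivalent irreducible blocks $H_4$ and $H_5$ forces $V''(\underline{a})=\alpha H_1+\beta H_4+\gamma H_5$ for scalars $\alpha,\beta,\gamma$. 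The scalar $\alpha$ is pinned down by differentiating \eqref{volgrad3} once, giving $V''(\underline{a})\underline{a}=\tfrac{1}{2}\underline{v}(\underline{a})$; on the tangent space only $\beta$ and $\gamma$ matter. Assembling everything, the constrained Hessian at $\underline{a}$ in the volume--preserving case reads
\begin{equation*}
Q^{\mathrm{vol}}=(q_4+\kappa_I-\kappa_{II}\beta)H_4+(q_5+\kappa_I-\kappa_{II}\gamma)H_5,
\end{equation*}
where $q_4,q_5<0$ are the two negative scalar values that $Q_1+Q_2$ takes on $\Ran H_4$ and $\Ran H_5$ respectively, as already computed in the proof of Theorem \ref{theo:locmax}.

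The main obstacle is the explicit determination of $\beta$ and $\gamma$, carried out in Appendix \ref{app:Dervol}. Concretely, one picks a single nonzero test vector inside each of $\Ran H_4$ and $\Ran H_5$, whose projections are given explicitly in \eqref{n122}, and evaluates $\langle\underline{u},V''(\underline{a})\underline{u}\rangle$ using the Cayley--Menger--type expression \eqref{amatrixvol} for $|\sigma^3|$ together with the Leibniz rule, summed over all five tetrahedra of $\partial\sigma^4$. The combinatorial bookkeeping is the direct counterpart, for $V$ in place of $\cR$, of the analysis of the matrix $M$ in Appendix \ref{app:3}. Once $\beta$ and $\gamma$ are in hand, a direct numerical check (entirely analogous to the one at the end of the proof of Theorem \ref{theo:locmax}) shows that exactly one of the two scalar eigenvalues of $Q^{\mathrm{vol}}$ becomes strictly positive while the other remains strictly negative, and that neither vanishes. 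This yields indefiniteness and non--degeneracy of $\ddot{G}_{\underline{u}}(0)$ on $T_{\underline{a}}\cM_{V(\underline{a})}(\partial\sigma^4)$, and hence that $\underline{a}$ is a saddle point of $\cR$ under the volume constraint.
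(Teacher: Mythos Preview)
Your proposal is correct and follows essentially the same route as the paper: both reduce the constrained Hessian to the operator $(\1-P(\underline{a}))(\cR''(\underline{a})-\kappa_{II}V''(\underline{a}))(\1-P(\underline{a}))$, compute $V''(\underline{a})=M_V$ explicitly (Appendix~\ref{app:Dervol}), and diagonalize in the $H_4\oplus H_5$ decomposition to obtain one negative and one positive block. The only organizational difference is that the paper works directly with the path \eqref{V1}, splitting the second term of \eqref{var5} into $Q_{2,V}+Q_{3,V}$, whereas you invoke the Lagrange identity $\ddot G_{\underline u}(0)=\langle\underline u,(\cR''-\kappa_{II}V'')\underline u\rangle$ and then subtract from the already-computed $Q_1+Q_2=\cR''-\kappa_I\1$; this is a slightly cleaner bookkeeping of the same computation, not a different argument.
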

Since the gradient $\underline{v}$ of the volume at $\underline{z}=\underline{a}$ 
is parallel to $\underline{a}$, the two tangent spaces 
$T_{\underline{a}}\cM_{||\underline{a}||}(\partial\sigma^4)$ and 
$T_{\underline{a}}\cM_{V(\underline{a})}(\partial\sigma^4)$ coincide.

Again for a comparison with the smooth case, see \cite{Schoen}.
\begin{proof}
\begin{equation}\label{V2}
 \dot{\underline{z}}(t)=\frac{V(\underline{a})^{2/3}}{V(\underline{a}+t\underline{u})^{2/3}}\underline{u}
 -\frac{2}{3}\frac{V(\underline{a})^{2/3}}{V(\underline{a}+t\underline{u})^{5/3}}
 \langle \underline{v}(\underline{a}+t\underline{u}),\underline{u}\rangle (\underline{a}+t\underline{u}).
\end{equation}
To establish \eqref{V3}, observe that $\underline{v}(\underline{a})=\lambda\underline{a}$ holds with 
$\lambda=\langle\underline{a},\underline{v}(\underline{a})\rangle/||\underline{a}||^2$. 
Therefore 
\begin{equation}\label{V31}
 \langle \underline{v}(\underline{a}),\underline{u}\rangle=
 \frac{\langle\underline{a},\underline{v}(\underline{a})\rangle
 \langle\underline{a},\underline{u}\rangle}{||\underline{a}||^2}=\frac{3}{2}V(\underline{a})
 \frac{\langle\underline{a},\underline{u}\rangle}{||\underline{a}||^2}.
\end{equation}
Use has been made of the Euler relation \eqref{volgrad3}. Inserting this into \eqref{V2} (with $t=0$) 
proves \eqref{V3}.

As a consequence of \eqref{V3} the first variation $\dot{G}_{\underline{u}}(t=0)$ vanishes as it 
should. Indeed,
\begin{equation}\label{V4}
\dot{G}_{\underline{u}}(t=0)=\langle\dot{\underline{z}}(t=0),\underline{Ein}(\underline{a})\rangle=
\langle (\1-P(\underline{a}))\underline{u},k\underline{a}\rangle=0.
\end{equation}
In addition 
\begin{equation}\label{V5}
\dot{|\sigma^1|}(\underline{z}(t=0))=
\frac{1}{2\sqrt{a}}\left((\1-P(\underline{a}))\underline{u}\right)_{\sigma^1}
\end{equation}
holds due to \eqref{V3}. By \eqref{var3} and \eqref{V3} $\underline{\dot{z}}(t=0)$ agree for both variations 
\eqref{var1} and \eqref{V1}. The same holds true for $|\dot{\sigma}^1|(t=0)$ by \eqref{var6} and \eqref{V5}.
Thus the first term in \eqref{var5} leads to the same quadratic form which we now denote by $Q_{1,V}$, that is 
$Q_{1,V}=Q_1$. 

The second derivative of $\underline{z}$ is
\begin{align}\label{V6}
\ddot{\underline{z}}(t)&=-
 \frac{4}{3}\frac{V(\underline{a})^{2/3}}{V(\underline{a}+t\underline{u})^{5/3}}
 \langle \underline{v}(\underline{a}+t\underline{u}),\underline{u}\rangle\underline{u}
 +\frac{10}{9}\frac{V(\underline{a})^{2/3}}{V(\underline{a}+t\underline{u})^{8/3}}
 \langle \underline{v}(\underline{a}+t\underline{u}),\underline{u}\rangle^2(\underline{a}+t\underline{u})
 \\\nonumber
&\quad-\frac{2}{3}\frac{V(\underline{a})^{2/3}}{V(\underline{a}+t\underline{u})^{5/3}} 
\langle \underline{\nabla}\,\underline{\nabla}V(\underline{a}+t\underline{u}),\underline{u}\otimes\underline{u}\rangle
(\underline{a}+t\underline{u}),
\end{align}
such that by \eqref{V31}
\begin{equation}\label{V7}
\ddot{\underline{z}}(t=0)
=-2\frac{\langle \underline{a},\underline{u}\rangle}{||\underline{a}||^2}\underline{u}
+\frac{5}{2}\frac{\langle \underline{a},\underline{u}\rangle^2}{||\underline{a}||^4}\underline{a}
-\frac{2}{3}\frac{1}{V(\underline{a})} 
\langle \underline{\nabla}\,\underline{\nabla}V(\underline{a}),\underline{u}\otimes\underline{u}\rangle
\underline{a}.
\end{equation}
The following observation allows us to shorten the calculation. By \eqref{G21} we may make the 
substitution 
$\underline{u}\rightarrow (\1-P(\underline{a}))\underline{u}$. Thus the two first terms on the r.h.s. of
\eqref{V7} vanish. The general relation \eqref{var9} then gives under this substitution
\begin{align}\label{V8}
 \ddot{|\sigma^1|}(\underline{z}(t=0))&=\ddot{\sqrt{z_{\sigma^1}}}(t=0)\\\nonumber&=-
 \frac{1}{4a^{3/2}}\left((\1-P(\underline{a}))u\right)_{\sigma^1}^2
 -\frac{2\sqrt{2}}{5 a}
 \langle (\1-P(\underline{a}))\underline{u}, M_V(\1-P(\underline{a}))\underline{u}\rangle
\end{align}
for all $\sigma^1$ and with the symmetric $10\times 10$ matrix $M_V$ given as 
\begin{equation}\label{MV1}
 M_V^{\rho^1\,\tau^1}=\partial^{\rho^1}\partial^{\tau^1}V(\underline{a})
 =\sum_{\sigma^3\in\partial\sigma^4}\partial^{\rho^1}\partial^{\tau^1}|\sigma^3|(\underline{a}).
\end{equation}
Thus we arrive at the following quadratic forms
\begin{equation}\label{V9}
\sum_{\sigma^1}\delta(\sigma^1)\ddot{|\sigma^1|}(\underline{z}(t=0))
 =\left(1-\frac{3}{2\pi}\arccos\frac{1}{3}\right)\sum_{\sigma^{1}}\ddot{|\sigma^{1}|}(\underline{a})
 =\langle\underline{u},\left(Q_{2,V}+Q_{3,V}\right)\underline{u}\rangle
 \end{equation}
 with 
 \begin{align}\label{q23}
 Q_{2,V}
 &=-\frac{1}{4a^{3/2}}\left(1-\frac{3}{2\pi}\arccos\frac{1}{3}\right)
 (\1-P(\underline{a}))
 \\\nonumber
 Q_{3,V}&=-\frac{4\sqrt{2}}{a}\left(1-\frac{3}{2\pi}\arccos\frac{1}{3}\right)
 (\1-P(\underline{a}))M_V(\1-P(\underline{a})).
 \end{align}
 \begin{lemma}\label{lem:MV}
$M_V$ is given as 
\begin{equation}\label{MV0}
M_{V}=-\frac{\sqrt{2}}{48a^{1/2}}\widehat{M}_{V,3}-\frac{2^{3/2}}{384a^{1/2}}\widehat{M}_{V,4}
\end{equation}
with 
\begin{equation}\label{MV2}
  \widehat{M}_{V,3}^{\rho^1\,\tau^1}=\begin{cases} \:\:\; 6\quad &\mbox{if}\quad \rho^1=\tau^1\\
 -2\quad &\mbox{if}  \quad \rho^1\neq\tau^1,\rho^1\cap\tau^1\neq 
 \emptyset\\
 \:\:\;3\quad &\mbox{if}\quad \rho^1\cap\tau^1=\emptyset
                         \end{cases}
\end{equation}
and
\begin{equation}\label{MV4}
 \widehat{M}_{V,4}^{\rho^1\,\tau^1}=\begin{cases} 3\quad \mbox{if}\quad \rho^1=\tau^1\\
 2 \quad \mbox{if}  \quad \rho^1\neq\tau^1,\rho^1\cap\tau^1\neq 
 \emptyset\\
 1\quad \mbox{if}\quad \rho^1\cap\tau^1=\emptyset
                         \end{cases}.
\end{equation}

\end{lemma}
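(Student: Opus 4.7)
The plan is to exploit the linearity $V(\underline{z})=\sum_{\sigma^3\in\partial\sigma^4}|\sigma^3|(\underline{z}(\sigma^3))$, which reduces the computation of $M_V^{\rho^1\tau^1}$ to
\begin{equation*}
M_V^{\rho^1\tau^1}=\sum_{\sigma^3\,:\,\sigma^3\supset\rho^1,\,\sigma^3\supset\tau^1}\partial^{\rho^1}\partial^{\tau^1}|\sigma^3|(\underline{a}),
\end{equation*}
a purely local calculation on an equilateral tetrahedron followed by a combinatorial sum over the $3$-simplices of $\partial\sigma^4$ containing both $\rho^1$ and $\tau^1$. Since every nonempty subset of the five vertices of $\partial\sigma^4$ is a simplex, a direct count yields exactly $N_{\rho^1\tau^1}\in\{3,2,1\}$ such tetrahedra according to whether $\rho^1=\tau^1$, whether $\rho^1\neq\tau^1$ share one vertex, or whether they are vertex-disjoint.

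For a single tetrahedron, the first step is to set $P:=|\sigma^3|^2=\det A(\underline{z}(\sigma^3))/36$ and apply the chain rule
\begin{equation*}
\partial^{\rho^1}\partial^{\tau^1}|\sigma^3|=\frac{\partial^{\rho^1}\partial^{\tau^1}P}{2|\sigma^3|}-\frac{\partial^{\rho^1}P\,\partial^{\tau^1}P}{4|\sigma^3|^3},
\end{equation*}
which, evaluated at $\underline{a}$, splits $M_V^{\rho^1\tau^1}$ into precisely the two contributions $\widehat{M}_{V,3}$ and $\widehat{M}_{V,4}$ in the statement. The key simplification is Lemma \ref{sympol}: $P$ is an $\bS_4$-invariant polynomial of degree three in the six edge-length-squares of $\sigma^3$, so at the equilateral point $\partial^{\rho^1}P(\underline{a})$ is independent of $\rho^1$, while $\partial^{\rho^1}\partial^{\tau^1}P(\underline{a})$ depends only on the $\bS_4$-orbit of the pair $(\rho^1,\tau^1)$, taking exactly three values $\beta_1,\beta_2,\beta_3$ corresponding to the three cases above.

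Assembling the pieces, the ``square-of-first-derivative'' part of $\partial^{\rho^1}\partial^{\tau^1}|\sigma^3|(\underline{a})$ is a universal constant independent of $(\rho^1,\tau^1)$, so after summing over the $N_{\rho^1\tau^1}$ containing tetrahedra it produces the combinatorial pattern $(3,2,1)$ of $\widehat{M}_{V,4}$. The ``second-derivative-of-$P$'' part instead produces the pattern $(3\beta_1,2\beta_2,\beta_3)$, which the stated entries $(6,-2,3)$ of $\widehat{M}_{V,3}$ force to be in the ratios $\beta_1:\beta_2:\beta_3=2:-1:3$. The overall multiplicative constants $-\sqrt{2}/(48 a^{1/2})$ and $-2^{3/2}/(384 a^{1/2})$ are then fixed by the single evaluation $|\sigma^3|(\underline{a})=a^{3/2}/(6\sqrt{2})$ from \eqref{Voleinstein}.

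The main obstacle is the explicit evaluation of $\beta_1,\beta_2,\beta_3$, i.e.\ the three symmetry-distinct second derivatives of the $3\times 3$ Gram determinant $\det A$ (equivalently the Cayley--Menger determinant) at the equilateral point. This reduces to reading off three quadratic coefficients of a degree-three $\bS_4$-invariant polynomial in six variables: the coefficient of $z_{\rho^1}^2$, that of $z_{\rho^1}z_{\tau^1}$ for edges sharing a vertex, and that of $z_{\rho^1}z_{\tau^1}$ for opposite edges. The expansion is tedious but elementary, and the detailed calculation is naturally relegated to Appendix \ref{app:Dervol}.
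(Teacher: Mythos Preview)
Your proposal is correct and follows essentially the same route as the paper: reduce $M_V$ to a single-tetrahedron second derivative $\partial^{\rho^1}\partial^{\tau^1}|\sigma^3|(\underline{a})$ via the chain rule on $|\sigma^3|=\sqrt{\det A}/6$, evaluate the first and second partials of $\det A$ at the equilateral point (three symmetry classes), and then weight by the combinatorial count $N_{\rho^1\tau^1}\in\{3,2,1\}$ of $3$-simplices in $\partial\sigma^4$ containing both edges. Your observation that the two chain-rule terms correspond \emph{exactly} to the two matrices $\widehat{M}_{V,4}$ (from the constant product-of-first-derivatives) and $\widehat{M}_{V,3}$ (from the genuine second derivative of $\det A$) is a clean conceptual point that the paper leaves implicit; note only that the actual ratios coming out of $\partial^{\rho^1}\partial^{\tau^1}\det A(\underline{a})$ are $-a/2:a/4:-3a/4=-2:1:-3$, the global sign being absorbed into the negative prefactor $-\sqrt{2}/(48a^{1/2})$.
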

The proof of this lemma will be given in Appendix \ref{app:Dervol}. With respect to the ordering 
\eqref{simplord0} of the 
$1-$simplexes $\widehat{M}_{3,V}$ and $\widehat{M}_{4,V}$ have a matrix representation given by 
\eqref{hatMatrixV} and \eqref{hatMatrixV3}.
To sum up, we have 
\begin{equation}\label{Q-V}
 Q_V=(\1-P(\underline{a}))\widetilde{M}_V(\1-P(\underline{a}))
\end{equation}
with $\widetilde{M}_V$ given as 
\begin{equation}\label{tildeMV0}
\widetilde{M}_{V}=-\gamma_1\widehat{M}-\gamma_2\;\1+\gamma_3\widehat{M}_{V,3}+\gamma_4\widehat{M}_{V,4}
\end{equation}
and where
\begin{align}\label{gammas}
\gamma_1&=\frac{1}{2\pi a^{3/2}6\sqrt{2}},\quad \hspace{2.3cm}
\gamma_2=\frac{1}{4a^{3/2}}\left(1-\frac{3}{2\pi}\arccos\frac{1}{3}\right)\\\nonumber
\gamma_3&=\frac{1}{6a^{3/2}}\left(1-\frac{3}{2\pi}\arccos\frac{1}{3}\right),
\quad\gamma_4=\frac{1}{24 a^{3/2}}\left(1-\frac{3}{2\pi}\arccos\frac{1}{3}\right).
\end{align}
Set $Q_V=\gamma_4\widetilde{Q}_V$. 
With $c:=\gamma_1/\gamma_{4}=\sqrt{2}(\pi(1- \frac{3}{2\pi}\arccos^{-1}(1/3)))^{-1}\approx 1.09193$ 
the following spectral decomposition
\begin{equation}\label{tildeq-v}
 \widetilde{Q}_V=(\1-P(\underline{a}))
 \left(-c\;\widehat{M}-6\;\1+4\;\widehat{M}_{V,3}+\widehat{M}_{V,4}\right)
 (\1-P(\underline{a}))
= (-11+5c) H_4 + (46-10c)H_5
\end{equation}
is valid.
In order to establish that $Q_V$ is indefinite for all $a$ with 0 being a simple eigenvalue, 
it suffices to show that $\widetilde{Q}_V$ has these properties. But $\widetilde{Q}_V$ has the 
following approximate eigenvalues with multiplicities \cite{KK}:
\[-5.54\quad(4-\mbox{fold}),\quad 0\quad (\mbox{simple}),\qquad 
35.08\quad (5-\mbox{fold})
\]
In particular we see again that $0$ is a simple eigenvalue. Also the interpretation of the degeneracies is as 
above, see Theorem \ref{theo:intertw}.
This concludes the proof of Theorem \ref{theo:volvar}.
\end{proof}
\section{Open Problems.}\label{sec:op}

The material provided so far gives rise to a host of open problems, of which we list some
\begin{itemize}
 \item{ Besides the examples already given find additional p.l. Einstein spaces.}
\item{In particular find p.l. Einstein metrics, which are of type I but not of type II or vice versa.}
\item{Given a pseudomanifold $K^n$, which admits an Einstein metric, are there proper subdivisions of $K^n$, 
which also admit an Einstein metric?}
\item{Given two pseudomanifolds $K_1$ and $K_2$ admitting Einstein metrics (of the same type), 
find necessary and sufficient conditions for the simplicial product $K_1\Delta K_2$ (see \cite{Spanier} 
for the definition) admitting 
an Einstein metric of the same type.} 
\item{ Compact hyperbolic manifolds are Einstein spaces. Do they have triangulations, 
which admit an Einstein metric?}
\item{Given any smooth (compact) Einstein space $\cM$, does it admit a sequence of finer and finer 
triangulations having Einstein metrics, such that the resulting p.l. Einstein spaces
approach $\cM$, e.g. in the sense of the Gromov-Hausdorff metric? For example do their total curvatures 
approach the total curvature of $\cM$, cf. \cite{CMS2}?}
\item{Can one use the concepts introduced here for interesting numerical simulations?}
\end{itemize}
For comparison recall some well known facts in the case of manifolds.
\begin{itemize}
{\item In three dimensions, $(M,g)$ is an Einstein manifold if and only if it has constant sectional curvature, 
see e.g \cite{Besse}.}
 \item {If $(M,g)$ is a four-dimensional Einstein manifold, then $\chi(M)\ge 0$ with equality 
only if $(M,g)$ is flat \cite{Berger}.}
\item{(J. Thorpe) If $(M,g)$ is a compact oriented Einstein manifold of dimension $4$, then 
$\chi(M)\ge 3/2 |\tau(M)|$ holds, where $\tau(M)$ is the signature of $M$ \cite{Thorpe,Hitchin}.}
\end{itemize}

\begin{appendix}

\section{Smooth Einstein spaces.}\label{app:Einstein}

For the purpose of making comparisons, we recall some basic and well known facts from the 
theory of Einstein spaces in Riemannian geometry, see e.g. \cite{Besse, Schoen}. In addition and for the purpose of 
comparison we shall elaborate on relations obtained from scaling the metric.

Let $M$ be a smooth, compact and closed $n-$ dimensional manifold. For any smooth Riemannian metric $g$, 
given in local coordinates
$(x^1,x^2,\cdots,x^n)$ as 
\begin{equation*}
 g(x)=\sum_{i,j}g_{ij}(x)\rd x^i\rd x^j
\end{equation*}
the volume form is 
\begin{equation*}
 \rd vol(g)(x)=\sqrt{\det g_{ij}(x)}\;\rd x^1\wedge\rd x^2\cdots\wedge\rd x^n,
\end{equation*}
the Ricci tensor is 
\begin{equation*}
 Ric(g)(x)=\sum_{i,j}Ric(g)_{ij}(x)\rd x^i\rd x^j,
\end{equation*}
and the scalar curvature is 
\begin{equation}\label{g:4}
 R(g)(x)=\sum_{i,j}g^{ij}(x)Ric(g)_{ij}(x)
\end{equation}
where $g^{ij}(x)$ is the matrix inverse to $g_{ij}(x)$. As usual, raising and lowering of indexes is achieved with 
these metric tensors. Also from now on we will use the Einstein summation convention.
The volume is
\begin{equation}
V(g)=\int_{M} \rd vol(g)(x),
\end{equation}
the {\it total scalar curvature} is
\begin{equation*}
 \cR(g)=\int_M R(g)(x)\rd vol(g)(x),
\end{equation*}
and the {\it avarage scalar curvature} is
\begin{equation*}
 \overline{R}(g)=\frac{\cR(g)}{V(g)}.
\end{equation*}
By definition $g$ is an {\it Einstein metric} and correspondingly $(M,g)$ an {\it Einstein space} 
if there exists a constant $k$ such that 
\begin{equation}\label{g:8}
 Ric(g)(x)=kg(x)
\end{equation}
holds for all $x\in M$. 
If $ g$ is an Einstein metric and if $Ric(g)(x)$ vanishes for some $x$, then trivially $k=0$ and therefore 
$Ric(g)(x)=0$ for 
all $x$, that is $(M,g)$ is Ricci-flat, compare 
Proposition \ref{prop:1} for a corresponding result in the p.l. context.

If $n\ge 3$, which we shall assume from now on, then by \eqref{g:4} necessarily $R(g)(x)$ 
is constant on $M$ - therefore equal to $\overline{R}(g)$ - 
and $k$ is given as
\begin{equation}\label{g:9}
 k=\frac{1}{n}\overline{R}(g).
\end{equation}
In general 
\begin{equation}\label{g:10}
 Ric(g)(x)-\frac{R(g)(x)}{n}g(x)
\end{equation}
is called the {\it traceless part} of $Ric(g)(x)$ and which means
\begin{equation*}
 \sum_{i,j}g^{ij}(x)\left(Ric(g)_{ij}(x)-\frac{R(g)(x)}{n}g_{ij}(x)\right)=0,
\end{equation*}
a direct consequence of \eqref{g:4}. Its integrated version 
\begin{equation}\label{g:12}
 \int_M \sum_{i,j}g^{ij}(x)\left(Ric(g)_{ij}(x)-\frac{R(g)(x)}{n}g_{ij}(x)\right)\rd vol(g)(x)=0,
\end{equation}
is of course a much weaker statement.

Given a metric $g$, the scaled metric $\lambda g$ with $\lambda>0$ is given in local coordinates by
$(\lambda g)_{ij}(x)=\lambda g_{ij}(x)$. Then trivially $(\lambda g)^{ij}(x)=\lambda^{-1}g^{ij}(x)$ holds.
If $F(g)$ is any functional of $g$, like $V(g)$ or $\cR(g)$, then $F(g)$ is said to be {\it homogeneous of degree 
$m$} if 
$F(\lambda g)=\lambda^m F(g)$ holds for all $g$. Similarly a functional $F(g)$ of $g$, which is a function on $M$, 
is homogeneous of degree $m$ if $F(\lambda g)(x)=\lambda^m F(g)(x)$ holds for all $x\in M$.
Examples are
\begin{equation}\label{g:13}
 V(\lambda g)=\lambda^{n/2} V(g),\qquad R(\lambda g)(x)
 =\lambda^{-1}R(g)(x),\quad \cR(\lambda g)=\lambda^{(n-2)/2}\cR(g).
\end{equation}

For any functional $F(g)$ its variational derivative (intuitively an infinite dimensional gradient) is written as 
\begin{equation*}
 \frac{\delta}{\delta g_{ij}(x)}F(g).
\end{equation*}
More precisely, let $h(x)=\sum_{i,j}h_{ij}(x)\rd x^i\rd x^j$ be any symmetric tensor field. Then the variational 
derivative is 
uniquely defined as a linear functional on the space of all smooth symmetric tensor fields $h$ by 
\begin{equation}\label{g:15}
 \underline{\nabla} F(g)(h)\doteq \frac{\rd}{\rd t}F(g+th)\Big|_{t=0}
 =\int_M \sum_{i,j} h_{ij}(x)\frac{\delta}{\delta g_{ij}(x)}F(g)\rd vol(g)(x).
\end{equation}
Standard examples are 
\begin{equation}\label{g:16}
 \frac{\delta}{\delta g_{ij}(x)}V(g)=\frac{1}{2}g^{ij}(x),\quad \frac{\delta}{\delta g_{ij}(x)}\cR(g)=-\left(
 Ric(g)^{ij}(x)-\frac{R(g)(x)}{2}g^{ij}(x)\right).
\end{equation}
\begin{lemma}\label{app:lem:1}
 If $F(g)$ is homogeneous of degree $m$, then 
$
\underline{\nabla}F(g)
$
 is homogeneous of degree $m-1$ and 
$$
 \frac{\delta}{\delta g_{ij}(x)}F(g)
 $$ 
is homogeneous of degree $m-n/2-1$.
\end{lemma}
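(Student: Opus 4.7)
The plan is to argue directly from the defining relation \eqref{g:15}, exploiting that the scaling $g\mapsto \lambda g$ is linear. There is no geometric subtlety involved; the whole statement is essentially a book-keeping exercise tracking where factors of $\lambda$ come from when one differentiates $F(\lambda g + th)$ in $t$ and compares with $F(\lambda g)$.

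For the first claim, I would start by writing $\lambda g + th = \lambda\bigl(g + (t/\lambda)h\bigr)$ and use homogeneity of $F$ to obtain $F(\lambda g + th) = \lambda^m F\bigl(g + (t/\lambda)h\bigr)$. Differentiating in $t$ at $t=0$ and invoking the chain rule gives
\begin{equation*}
\underline{\nabla}F(\lambda g)(h) \;=\; \frac{d}{dt}F(\lambda g+th)\Big|_{t=0} \;=\; \lambda^{m}\cdot\lambda^{-1}\,\underline{\nabla}F(g)(h) \;=\; \lambda^{m-1}\,\underline{\nabla}F(g)(h),
\end{equation*}
valid for every symmetric tensor field $h$, which is exactly homogeneity of degree $m-1$.

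For the second claim, I would compare the integral representations of $\underline{\nabla}F(\lambda g)(h)$ and $\underline{\nabla}F(g)(h)$ via \eqref{g:15}. Using the just-established identity together with the scaling law $\rd vol(\lambda g)(x) = \lambda^{n/2}\rd vol(g)(x)$ from \eqref{g:13}, one has
\begin{equation*}
\lambda^{m-1}\int_M h_{ij}(x)\frac{\delta}{\delta g_{ij}(x)}F(g)\,\rd vol(g)(x)
=\lambda^{n/2}\int_M h_{ij}(x)\frac{\delta}{\delta g_{ij}(x)}F(\lambda g)\,\rd vol(g)(x).
\end{equation*}
Since this must hold for every smooth symmetric $h$, the integrands agree pointwise, and dividing by $\lambda^{n/2}$ yields $\frac{\delta}{\delta g_{ij}(x)}F(\lambda g) = \lambda^{m-n/2-1}\frac{\delta}{\delta g_{ij}(x)}F(g)$, the desired homogeneity of degree $m - n/2 - 1$.

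There is no real obstacle here; the only point that deserves care is the second step, where one must not forget that the $\rd vol(g)$ in \eqref{g:15} itself depends on $g$ and so contributes an extra factor $\lambda^{n/2}$ when the reference metric is rescaled. Once that factor is properly accounted for, the exponent shift by $-n/2$ appears automatically and explains the asymmetry between the two homogeneity degrees. As a sanity check, the examples \eqref{g:16} are consistent: $V(g)$ has degree $m=n/2$ and $\frac{1}{2}g^{ij}(x)$ is homogeneous of degree $-1 = n/2 - n/2 - 1$; $\cR(g)$ has degree $m=(n-2)/2$ and the Einstein tensor density $\mathrm{Ric}^{ij}-\tfrac{R}{2}g^{ij}$ is indeed homogeneous of degree $(n-2)/2 - n/2 - 1 = -2$.
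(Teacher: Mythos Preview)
Your proof is correct and follows essentially the same approach as the paper: rewrite $F(\lambda g+th)=\lambda^m F(g+(t/\lambda)h)$, differentiate at $t=0$ to get homogeneity of $\underline{\nabla}F$, and then account for the $\lambda^{n/2}$ scaling of $\rd vol(g)$ in \eqref{g:15} to obtain the second claim. Your version is simply more detailed than the paper's, which leaves the second step as a one-line remark.
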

\begin{proof}
We differentiate 
$$
F(\lambda g +th)=\lambda^m F(g+\frac{t}{\lambda}h)
$$
w.r.t. $t$ at $t=0$ and obtain
$$
\underline{\nabla} F(\lambda g)(h)=\lambda^{m-1}\underline{\nabla} F(g)(h),
$$
which is the first claim. As for the second, we observe that $\rd vol(g)(x)$ is homogeneous of order $n/2$, 
from which the second claim follows.
\end{proof}

$V(g)$ serves as an example. Also by\eqref{g:13} $ Ric(g)^{ij}(x)$ is homogeneous of degree $-2$ as is 
$R(g)(x)g^{ij}(x)$, see \eqref{g:10}. Therefore 
$$
Ric(g)_{ij}(x)=R(g)^{kl}(x)g_{ik}(x)g_{lj}(x)
$$ 
is homogeneous of degree $0$. 
This is compatible with \eqref{g:4}. 
\begin{corollary}\label{g:cor:1}
 If $g$ is an Einstein metric, so is $\lambda g$ for all $\lambda>0$.
\end{corollary}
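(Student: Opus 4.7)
The plan is to unpack the definition of an Einstein metric and then apply the scaling relations already assembled in the excerpt. Fix a Riemannian metric $g$ on $M$ satisfying the Einstein condition \eqref{g:8}, i.e.\ $Ric(g)(x) = k\, g(x)$ for some constant $k$ and every $x \in M$. I want to verify that for each $\lambda > 0$ there is a constant $k_\lambda$ with $Ric(\lambda g)(x) = k_\lambda\,(\lambda g)(x)$ for all $x$.

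The first step is to note that the Ricci tensor with both indices down, $Ric(g)_{ij}(x)$, is homogeneous of degree $0$ in $g$, a fact already established in the paragraph preceding the corollary (from $Ric(g)^{ij}$ being homogeneous of degree $-2$ and $g_{ij}$ of degree $1$). Hence
\begin{equation*}
Ric(\lambda g)_{ij}(x) = Ric(g)_{ij}(x).
\end{equation*}
Combining this with the Einstein hypothesis $Ric(g)_{ij}(x) = k\, g_{ij}(x)$ gives
\begin{equation*}
Ric(\lambda g)_{ij}(x) = k\, g_{ij}(x) = \frac{k}{\lambda}\,(\lambda g)_{ij}(x),
\end{equation*}
so $\lambda g$ is Einstein with constant $k_\lambda = k/\lambda$.

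As a consistency check one can recover the same value from \eqref{g:9}: the scaling relations \eqref{g:13} give $R(\lambda g)(x) = \lambda^{-1} R(g)(x)$ and $\cR(\lambda g) = \lambda^{(n-2)/2}\cR(g)$, $V(\lambda g) = \lambda^{n/2}V(g)$, whence $\overline{R}(\lambda g) = \lambda^{-1}\overline{R}(g)$ and therefore $k_\lambda = \overline{R}(\lambda g)/n = k/\lambda$. There is really no obstacle here; the only thing that might deserve a sentence of justification is the degree-$0$ homogeneity of the lower-index Ricci tensor, but this is already spelled out just before the corollary, so the proof is essentially a one-line application of the scaling lemma.
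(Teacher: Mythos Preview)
Your proof is correct and follows exactly the approach the paper intends: the corollary is stated immediately after the observation that $Ric(g)_{ij}(x)$ is homogeneous of degree $0$, and you simply combine this with the Einstein condition to obtain $Ric(\lambda g)_{ij}=k\,g_{ij}=(k/\lambda)(\lambda g)_{ij}$. The consistency check via \eqref{g:9} and \eqref{g:13} is a nice addition but not needed for the argument.
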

The next lemma is an infinite dimensional version of Euler's relation. 
\begin{lemma}\label{app:lem:2}
 If $F(g)$ is homogeneous of degree $m$ then
 \begin{equation}\label{g:17}
 \underline{\nabla} F(g)(g)=\int_M g_{ij}(x) \frac{\delta}{\delta g_{ij}(x)}F(g)\rd vol(g)(x)=mF(g)
 \end{equation}
holds.
\end{lemma}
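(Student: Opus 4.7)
The plan is to apply the definition of the variational derivative in \eqref{g:15} with the particular choice of symmetric tensor field $h = g$, and then exploit the homogeneity of $F$ along the one-parameter family of conformal rescalings $t \mapsto (1+t)g$. Since $g$ is itself a smooth symmetric $(0,2)$-tensor field, it is an admissible direction in the linear functional $h \mapsto \underline{\nabla} F(g)(h)$, so specialising \eqref{g:15} to $h = g$ directly yields the second equality in \eqref{g:17}:
\begin{equation*}
\underline{\nabla} F(g)(g) = \frac{\rd}{\rd t} F(g+tg)\Big|_{t=0} = \int_M g_{ij}(x)\, \frac{\delta}{\delta g_{ij}(x)} F(g)\, \rd vol(g)(x).
\end{equation*}

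For the first equality, I would observe that $g + tg = (1+t)g$, so by the homogeneity assumption $F((1+t)g) = (1+t)^m F(g)$ for all $t > -1$. Differentiating this scalar function of $t$ at $t = 0$ gives
\begin{equation*}
\frac{\rd}{\rd t} F(g + tg)\Big|_{t=0} = \frac{\rd}{\rd t} (1+t)^m F(g)\Big|_{t=0} = m F(g),
\end{equation*}
which combined with the previous display establishes \eqref{g:17}.

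There is no real obstacle here; the argument is just the standard Euler relation transplanted to the infinite-dimensional setting, with the only mild subtlety being the verification that $h = g$ is a legitimate test direction for the variational derivative. This is immediate because the space of smooth symmetric $(0,2)$-tensors on $M$ contains $g$ itself, and the curve $t \mapsto (1+t)g$ remains a Riemannian metric for $t$ in a neighbourhood of $0$, so $F((1+t)g)$ is differentiable at $t = 0$ by the smoothness hypotheses implicit in the definition of $\underline{\nabla} F(g)$.
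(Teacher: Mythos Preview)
Your proof is correct and follows essentially the same approach as the paper: set $h=g$ in \eqref{g:15}, use the homogeneity to write $F(g+tg)=(1+t)^m F(g)$, and differentiate at $t=0$. The paper's proof is the same argument compressed to two sentences.
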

\begin{proof}
 Although we expect this to be well known, here is the short proof. 
For $t$ small consider $F(g+tg)=(1+t)^mF(g)$. Taking the derivative at $t=0$ gives \eqref{g:17} in view of 
\eqref{g:15}.
\end{proof}
Again $V(g)$ and $\cR(g)$ serve as examples.
Consider the functional 
\begin{equation}\label{g:18}
\widehat{\cR}(g)=\frac{1}{V(g)^{(n-2)/2}}\cR(g)=\cR\left(\frac{1}{V(g)^{2/n}}g\right),
\end{equation}
a scale invariant quantity, and observe that 
\begin{equation*}
 V\left(\frac{1}{V(g)^{2/n}}g\right)=1.
\end{equation*}
Since the Leibniz rule holds for the variational derivative, \eqref{g:15} gives 
\begin{equation}\label{g:20}
 \frac{\delta}{\delta g_{ij}(x)}\widehat{\cR}(g)
 =-\frac{1}{V(g)^{(n-2)/2}}\left(Ric(g)^{ij}(x)-\frac{R(g)(x)}{2}g^{ij}(x)
+\frac{n-2}{2n}\overline{R}(g)\,g^{ij}(x)\right).
\end{equation}
Assume now that $g$ is a critical point of $\widehat{\cR}(\cdot)$. This implies 
\begin{equation*}
 Ric(g)^{ij}(x)-\frac{R(g)(x)}{2}g^{ij}(x)+\frac{n-2}{2n}\overline{R}(g)\,g^{ij}(x)=0.
\end{equation*}
Taking the trace, see \eqref{g:4}, gives 
\begin{equation*}
 R(g)(x)-\frac{n}{2}R(g)(x)+\frac{n-2}{2}\overline{R}(g)=0,
\end{equation*}
that is the scalar curvature equals the average scalar curvature,
\begin{equation*}
 R(g)(x)=\overline{R}(g),
\end{equation*}
which when reinserted into \eqref{g:20} shows that $g$ is an Einstein metric. The converse is also true, that 
is an 
Einstein metric is a critical point of $\widehat{\cR}(g)$. There is an alternative way of defining Einstein metrics.
Consider 
\begin{equation}\label{g:24}
 A(g)=\cR(g)+\kappa V(g).
\end{equation}
In physics $\kappa$ has the interpretation of a cosmological constant.
At a critical point $g$ of $A(\cdot)$ the relation
\begin{equation}\label{g:25}
 -\left(Ric(g)^{ij}(x)-\frac{R(g)(x)}{2}g^{ij}(x)\right)+\frac{\kappa}{2} g^{ij}(x) =0
\end{equation}
holds. Taking traces gives
\begin{equation*}
 -R(g)(x)+\frac{n}{2}R(g)(x)+\frac{n}{2}\kappa=0
\end{equation*}
such that $R(g)(x)$ is constant and
\begin{equation}\label{g:27}
 \kappa=\frac{2-n}{n}R(g)(x)=\frac{2-n}{n}\overline{R}(g),
\end{equation}
which when reinserted into \eqref{g:25} shows that $g$ is an Einstein metric.

There is another way of obtaining $\kappa$ and moreover of defining an Einstein space.
Given $g$, let $L^2(M,\rd vol(g))$ be the Hilbert space of all square integrable functions on $M$ w.r.t. the 
measure
$\rd vol(g))$. The scalar product is written as $\langle\cdot, \cdot\rangle_g$.
Similarly let $\cL^2(M,\rd vol(g))$ denote the real Hilbert space of all square integrable 
symmetric tensor fields. 
That is for two such tensor fields 
$$
H=H_{ij}(x)\rd x^i\rd x^j,\qquad K=K_{ij}(x)\rd x^i\rd x^j
$$ 
the scalar product, which without risk of confusion  will also be denoted by $\langle \cdot,\cdot\rangle_g$, is 
given as 
\begin{equation*}
 \langle H, K\rangle_g=\int _M H_{ij}(x)K^{ij}(x)\rd vol(g)(x),
\end{equation*}
which indeed is positive definite, an easy consequence of the well known 
\begin{lemma}\label{app:lem2}
 Let $\M_{sym}(\R,n\times n)$ be the linear space of all real and symmetric $n\times n$ matrices and let 
 $G\in \M_{sym}(\R,n\times n)$ be positive definite. Then the real and symmetric bilinear form 
 \begin{equation*}
  \langle A,B\rangle_G= \tr (AGBG)
 \end{equation*}
on $\M_{sym}(\R,n\times n)$ is positive definite. In particular the Schwarz inequality is holds.
\end{lemma}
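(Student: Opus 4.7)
The plan is to reduce the positive-definiteness claim to the elementary fact that the square of a nonzero real symmetric matrix has strictly positive trace. The main tool is the existence of a positive-definite symmetric square root $S$ of $G$, that is $G=S^{2}$ with $S\in\M_{sym}(\R,n\times n)$ invertible. Such an $S$ is supplied by diagonalizing $G$ via the spectral theorem, i.e.\ writing $G=U\,\mathrm{diag}(\lambda_{i})\,U^{T}$ with $U$ orthogonal and $\lambda_{i}>0$, and setting $S=U\,\mathrm{diag}(\sqrt{\lambda_{i}})\,U^{T}$. With $S$ in hand, the key observation is that $\tr(AGAG)$ takes the suggestive form $\tr((SAS)^{2})$.

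First I would check symmetry and bilinearity of $\langle\cdot,\cdot\rangle_{G}$. Bilinearity is immediate from the linearity of the trace and matrix multiplication. For symmetry, the computation
\[
\tr(AGBG)=\tr\bigl((AGBG)^{T}\bigr)=\tr(G^{T}B^{T}G^{T}A^{T})=\tr(GBGA)=\tr(BGAG)
\]
uses $A^{T}=A$, $B^{T}=B$, $G^{T}=G$ together with the cyclicity of the trace, giving $\langle A,B\rangle_{G}=\langle B,A\rangle_{G}$.

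Next I would establish positive-definiteness. Using $G=S^{2}$ and cyclicity of the trace gives
\[
\langle A,A\rangle_{G}=\tr(AS^{2}AS^{2})=\tr\bigl(S(AS^{2}A)S\bigr)=\tr\bigl((SAS)^{2}\bigr).
\]
Because $A$ and $S$ are symmetric, $SAS$ is symmetric too, so by the spectral theorem $\tr\bigl((SAS)^{2}\bigr)=\sum_{i}\mu_{i}^{2}\ge 0$, where $\mu_{i}\in\R$ are the eigenvalues of $SAS$. Equality holds if and only if all $\mu_{i}$ vanish, i.e.\ $SAS=0$; since $S$ is invertible, this forces $A=0$. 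Hence $\langle\cdot,\cdot\rangle_{G}$ is positive-definite.

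Finally, once positive-definiteness is established, $\langle\cdot,\cdot\rangle_{G}$ endows $\M_{sym}(\R,n\times n)$ with the structure of a real inner product space, and the Cauchy--Schwarz inequality $\langle A,B\rangle_{G}^{2}\le\langle A,A\rangle_{G}\,\langle B,B\rangle_{G}$ follows by the standard one-line argument of minimizing $\langle A+tB,A+tB\rangle_{G}$ over $t\in\R$. No step presents a genuine obstacle here; the only point requiring a little care is the construction of the symmetric square root $S$, which is classical from the spectral theorem applied to $G$.
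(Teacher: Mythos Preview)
Your proof is correct. The paper does not actually supply a proof of this lemma; it merely states the result as ``well known'' and moves on to use it. Your square-root argument via $G=S^{2}$ and the rewriting $\tr(AGAG)=\tr\bigl((SAS)^{2}\bigr)$ is the standard approach and leaves nothing to be desired.
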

Thus for example 
\begin{equation*}
 \langle \1,\1\rangle_g=V(g),\quad\langle g,g\rangle_g=n V(g), \quad 
 \langle g,Ric(g)\rangle_g=\langle R(g)g,g\rangle=\cR(g),
 \end{equation*}
where $\1$ is the function on $M$ equal to 1.
We will denote by $||\;||_g$ the norms in both spaces $L^2(M,\rd vol(g))$ and $\cL^2(M,\rd vol(g))$. 
Due to \eqref{g:4} the inequality
\begin{equation}\label{g:291}
 ||R(g)||_g^2\le n||Ric(g)||_g^2
\end{equation}
is another consequence of the lemma. Since $\cR(g)=\langle R(g),\1\rangle_g$ we also have the inequality
\begin{equation}\label{g:292}
\cR(g)^2\le ||R(g)||_g^2V(g). 
\end{equation}
\begin{theorem}\label{app:theo}
 The following inequality is valid 
 \begin{equation}\label{g:30}
 \cR(g)^2\le nV(g) ||Ric(g)||_g^2
 \end{equation}
with equality if and only if $g$ is an Einstein metric and then equality in \eqref{g:291} holds.
If $(M,g)$ is an Einstein space which is not Ricci-flat, then $\kappa$ is also given as 
\begin{equation}\label{g:31} 
 \kappa=\frac{||Ric(g)||_g^2}{\cR(g)}.
\end{equation}
\end{theorem}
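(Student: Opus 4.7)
The plan is to obtain \eqref{g:30} as the concatenation of two Cauchy--Schwarz inequalities, one applied pointwise on $M$ and the other in $L^2(M,\rd vol(g))$, both justified by Lemma \ref{app:lem2}. For the first, I would rewrite \eqref{g:4} as the pointwise inner product
\[
R(g)(x)=g^{ij}(x)Ric(g)_{ij}(x)=\langle g(x),Ric(g)(x)\rangle_{g(x)},
\]
and note that $\langle g(x),g(x)\rangle_{g(x)}=g_{ij}(x)g^{ij}(x)=n$. Schwarz at each point yields $R(g)(x)^{2}\le n\,\|Ric(g)(x)\|_{g(x)}^{2}$, and integration against $\rd vol(g)$ gives \eqref{g:291}. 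Equality at $x$ forces $Ric(g)(x)=\lambda(x)\,g(x)$ for a scalar function $\lambda$. For the second inequality, applied to the $L^2$--inner product $\cR(g)=\langle R(g),\mathbf{1}\rangle_{g}$ with $\|\mathbf{1}\|_{g}^{2}=V(g)$, Schwarz reproduces \eqref{g:292}, with equality iff $R(g)$ is a constant function on $M$. Chaining \eqref{g:292} and \eqref{g:291} then produces \eqref{g:30}.

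For the equality discussion, I would first argue the nontrivial direction. Equality in \eqref{g:30} forces equality in both \eqref{g:291} and \eqref{g:292} simultaneously (both intermediate steps are non-negative, so losing strictness in either one obstructs the chain). The first forces $Ric(g)(x)=\lambda(x)g(x)$, whose trace via \eqref{g:4} reads $R(g)(x)=n\lambda(x)$; the second forces $R(g)$ constant, hence $\lambda$ constant, hence $g$ is Einstein in the sense of \eqref{g:8}. Conversely, if $Ric(g)=k\,g$ then $R(g)\equiv n k$ is a constant multiple of $\mathbf{1}$ and $Ric(g)$ is a constant multiple of $g$ pointwise, so both Schwarz applications saturate. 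In particular the pointwise equality $R(g)(x)^{2}=n\,\|Ric(g)(x)\|_{g(x)}^{2}$ holds at every $x$, which when integrated is exactly equality in \eqref{g:291}.

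Finally, for the explicit expression \eqref{g:31} in the non--Ricci-flat Einstein case, I would proceed by a direct computation using $Ric(g)=k\,g$ with $k\neq 0$. A short calculation shows
\[
\|Ric(g)\|_{g}^{2}=\int_{M}\langle k g,k g\rangle_{g}\,\rd vol(g)=k^{2}\,n\,V(g),\qquad
\cR(g)=\int_{M}R(g)\,\rd vol(g)=n k\,V(g),
\]
so the quotient $\|Ric(g)\|_{g}^{2}/\cR(g)$ equals the Einstein constant extracted from \eqref{g:9}, and \eqref{g:31} follows. No step is truly deep; the only care required is in the equality analysis, where one must keep the two distinct Schwarz inequalities conceptually separate so that both the pointwise condition ($Ric\propto g$) and the global condition ($R$ constant on $M$) are invoked to conclude that $g$ is Einstein.
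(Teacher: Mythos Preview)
Your argument is correct and coincides with the paper's alternative route, which explicitly says that \eqref{g:30} ``follows by combining \eqref{g:291} with \eqref{g:292}''; your equality analysis via the two separate Schwarz saturations is exactly what Corollary~\ref{app:corr} records. The paper's primary argument is slightly slicker: it applies a \emph{single} Cauchy--Schwarz in the tensor Hilbert space $\cL^2(M,\rd vol(g))$ to $\cR(g)=\langle g,Ric(g)\rangle_g$ with $\langle g,g\rangle_g=nV(g)$, so equality in \eqref{g:30} is equivalent to $Ric(g)$ and $g$ being collinear in $\cL^2$, i.e.\ $Ric(g)=c\,g$ for a \emph{constant} $c$, which is the Einstein condition in one stroke. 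Your chained version trades this for two elementary steps but requires the extra bookkeeping you carried out; for \eqref{g:31} your direct computation and the paper's ``take the scalar product of \eqref{g:8} with $Ric(g)$'' are the same calculation.
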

Observe that for an Einstein metric equality in \eqref{g:30} also follows from \eqref{g:9} and \eqref{g:31}.
\begin{proof}The first part is a direct consequence of Schwarz inequality, by which \eqref{g:30} 
is an equality if an only if $Ric$ and $g$ are collinear. Alternatively \eqref{g:30} follows by combining 
\eqref{g:291} with \eqref{g:292}.
The second part follows by taking the scalar product of \eqref{g:8} with $Ric(g)$ and the next lemma.
\end{proof}
\begin{lemma}\label{app:lem}
An Einstein space is Ricci-flat if an only if its total scalar curvature vanishes.
\end{lemma}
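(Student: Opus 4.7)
The plan is to prove the two implications separately, both of which should follow directly from the definitions and the explicit form of the constant $k$ given in \eqref{g:9}.

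For the forward direction, suppose $(M,g)$ is Ricci-flat, i.e., $Ric(g)(x) = 0$ for all $x\in M$. Then by the defining relation \eqref{g:4} the scalar curvature $R(g)(x) = g^{ij}(x) Ric(g)_{ij}(x)$ vanishes identically on $M$. Integrating against $\rd vol(g)(x)$ over $M$ gives $\cR(g) = 0$. This is immediate and requires no appeal to the Einstein property.

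For the converse, suppose $(M,g)$ is an Einstein space with $\cR(g) = 0$. Since $M$ is compact and $g$ is a Riemannian metric, the volume $V(g) > 0$ is strictly positive, so $\overline{R}(g) = \cR(g)/V(g) = 0$. Assuming $n\ge 3$ as throughout this appendix, the constant $k$ in the Einstein relation \eqref{g:8} is given by \eqref{g:9}, namely $k = \overline{R}(g)/n = 0$. Substituting back into \eqref{g:8} yields $Ric(g)(x) = 0$ for every $x\in M$, i.e., $(M,g)$ is Ricci-flat.

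There is no real obstacle here: the only ingredients are the trace identity \eqref{g:4}, positivity of the volume on a compact manifold, and the explicit formula \eqref{g:9} for $k$ that was derived earlier from the Einstein equation by taking a trace. The one point worth flagging is the standing dimensional hypothesis $n\ge 3$, which is what makes $k$ determined by $\overline{R}(g)$; this is already assumed in the surrounding discussion and so the proof can be presented in just a few lines.
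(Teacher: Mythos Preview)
Your proof is correct and follows essentially the same route as the paper: the forward direction is immediate, and for the converse both you and the paper use that the Einstein constant is determined by the average scalar curvature, so $\cR(g)=0$ forces the constant to vanish. The only cosmetic difference is that you cite \eqref{g:9} for $k=\overline{R}(g)/n$, whereas the paper cites the equivalent relation \eqref{g:27}; your version is in fact slightly more direct.
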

\begin{proof}
 If the Einstein space is Ricci-flat then obviously $\cR(g)=0$. As for the converse, if $\cR(g)=0$, 
 then $\kappa=0$ by \eqref{g:27} and hence the Ricci tensor vanishes.
\end{proof}
\begin{corollary}\label{app:corr}
 Equality in \eqref{g:291} holds if and only if for all $x$ there is $\kappa(x)$ such that
$Ric(g)(x)=\kappa(x)g(x)$ holds. Equality in \eqref{g:30} implies equality in \eqref{g:291}.
\end{corollary}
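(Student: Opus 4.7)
The plan is to deduce both statements of the corollary from the pointwise version of the Cauchy--Schwarz inequality encoded in Lemma \ref{app:lem2}, combined with the chain of inequalities \eqref{g:291}--\eqref{g:292} already used to establish \eqref{g:30}.

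For the first assertion I would apply Lemma \ref{app:lem2} fibrewise. At each $x\in M$ the symmetric bilinear form
\[
 (H,K)_{g(x)} \;:=\; H_{ij}(x)\,g^{ik}(x)g^{jl}(x)\,K_{kl}(x)
\]
on the space of symmetric covariant $2$-tensors at $x$ is a genuine inner product, since it is the specialisation $\tr(HGKG)$ of the abstract inner product from Lemma \ref{app:lem2} to the positive definite matrix $G=(g^{ij}(x))$. One checks directly that
\[
 (g(x),g(x))_{g(x)} = n,\qquad (g(x),Ric(g)(x))_{g(x)} = R(g)(x),\qquad (Ric(g)(x),Ric(g)(x))_{g(x)} = |Ric(g)(x)|^2,
\]
so the pointwise Cauchy--Schwarz inequality yields $R(g)(x)^2\le n\,|Ric(g)(x)|^2$, with equality at $x$ if and only if $Ric(g)(x)$ is a real multiple of $g(x)$, i.e.\ $Ric(g)(x)=\kappa(x)g(x)$. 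Integrating this pointwise bound against $\rd vol(g)$ reproduces \eqref{g:291}, and equality in the integrated statement forces pointwise equality almost everywhere, hence everywhere by continuity of $R(g)$ and $Ric(g)$.

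For the second assertion I would revisit the derivation of \eqref{g:30} in Theorem \ref{app:theo}, where the bound arose by chaining \eqref{g:292} with \eqref{g:291}:
\[
 \cR(g)^2 \;\le\; \|R(g)\|_g^2\, V(g) \;\le\; n\,V(g)\,\|Ric(g)\|_g^2.
\]
Equality in the outermost bound forces equality in each intermediate link; in particular, equality in \eqref{g:30} implies equality in \eqref{g:291}, which is exactly what was to be shown. As a byproduct, equality in \eqref{g:292} forces $R(g)$ to be constant on $M$, so combining with the first part the function $\kappa(x)$ is in fact constant, recovering the full Einstein condition.

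The only delicate step is the first one: identifying the abstract matrix inner product of Lemma \ref{app:lem2} with the pointwise tensorial inner product underlying the $\langle\cdot,\cdot\rangle_g$ of the paper, realised by taking $G=g^{-1}(x)$. Once this identification is in place, the pointwise Cauchy--Schwarz and integration over $M$ are routine, and the second statement is an immediate reading of the inequality chain already established in Theorem \ref{app:theo}.
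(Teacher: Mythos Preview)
Your proof is correct and follows essentially the same line as the paper's: pointwise Cauchy--Schwarz from Lemma \ref{app:lem2} gives $R(g)(x)^2\le n|Ric(g)(x)|^2$ with equality iff $Ric(g)(x)=\kappa(x)g(x)$, then integration plus continuity handles the first assertion, and the chain \eqref{g:292}--\eqref{g:291} (already used in Theorem \ref{app:theo}) makes the second assertion immediate. You have simply spelled out in full what the paper's proof compresses into ``obvious''.
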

\begin{proof}
 If $Ric(g)(x)=\kappa(x)g(x)$ holds for all $x$ with a suitable $\kappa(x)$ then \eqref{g:291} holds. Conversely 
 assume 
 \eqref{g:291} holds. Then for almost all $x$ there is $\kappa(x)$ such that $Ric(g)(x)=\kappa(x)g(x)$ holds. 
 But then $\kappa(x)=R(g)(x)/n$ for these $x$ and by continuity we can make this relation hold for all $x$.
 The last claim is now obvious.
 \end{proof}
Again for comparison we conclude with recalling Hamilton's Ricci flow equations.
The unnormalized flow equation for the metric is defined as
\begin{equation*}
 \frac{\rd}{\rd t}g(t)_{ij}(x)=-2Ric(g(t))_{ij}(x)
\end{equation*}
while the normalized one is given as 
\begin{equation}\label{h2}
 \frac{\rd}{\rd t}g(t)_{ij}(x)=-2Ric(g(t))_{ij}(x)+\frac{2}{n}\;\overline{R}(g(t))\;g(t)_{ij}(x).
\end{equation}
Under the normalized flow the volume is conserved. This follows easily by taking the derivative 
of $V(g(t))$ with help the first relation in \eqref{g:16}, the flow equation and \eqref{g:4}. 
Also observe that the r.h.s. of \eqref{h2} vanishes, if 
$g(t)$ is an Einstein metric. In other words, any Einstein metric is a fixed point of the 
normalized flow equation.
Theorem \ref{app:theo} and in particular relation \eqref{g:31} suggests another normalized Ricci flow.
\begin{equation}\label{h3}
 \frac{\rd}{\rd t}g(t)_{ij}(x)=-2Ric(g(t))_{ij}(x)+2\frac{||Ric(g(t))||_{g(t)}^2}{\cR(g(t))}\;\;g(t)_{ij}(x).
\end{equation}
which is well defined as long as $\cR(g(t))\neq 0$. 
By the previous theorem, any Einstein metric is a fixed point. 
Although believed to be known, the author has not been able to locate a reference for the next result.
\begin{theorem}\label{app:prop}
Under the flow \eqref{h3} the volume $V(g(t))$ increases if $\,\cR(g(t))>0$ and decreases if 
$\,\cR(g(t))<0$, while the total scalar curvature itself increases in both cases as long as $\cR(g(t))\neq 0$.
\end{theorem}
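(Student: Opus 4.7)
The plan is to compute $\dot V(g(t))$ and $\dot\cR(g(t))$ along the flow \eqref{h3} by straightforward application of the chain rule together with the variational formulas in \eqref{g:16}, and then to invoke the Cauchy--Schwarz-type inequalities already established in Theorem \ref{app:theo} and in \eqref{g:291}.

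First I would treat the volume. By \eqref{g:15} and the first identity in \eqref{g:16},
\begin{equation*}
\dot V(g(t))=\int_M \dot g_{ij}(x)\,\tfrac{1}{2}g^{ij}(x)\,\rd vol(g)(x).
\end{equation*}
Substituting the right-hand side of \eqref{h3} and using the pointwise identities $g^{ij}g_{ij}=n$ and $g^{ij}Ric_{ij}=R$, the integral collapses to
\begin{equation*}
\dot V=-\cR(g)+\frac{n\,\|Ric(g)\|_g^2}{\cR(g)}V(g)=\frac{1}{\cR(g)}\Bigl(nV(g)\|Ric(g)\|_g^2-\cR(g)^2\Bigr).
\end{equation*}
The bracket is non-negative by inequality \eqref{g:30} in Theorem \ref{app:theo}, and strictly positive unless $g$ is Einstein (in which case the flow is stationary anyway). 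Hence $\dot V$ carries the sign of $\cR(g(t))$, which is the first assertion.

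Next I would perform the analogous computation for the total scalar curvature. Using the second identity in \eqref{g:16},
\begin{equation*}
\dot\cR(g(t))=-\int_M \dot g_{ij}\Bigl(Ric^{ij}-\tfrac{R}{2}g^{ij}\Bigr)\rd vol(g).
\end{equation*}
Inserting the flow \eqref{h3} and expanding the four resulting terms, the contractions again simplify via $Ric^{ij}g_{ij}=R$ and $g^{ij}g_{ij}=n$. I expect the two cross terms proportional to $\|Ric\|_g^2$ to cancel, leaving the clean identity
\begin{equation*}
\dot\cR(g(t))=n\,\|Ric(g)\|_g^2-\|R(g)\|_g^2.
\end{equation*}
By \eqref{g:291} this is non-negative, and by Corollary \ref{app:corr} it vanishes only when $Ric$ is pointwise proportional to $g$, which in dimension $n\ge 3$ (via the contracted Bianchi identity) forces $g$ to be Einstein. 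Therefore $\cR(g(t))$ is strictly increasing whenever $g(t)$ is not Einstein, which proves the second assertion.

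The argument is essentially bookkeeping; its genuine inputs are the two variational formulas \eqref{g:16} and the pointwise Cauchy--Schwarz estimates underlying \eqref{g:291} and \eqref{g:30}. The only mildly delicate step is tracking signs in the expansion of $\dot\cR$ so that the $R\,g^{ij}g_{ij}$ contribution produces exactly the $+n\|Ric\|_g^2$ term needed to combine with the others; everything else is routine.
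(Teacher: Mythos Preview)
Your proposal is correct and follows essentially the same route as the paper's own proof: both compute $\dot V$ and $\dot\cR$ directly via the variational formulas \eqref{g:16}, arrive at the same expressions $\dot V=\cR^{-1}(nV\|Ric\|_g^2-\cR^2)$ and $\dot\cR=n\|Ric\|_g^2-\|R\|_g^2$, and then invoke \eqref{g:30} and \eqref{g:291} respectively. Your remark on strictness via Corollary \ref{app:corr} and the contracted Bianchi identity is a small elaboration the paper omits, but the argument is otherwise the same.
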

\begin{proof}
We calculate
\begin{align*}
\frac{\rd}{\rd t} V(g(t))&=\int g(t)^{ij}\left(-Ric(g(t))_{ij}
+\frac{||Ric(g(t))||_{g(t)}^2}{\cR(g(t))}g(t)_{ij}\right)\rd vol(g(t))\\ 
&=-\frac{1}{\cR(g(t))}\left( \cR(g(t))^2-nV(g(t)) ||Ric(g(t))||_{g(t)}^2\right)
\end{align*}
and the first claim follows from \eqref{g:30}.
The second claim follows from
\begin{align*}
\frac{\rd}{\rd t} \cR(g(t))&=\langle \dot{g}(t),
-\left(Ric(g(t))-\frac{1}{2}R(g(t))g(t)\right)\rangle_{g(t)}\\\nonumber
&=\langle -2 Ric(g(t))+2\frac{||Ric(g(t))||_{g(t)}^2}{\cR(g(t))}g(t),-\left(Ric(g(t))
-\frac{1}{2}R(g(t))g(t)\right)\rangle_{g(t)}\\\nonumber
&=-||R(g(t))||^2_{g(t)}+n||Ric(g(t))||_{g(t)}^2
\end{align*}
and \eqref{g:291}.
\end{proof}

 \section{Proofs of the relations \eqref{einsteincurv} -- \eqref{einsteinrvol}.}\label{app:1}

In $\partial \sigma^{n+1}$ any $(n-2)$-simplex is the face of 
$3\:\:n$-simplexes. 
So in units of $2\pi$, by \eqref{dihedralangle} the deficit angle at any $(n-2)-$simplex is given by 
\begin{equation*}
 \delta(n)=\left(1-\frac{3}{2\pi}\arccos\left(\frac{1}{n}\right)\right).
\end{equation*}
Now $\delta(n)$
is a monotonically decreasing function of $n$ with limiting value $1/4$
as $n\rightarrow\infty$. Its value for $n=2$ is $1/2$.
Thus $\delta(n)$ is strictly positive.
Also there are a total of 
\begin{equation*}
 \left(\begin{matrix}
 n+2\\n-1
\end{matrix}\right)
\end{equation*}
$(n-2)-$simplexes in $\partial \sigma^{n+1}$. Collecting terms and using \eqref{Voleinstein} for the 
volume of an equilateral $(n-2)-$simplex gives the total scalar curvature \eqref{einsteincurv}. 
Because $\delta(n)$ is strictly positive so is the total scalar curvature.
\eqref{einsteinric} then follows by using Euler's relation and the fact that $Ein_{\sigma^1}$ is independent of 
$\sigma^1$. Since there are 
$$
\left(\begin{matrix}
 n-1\\2
\end{matrix}\right)
$$
$1-$simplexes in an $(n-2)-$simplex, again by Euler's relation 
$$
\partial^{\sigma^1}|\sigma^{n-2}|(\underline{a})=\frac{1}{a \left(\begin{matrix}
 n-1\\2
\end{matrix}\right)}\frac{n-2}{2}|\sigma^{n-2}|(\underline{a})=
\frac{1}{a(n-1)}|\sigma^{n-2}|(\underline{a})
$$
when $\sigma^1\in\sigma^{n-2}$ and zero otherwise. In particular 
$
\partial^{\sigma^1}|\rho^1|(\underline{a})=
\delta^{\sigma^1\,\rho^1}\cdot 1/2\sqrt{a}
$
as it should.
Using \eqref{Voleinstein} gives \eqref{partialvol}.



\section{Proof of Lemma \ref{lem:3}.}\label{app:2}

By iteration the relation \eqref{dervol} implies that each volume $|\sigma^k(\underline{z})|$ is a
smooth function in $\underline{z}$.  Thus it suffices to show that each dihedral angle $(\sigma^{n-2},\sigma^n)$
is also smooth in $\underline{z}$.  
As in the proof of Theorem \eqref{theo:5} $v_1,\cdots,v_n$ denotes an ordered basis in $E^n$. It defines a 
euclidean $n-$simplex $\sigma^n$, the convex hull the origin and the $v_1,\cdots,v_n$, which thus are the 
vertices. 
The edge lengths squared are 
the $||v_i||^2$ and the $||v_i-v_k||^2,\;k<i$. By the simple polarization formula
\begin{equation}\label{a2:1}
 \langle v_i,v_k\rangle=\frac{1}{2}\left( ||v_i||^2+||v_k||^2 -\langle v_i-v_k,v_i-v_k\rangle\right)
\end{equation}
all these scalar products are expressible in terms of the edge lengths squared.
Let $\Lambda^l(E^n)$ denote the $l$-th exterior power of $E^n$. The inner product on this space is given by
\begin{equation}\label{a2:2}
 \langle x_1\wedge \cdots \wedge x_l,y_1\wedge\cdots \wedge y_l\rangle =\det \langle x_i,y_k\rangle.
\end{equation}
In particular the volume of the euclidean simplex $\sigma^n$ equals
$$
|\sigma^n|=\frac{1}{n!}||v_1\wedge\cdots\wedge v_n||.
$$
Set 
\begin{equation*}
 0\neq w_i=(-1)^i v_1\wedge\cdots\wedge \widehat{v}_i\wedge v_l\quad \in\quad \Lambda^{n-1}(E^n).
\end{equation*}
By \eqref{a2:1} and \eqref{a2:2} the $\langle w_i,w_k\rangle$ are polynomials in the edge lengths squared.
This has the following consequence. Let $\Theta_{ij}$ be the angle (normalized to $2\pi$) of the two 
hyperplanes determined by $w_i$ and $w_k$. Then 
\begin{equation*}
 \Theta_{ik}=1-\frac{1}{\pi}\arccos\frac{\langle w_i,w_k\rangle}{||w_i||\,||w_k||}.
\end{equation*}
In fact $\Theta_{ij}$ is the dihedral angle at the $(n-2)$-simplex, which is the convex hull of the origin and the 
$v_1,\cdots,\cdots, \widehat{v}_k,\cdots,\widehat{v}_i,\cdots, v_n$. In particular we conclude that 
$\Theta_{ij}$ is smooth in the edge lengths squared. The smoothness in $\underline{z}$ of the 
dihedral angles at the remaining $(n-2)$-simplexes -- each of them is the convex hull of the 
$v_1,\cdots,\widehat{v}_i,\cdots, v_n$ for a suitable $i$ -- may be established similarly. 
This concludes the proof of Lemma \ref{lem:3}.


\section{Proof of Relation \eqref{M1} and of Lemma \ref{lem:knauf} }\label{app:3}
We start with the proof of the lemma.

As for the proof of \eqref{M1} we start with some observations on the combinatorial structure of 
$\partial\sigma^4$, which has
five vertices, ten $1$-simplexes, ten $2$-simplexes and five $3$-simplexes.

Given two $1$-simplexes $\sigma^1$ and $\tau^1$ in $\partial \sigma^4$, 
we will distinguish three cases concerning the $3$-simplexes they are contained in.
\label{comb}
\begin{enumerate}
 \item{If $\sigma^1=\tau^1$, then both are contained in exactly three $3$-simplexes}
 \item{If $\sigma^1$ and $\tau^1$ have exactly one vertex in common, then both are contained in 
 exactly two $3$-simplexes.}
 \item{If $\sigma^1$ and $\tau^1$ have no vertex in common, then both are contained in exactly one $3$-simplex.}
\end{enumerate}

Also, if $\sigma^1\in\sigma^3$, then there is exactly one $1$-simplex, denoted by 
$\bar{\sigma}^1\in \sigma^3$, such that $\sigma^1\in\sigma^3$ and $\overline{\sigma}^1\in \sigma^3$ 
have no vertex in common. Finally any $1$-simplex is contained in exactly three $3$-simplexes.
Also for given $1$-simplex $\sigma^1$ there are six different $1$-simplexes, which have exactly one vertex
in common with $\sigma^1$ and three $1$-simplexes, which have no vertex in common with $\sigma^1$.
This agrees of course with the fact, that altogether there are ten $1$-simplexes in $\partial\sigma^4$.

With these preparations and taking the symmetry of $\partial \sigma^4$ into account 
it suffices to calculate
\begin{equation*}
\partial^{\rho^1}(\sigma^1,\sigma^3)(\underline{a}).
\end{equation*}
We remark, that there is formula, which expresses the dihedral angles at any euclidean tetrahedron in terms of 
its edge length, see \cite{Lee}, Proposition 3.1. However, we will follow a different approach.
Of course if $\sigma^1\nsubseteq \tau^3$ then this expression vanishes. So it suffices to consider 
a single $3-$simplex.
Set 
\begin{align*}
\partial^{\rho^1}(\sigma^1,\sigma^3)(\underline{a})&=\alpha,\\\nonumber
\partial^{\rho^1}(\sigma^1,\sigma^3)(\underline{a})&=\beta,\quad \sigma^1\neq \rho^1,
\sigma^1\cap \rho^1 \neq \emptyset,\\\nonumber
\partial^{\rho^1}(\sigma^1,\sigma^3)(\underline{a})&=\gamma,\quad \sigma^1\cap\rho^1=\emptyset,
\end{align*}
for $\sigma^1,\rho^1\in\sigma^3$.

In order to calculate $\alpha,\beta$ and $\gamma$, 
consider the euclidean 3-simplex of which five edges have length $\sqrt{a}$, while the 
remaining one has edge length $\sqrt{x}$ with $0\le x<3a$. 
Denote the vertices by $v_0,v_1,v_2, v_3$. The vertex
$v_0$ is located at the origin. The three other ones have the coordinates
\begin{align*}
 v_1&=\left(\sqrt{a},0,0\right)\\\nonumber
 v_2&=\left(\frac{1}{2}\sqrt{a}, \frac{1}{2}\sqrt{3a},0\right)\\\nonumber
 v_3&=\left(\frac{x}{2\sqrt{a}},\frac{x}{2\sqrt{3a}},\sqrt{x\left(1-\frac{x}{3a}\right)}\;\right).
\end{align*}
We calculate the outward unit normal vectors to the four faces.
They are 
\begin{align*}
 n_1&=-\frac{v_1\times v_2}{||v_1\times v_2||},\quad
 n_2=\quad\frac{v_1\times v_3}{||v_1\times v_3||},\\\nonumber
 n_3&=-\frac{v_2\times v_3}{||v_1\times v_3||},\quad
 n_4=-\frac{(v_3-v_1)\times (v_2-v_1)}{||(v_3-v_1)\times(v_2-v_1)||},
\end{align*}
with $\times$ denoting the vector product.
A straight forward calculation gives 
 \begin{align*}
\langle n_1,n_2\rangle(x)&=
\langle n_1,n_3\rangle(x)=\langle n_2,n_4\rangle=\langle n_3,n_4\rangle(x)\\\nonumber
&=-\frac{x}{2\sqrt{3}}\frac{1}{\sqrt{ax-\frac{x^2}{4}}}, \\\nonumber
\langle n_1,n_4\rangle(x)&=\frac{2}{a\sqrt{3}}
\left(\frac{x}{\sqrt{3}}-\frac{a}{2}\sqrt{3}\right)\\\nonumber
\langle n_2,n_3\rangle(x)&= -\frac{1}{2\left(ax-\frac{x^2}{4}\right)}  
 \left(ax-\frac{x^2}{2}\right).
\end{align*}
The equality of $\langle n_1,n_2\rangle,\langle n_1,n_3\rangle,\langle n_2,n_4\rangle$ and 
$\langle n_3,n_4\rangle$ follows also from symmetry considerations. 
In agreement with \eqref{dihedralangle} the relations 
$\langle n_i,n_j\rangle(x=a)=-1/3, i\neq j$ hold.
Consider the function $y=\pi-\arccos f(x)=\arccos(-f(x))$ whose derivative is given as
\begin{equation}\label{app62}
\frac{\rd y}{\rd x}=\frac{1}{\sqrt{1-f(x)^2}}\frac{\rd f(x)}{\rd x},
\end{equation}
as long as $-1\le f(x)\le 0$ and correspondingly $0\le y\le \pi$. 
In what follows $f$ will be one of the three quantities
$\langle n_2,n_4\rangle,\langle n_2,n_3\rangle$ and $\langle n_1,n_3\rangle$. In particular 
$f(a)=-1/3$ such that $\sqrt{1-f(a)^2}=2^{3/2}/3$. Also $y$ will be one of the six 
dihedral angles. Indeed, an easy argument shows that the angle between two normals and the 
corresponding dihedral angle sum up to $\pi$. Therefore we obtain for the derivatives of 
the scalar products of the normals and thus for the derivatives of the dihedral angles the expressions
\eqref{app62}
\begin{align*}
 \alpha&=\frac{1}{2\pi}\left(\frac{3}{2\sqrt{2}}\right)
 \frac{\rd}{\rd x}\langle n_2,n_3 \rangle(x=a)\\\nonumber
 \beta&=\frac{1}{2\pi}\left(\frac{3}{2\sqrt{2}}\right)
 \frac{\rd}{\rd x}\langle n_1,n_2 \rangle(x=a)\\\nonumber
 \gamma&=\frac{1}{2\pi}\left(\frac{3}{2\sqrt{2}}\right)
 \frac{\rd}{\rd x}\langle n_1,n_4\rangle(x=a)
\end{align*}
in units of $2\pi$. A short calculation gives
\begin{equation*}
\alpha=\frac{1}{2\pi a 3\sqrt{2}},\quad \beta=-\frac{1}{2\pi a 3\sqrt{2}}=-\alpha,
\quad\gamma=\frac{1}{2\pi a\sqrt{2}}=3\alpha
\end{equation*}
and the claim \eqref{M0} follows.
We turn to a proof of Lemma \ref{lem:knauf}.
Give the five vertices of $\partial \sigma^4$ the labels $0,\cdots, 4$ and accordingly write 
the ten $1-$ simplexes ordered in terms of the two vertices in their boundary as 
\begin{align}\label{simplord0}
\sigma_1^1&=\sigma^1_{01},\sigma^1_2=\sigma^1_{02},\sigma^1_3=\sigma^1_{03}, 
\sigma^1_4=\sigma^1_{04},\sigma^1_5=\sigma^1_{12},\\\nonumber
\sigma^1_6&=\sigma^1_{13},
\sigma^1_7=\sigma^1_{14},\sigma^1_8=\sigma^1_{23},\sigma^1_9=\sigma^1_{24},\sigma^1_{10}=\sigma^1_{34}.
\end{align}
With this ordering of the $1$-simplexes the matrices $N_1$ and $N_2$ take the form
\begin{equation*}
N_1=\left(
\begin{smallmatrix}
 0 & 0 & 0 & 0 & 0 & 0 & 0 & 1 & 1 & 1 \\
 0 & 0 & 0 & 0 & 0 & 1 & 1 & 0 & 0 & 1 \\
 0 & 0 & 0 & 0 & 1 & 0 & 1 & 0 & 1 & 0 \\
 0 & 0 & 0 & 0 & 1 & 1 & 0 & 1 & 0 & 0 \\
 0 & 0 & 1 & 1 & 0 & 0 & 0 & 0 & 0 & 1 \\
 0 & 1 & 0 & 1 & 0 & 0 & 0 & 0 & 1 & 0 \\
 0 & 1 & 1 & 0 & 0 & 0 & 0 & 1 & 0 & 0 \\
 1 & 0 & 0 & 1 & 0 & 0 & 1 & 0 & 0 & 0 \\
 1 & 0 & 1 & 0 & 0 & 1 & 0 & 0 & 0 & 0 \\
 1 & 1 & 0 & 0 & 1 & 0 & 0 & 0 & 0 & 0 \\
\end{smallmatrix}\right)
\ \mbox{ and } \ 
N_2 =\left(
\begin{smallmatrix}
 0 & 1 & 1 & 1 & 1 & 1 & 1 & 0 & 0 & 0 \\
 1 & 0 & 1 & 1 & 1 & 0 & 0 & 1 & 1 & 0 \\
 1 & 1 & 0 & 1 & 0 & 1 & 0 & 1 & 0 & 1 \\
 1 & 1 & 1 & 0 & 0 & 0 & 1 & 0 & 1 & 1 \\
 1 & 1 & 0 & 0 & 0 & 1 & 1 & 1 & 1 & 0 \\
 1 & 0 & 1 & 0 & 1 & 0 & 1 & 1 & 0 & 1 \\
 1 & 0 & 0 & 1 & 1 & 1 & 0 & 0 & 1 & 1 \\
 0 & 1 & 1 & 0 & 1 & 1 & 0 & 0 & 1 & 1 \\
 0 & 1 & 0 & 1 & 1 & 0 & 1 & 1 & 0 & 1 \\
 0 & 0 & 1 & 1 & 0 & 1 & 1 & 1 & 1 & 0 \\
\end{smallmatrix}
\right).
\end{equation*}
Also 
\begin{equation*}
 P(\underline{a})=\frac{1}{10}\left(
\begin{smallmatrix}
 1 & 1 & 1 & 1 & 1 & 1 & 1 & 1 & 1 & 1 \\
 1 & 1 & 1 & 1 & 1 & 1 & 1 & 1 & 1 & 1 \\
 1 & 1 & 1 & 1 & 1 & 1 & 1 & 1 & 1 & 1 \\
 1 & 1 & 1 & 1 & 1 & 1 & 1 & 1 & 1 & 1 \\
 1 & 1 & 1 & 1 & 1 & 1 & 1 & 1 & 1 & 1 \\
 1 & 1 & 1 & 1 & 1 & 1 & 1 & 1 & 1 & 1 \\
 1 & 1 & 1 & 1 & 1 & 1 & 1 & 1 & 1 & 1 \\
 1 & 1 & 1 & 1 & 1 & 1 & 1 & 1 & 1 & 1 \\
 1 & 1 & 1 & 1 & 1 & 1 & 1 & 1 & 1 & 1 \\
 1 & 1 & 1 & 1 & 1 & 1 & 1 & 1 & 1 & 1\\
\end{smallmatrix}
\right).                        
\end{equation*}
Lemma \ref{lem:knauf} follows by an easy calculation.

By \eqref{M1} and the definition \eqref{var14} of $\widehat{M}$ 
\begin{equation}\label{simplord1}
\widehat{M} = 3\,\1 + 3N_1 - 2N_2 = -5H_4 + 10H_5
\end{equation}
follows. Similarly 
\begin{equation}\label{hatMatrixV}
\widehat{M}_{V,3} = 6\,\1 + 3N_1 - 2N_2 = 3H_1 - 2H_4 + 13H_5
\end{equation}
and
\begin{equation}\label{hatMatrixV3}
\widehat{M}_{V,4} =  3\,\1 + N_1 + 2N_2 = 18H_1 + 3H_4.
\end{equation}

\section{Proof of relations \ref{MV2} and \ref{MV4}. }\label{app:Dervol}

Recall that in the proof of \eqref{M1} use was made of the symmetry of the boundary 
$\partial\sigma^{3+1}$ of the simplex $\sigma^{3+1}$. This applies 
here too for $M_V$, so apart from combinatorial counting the main 
calculation to be done is to determine the partial derivatives up to order two of the volume of a single 
$3$-simplex at its equilateral value, that 
is $\partial^{\sigma^1}\partial^{\tau^1}|\sigma^3|(\underline{a})$. 
Label the four vertices of 
$\sigma^3$ as 
$0,1,2,3$. Correspondingly write the six $1$-simplexes as $\{01\},\{02\},\{0,3\},\{12\},\{13\},\{23\}$ and 
the six lengths squares as $z_{01},z_{02},z_{03},z_{12},z_{13},z_{23}$.
Consider the symmetric $3\times 3$ matrix $A(\underline{z})$, see \eqref{amatrix},
\begin{equation*}
 A(\underline{z})=\left(\begin{array}{ccc}
 z_{01}&\frac{1}{2}(z_{01}+z_{02}-z_{12})&\frac{1}{2}(z_{13}-z_{01}-z_{03})\\
 &&\\
 \frac{1}{2}(z_{12}-z_{01}-z_{02})&z_{02}&\frac{1}{2}(z_{23}-z_{02}-z_{03})\\
 &&\\
 \frac{1}{2}(z_{13}-z_{01}-z_{03})&\frac{1}{2}(z_{23}-z_{02}-z_{03})&z_{03}
 \end{array}
 \right)
\end{equation*}
The volume $|\sigma^3|(\underline{z})$ can be obtained from $A(\underline{z})$, see \eqref{amatrixvol}, in this case
\begin{equation*}
 |\sigma^3|(\underline{z})=\frac{1}{6}\sqrt{\det A(\underline{z})}
\end{equation*}
giving 
\begin{equation*}
 \partial^{\rho^1}\partial^{\tau^1}|\sigma^3|(\underline{z})=
 \frac{1}{12}\frac{\partial^{\rho^1}\partial^{\tau^1}\det A(\underline{z})}{\det A(\underline{z})^{1/2}}
 -\frac{1}{24}\frac{\partial^{\rho^1}\det A(\underline{z})\partial^{\tau^1}\det A(\underline{z})}
 {\det A(\underline{z})^{3/2}}.
\end{equation*}
So it suffices to calculate the 
partial derivatives of $\det A(\underline{z})$ up to order two. 
An easy computer supported calculation gives 
\begin{align*}
\det A(\underline{z})&=-\frac{1}{4}z_{01}^2 z_{23}
-\frac{1}{4}z_{02}^2 z_{13}-\frac{1}{4}z_{03}^2 z_{12}
-\frac{1}{4}z_{12}^2 z_{03}-\frac{1}{4}z_{13}^2 z_{02}-\frac{1}{4}z_{23}^2 z_{01}\\\nonumber
&\quad -\frac{1}{4}z_{01}z_{02}z_{12}-\frac{1}{4}z_{01}z_{03}z_{13}
+\frac{1}{4}z_{01}z_{02}z_{23}\\\nonumber
&\quad+\frac{1}{4}z_{01}z_{03}z_{12}+\frac{1}{4}z_{01}z_{02}z_{13}
+\frac{1}{4}z_{01}z_{13}z_{23}\\\nonumber
&\quad+\frac{1}{4}z_{01}z_{03}z_{23}+\frac{1}{4}z_{01}z_{12}z_{23}
+\frac{1}{4}z_{02}z_{03}z_{13}\\\nonumber
&\quad+\frac{1}{4}z_{02}z_{03}z_{12}+\frac{1}{4}z_{02}z_{12}z_{13}
-\frac{1}{4}z_{02}z_{03}z_{23}
+\frac{1}{4}z_{02}z_{13}z_{23}\\\nonumber
&\quad +\frac{1}{4}z_{03}z_{12}z_{13}+\frac{1}{4}z_{03}z_{12}z_{23}
-\frac{1}{4}z_{12}z_{13}z_{23}.
\end{align*}
This gives $\det A(\underline{a})=a^3/2$ and the volume as 
$|\sigma^3|(\underline{a})=a^{3/2}/6\sqrt{2}$
in agreement with the general formula \eqref{Voleinstein}. 
As another consequence 
\begin{equation*}
\partial^{\sigma^1}\det A(\underline{a})=\frac{1}{4}a^2,
\quad\mbox{for all} \quad\sigma^1.
\end{equation*}
Next come second order partial derivatives
\begin{align*}
\partial^{\sigma^1}\partial^{\sigma^1}\det A(\underline{a})&=-\frac{1}{2}a,\quad \mbox{for all}\quad \sigma^1
\\\nonumber
\partial^{\{01\}}\partial^{\{23\}}\det A(\underline{a})&=
\partial^{\{02\}}\partial^{\{13\}}\det A(\underline{a})= 
\partial^{\{03\}}\partial^{\{12\}}\det A(\underline{a})=-\frac{3}{4}a,\\\nonumber
\partial^{\sigma^1}\partial^{\tau^1}\det A(\underline{a})&=\frac{1}{4}a,\quad\mbox{for all other}\quad \sigma^1,\tau^1,
\end{align*}
again with equalities as required by symmetry.
Combining this result with \eqref{MV1} and the combinatorial structure of $\partial\sigma^4$ - as discussed 
at the beginning of Appendix \ref{app:3} - the claims \eqref{MV2} and \eqref{MV4} follow by a short calculation.
\end{appendix}


\end{document}